\newcommand{\comment}[1]{}
\newcommand{\TRUE}{{\tt TRUE}\xspace}
\newcommand{\FALSE}{{\tt FALSE}\xspace}
\newcommand{\sv}{V}
\newenvironment{proof}{\noindent {\bf Proof:}~}{\hspace*{\fill}\(\Box\)}
\newtheorem{theorem}{Theorem}
\newtheorem{claim}{Claim}
\newtheorem{definition}{Definition}
\newtheorem{lemma}{Lemma}
\def\noflash#1{\setbox0=\hbox{#1}\hbox to 1\wd0{\hfill}}
\newcommand{\vectorv}{{\normalfont\textbf{v}}}
\newcommand{\matrixm}{\textbf{M}}
\newcommand{\D}{{\textbf d_H}}
\newcommand{\R}{{\mbox{\it Verified\,}}}
\newcommand{\HH}{{\mathcal H}}
\newcommand{\VV}{{\mathcal V}}
\newcommand{\II}{{\mathcal I}}
\newcommand{\SVRecv}{{\tt SVRecv}}
\newcommand{\RBRecv}{{\tt RBRecv}}
\newcommand{\RBSend}{{\tt RBSend}}
\newcommand{\Verify}{{\tt Verify\,}}
\newcommand{\Proceed}{{\tt Proceed\,}}
\newcommand{\Add}{{\tt Add\,}}
\newcommand{\bfA}{{\bf A}}
\newcommand{\bfM}{{\bf M}}
\newcommand{\bfv}{v}
\begin{document}
\title{Byzantine Convex Consensus: An Optimal Algorithm\footnote{\normalsize This research is supported in part by National Science Foundation award CNS 1059540. Any opinions, findings, and conclusions or recommendations expressed here are those of the authors and do not necessarily reflect the views of the funding agencies or the U.S. government.}}

\author{Lewis Tseng$^{1,3}$, and Nitin Vaidya$^{2,3}$\\~\\
 \normalsize $^1$ Department of Computer Science,\\
 \normalsize $^2$ Department of Electrical and Computer Engineering, 
 and\\ \normalsize $^3$ Coordinated Science Laboratory\\ \normalsize University of Illinois at Urbana-Champaign\\~\\ \normalsize Email: \{ltseng3, 
nhv\}@illinois.edu \\~\\ \normalsize Technical Report} 


\date{July 9, 2013}
\maketitle


\comment{
\noindent
{\bf
\begin{itemize}
\item This is a regular paper submission.
\item Lewis Tseng is a full time Ph.D student at the University of Illinois\\at Urbana-Champaign.
\item This paper is eligible for the best student paper award.
\end{itemize}
}
}

\begin{abstract}
{\normalsize

Much of the past work on asynchronous approximate Byzantine consensus has
assumed {\em scalar} inputs at the nodes \cite{AA_Dolev_1986, AA_nancy}.
Recent work has yielded approximate Byzantine consensus algorithms
for the case when the input at each node is a $d$-dimensional vector,
and the nodes must reach consensus on
a vector in the convex hull
of the input vectors at the fault-free nodes
\cite{herlihy_multi-dimension_AA,Vaidya_BVC}.
The $d$-dimensional vectors can be equivalently viewed as {\em points}
in the $d$-dimensional Euclidean space.
Thus, the algorithms in \cite{herlihy_multi-dimension_AA, Vaidya_BVC}
require the fault-free nodes to decide on a point
in the $d$-dimensional space.

~

In our recent work \cite{Tseng_BCC}, we proposed a generalization of the consensus problem, namely {\em Byzantine convex consensus} (BCC), which allows the decision to be
a {\em convex polytope} in the $d$-dimensional space, such that the decided
polytope is within the convex hull of the input vectors at the fault-free nodes.
We also presented an asynchronous approximate BCC algorithm. 

~

In this paper, we propose a new BCC algorithm with optimal fault-tolerance that also agrees on a convex polytope that is as {\em large} as possible under adversarial conditions. Our prior work \cite{Tseng_BCC} does not guarantee the optimality of the output polytope.

}
\end{abstract}

\thispagestyle{empty}
\newpage
\setcounter{page}{1}

\section{Introduction}
\label{s_intro}

Much of the past work on asynchronous approximate Byzantine consensus has
assumed {\em scalar} inputs at the nodes \cite{AA_Dolev_1986, AA_nancy}.
Recent work has yielded approximate Byzantine consensus algorithms
for the case when the input at each node is a $d$-dimensional vector,
and the nodes must reach consensus on
a vector in the convex hull
of the input vectors at the fault-free nodes
\cite{herlihy_multi-dimension_AA,Vaidya_BVC}.
The $d$-dimensional vectors can be equivalently viewed as {\em points}
in the $d$-dimensional Euclidean space.
Thus, the algorithms in \cite{herlihy_multi-dimension_AA, Vaidya_BVC}
require the fault-free nodes to decide on a point
in the $d$-dimensional space.
In our recent work \cite{Tseng_BCC}, we considered a generalized problem, namely {\em Byzantine convex consensus} (BCC), which allows the decision to be a {\em convex polytope} in the $d$-dimensional space, such that the decided polytope is within the convex hull of the input vectors at the fault-free nodes. In this paper, we propose an asynchronous BCC algorithm with optimal fault-tolerance that reaches consensus on the convex polytope with an {\em optimal} output polytope (as defined later).
This is an improvement over our previous algorithm in \cite{Tseng_BCC} that does not guarantee optimality of the output polytope.

The system under consideration is
an {\em asynchronous} system consisting of $n$ nodes,
of which at most $f$ may be Byzantine faulty. The Byzantine faulty nodes may behave
in an arbitrary fashion,
and may collude with each other. Each node $i$ has a $d$-dimensional vector of reals as its {\em input} $x_i$.
All nodes can communicate with each other directly on reliable and FIFO (first-in first-out)
channels.  Thus, the underlying communication graph can be modeled as a {\em complete graph}, 
with the set of nodes being $V=\{1,2,\cdots, n\}$.
The impossibility of {\em exact} consensus in asynchronous systems \cite{FLP_one_crash}
applies to BCC as well. Therefore, we consider 
the {\em Approximate BCC} problem with the following requirements:

\begin{itemize}

\item \textbf{Validity}: The {\em output} (or {\em decision}) at each fault-free node must be
a convex polytope in the convex hull of the $d$-dimensional input vectors at the fault-free nodes.
(In a degenerate case, the output polytope may simply be a single {\em point}.)

\item \textbf{$\epsilon$-Agreement}: For any $\epsilon > 0$, the {\em Hausdorff distance} (defined below) between the output polytopes at any two fault-free nodes must be at most $\epsilon$.

\item {\bf Termination:} Each fault-free node must terminate within a finite amount of time.


\end{itemize}
The motivation behind reaching consensus on a convex polytope is that
a solution to BCC is expected to also facilitate solutions to a large
range of consensus problems (e.g., Byzantine vector consensus \cite{herlihy_multi-dimension_AA,Vaidya_BVC}, or convex function optimization
over a convex hull of the inputs at fault-free nodes).
Future work will explore these potential applications.


\begin{definition}
\label{def:dist}
For two convex polytopes $h_1, h_2$, the {\em Hausdorff distance} is defined as \cite{Hausdorff}
\[
\D(h_1, h_2) ~~=~~ \max~~ \{~~ \max_{p_1 \in h_1}~\min_{p_2 \in h_2} d(p_1, p_2),~~~~ \max_{p_2 \in h_2}~\min_{p_1 \in h_1} d(p_1, p_2)~~\}
\]
\end{definition}
where $d(p,q)$ is the Euclidean distance between points $p$ and $q$. 

\paragraph{Optimality of the Output Polytope:}
The BCC algorithm proposed in this paper allows the nodes to agree on an output polytope that is ``optimal'' in the sense defined below.

\begin{definition}
\label{def:optSize}
Let $A$ be an algorithm that solves Byzantine convex consensus (BCC). Algorithm $A$ is said to reach consensus on an {\em optimal} convex polytope if for any BCC algorithm $B$, there exist a behavior of the faulty nodes and a message delay pattern such that, at each fault-free node, the output polytope obtained using algorithm $B$ is contained in the output polytope obtained using algorithm $A$.
\end{definition}

We show that the BCC algorithm proposed here allows the nodes to agree on a polytope that is guaranteed to contain a polytope that is named $I_Z$ in later analysis. $I_Z$ is a function of the inputs at some of the fault-free nodes. We show that, for any correct BCC algorithm, there exists an execution in which the fault-free nodes must agree on a polytope that is equal to or contained in $I_Z$. Thus, as per Definition \ref{def:optSize}, the output polytope chosen by our algorithm is optimal.


\paragraph{Lower Bound on $n$:}
As noted above,
\cite{herlihy_multi-dimension_AA, Vaidya_BVC} consider the problem of
reaching approximate Byzantine consensus on a
vector (or a point) in the convex hull of the $d$-dimensional
input vectors at the fault-free nodes, and show that
$n \geq (d+2)f+1$ is necessary.
\cite{herlihy_colorless_async} generalizes
the same lower bound to colorless tasks.
The lower bound proof in
\cite{herlihy_multi-dimension_AA, Vaidya_BVC} also implies that
$n \geq (d+2)f+1$ is necessary to ensure that BCC is solvable.
We do not reproduce the lower bound proof here, but
in the rest of the paper, we assume that $n\geq (d+2)f+1$,
and also that $n\geq 2$ (because consensus is trivial when $n=1$).

\section{Preliminaries}
\label{s_ops}

Some notations introduced throughout the paper are summarized in Appendix \ref{app_s_notations}.
In this section,
we introduce operations $\HH$, $H_l$, $H$,
and two communication primitives, {\em reliable broadcast} and {\em stable vector}, used later in the paper.
\begin{definition}
\label{def:hh}
Given a set of points $X$, $\HH(X)$ is defined
as the convex hull of the points in $X$.
\end{definition}
\begin{definition}
\label{def:linear_hull}
Suppose that $\nu$ convex polytopes $h_1, h_2,\cdots, h_\nu$, and $\nu$ constants $c_1, c_2, \cdots, c_\nu$
are given such that
(i) $0 \leq c_i \leq 1$ and $\sum_{i = 1}^\nu c_i = 1$,
and (ii) for $1\leq i\leq\nu$, if $c_i\neq 0$, then $h_i\neq\emptyset$.
Linear combination of these convex
polytopes, $H_l(h_1, h_2,\cdots, h_\nu;~c_1, c_2, \cdots, c_\nu)$,
 is defined as follows:
\begin{itemize}
\item Let $Q := \{ i ~|~ c_i\neq 0, ~ 1\leq i\leq \nu \}$.
\item 
$p \in H_l(h_1, h_2,\cdots, h_\nu;~c_1, c_2,\cdots, c_\nu)$ if and only if
\begin{equation}
\label{eq:linear_hull}
\text{for each~} i\in Q,
\text{~there exists~} p_i \in h_i, ~~\text{such that}~~p = \sum_{i \in Q} c_i p_i
\end{equation}
\end{itemize}
\end{definition}
Note that a convex polytope may possibly consist of a single point.
Because $h_i$'s above are all convex, $H_l(h_1, h_2,\cdots, h_\nu;~c_1, c_2,\cdots, c_\nu)$ is also a
convex polytope (proof included in Appendix \ref{a:linear_convex} for completeness).
The parameters for $H_l$ consist of two lists, a list of polytopes
$h_1,\cdots,h_\nu$, and a list of weights $c_1,\dots,c_\nu$.  With a slight abuse of notation, we 
will specify one or both of these lists as either a {\em row vector}
or a {\em multiset}, with the understanding that the {\em row vector} or
{\em multiset} here represent an ordered list of its elements.

~

Function $H$ below is called in our algorithm with parameters $(\VV,t)$ wherein
$t$ is a round index ($t\geq 0$) and $\VV$ is a set of tuples of the
form $(h,j,t-1)$, where $j$ is a node identifier; when $t=0$, $h$ is a set of received messages in the previous round, and when $t>0$, $h$
is a convex polytope.

\vspace*{2pt}


\noindent
\hrule
{\bf Function $H(\VV,t)$}, $t\geq 0$:
\begin{list}{}{}
\item {\bf If $t=0$:}
\begin{itemize}
\item For each tuple $(x,k)$,
where $x$ is a point and $k$ is a node identifier,\\
 define $N(x,k) := |\{\,l\, |\, (\II,l,-1) \in \VV~~\text{and}~~(x,k,-1) \in \II\}|$.
\item Define set $Y:= \{\,(x,k) \,|\,N(x,k) \geq f+1\}$.
\item Define multiset $X := \{\,x \,|\,(x,k)\in Y\}$.
Size of multiset $X$ is identical to the size of set $Y$.
In a multiset, same element may appear multiple times.
\item
${\tt temp}~:= ~ \cap_{\,C \subseteq X, |C| = |X| - f}~~\HH(C)$.\\
The intersection above is over the convex hulls of the subsets of $X$ of size $|X| - f$.
\item Return {\tt temp}.
\end{itemize}

\item {\bf If $t>0$:}
\begin{itemize}
\item Define multiset $X:=\{h~|~ (h,j,t-1)\in \VV\}$.
In our use of function $H$, each $h\in X$ is always non-empty.

\item ${\tt temp}~:=~ H_l(X; \frac{1}{|X|}, \cdots,\frac{1}{|X|})$.
~~Note that all the weights here are equal to $\frac{1}{|X|}$.
\item Return {\tt temp}. 
\end{itemize}
\end{list}
\hrule

~

\comment{++++++++++++++++++++++++++++++++++++++++++++++++++++++++++++++
Function $H^h$ below is called in our algorithm with parameter $\VV$, which is a set of tuples of the
form $(\II,j,0)$, where $\II$ itself is a set of tuples of the form $(x_k,k,PR)$,\footnote{$x_k$ is the input at node $k$ if $k$ is fault-free.}, $j$ and $k$ are node identifiers, and $PR$ is a round index.

\vspace*{2pt}

\noindent
\hrule
{\bf Function $H^h(\VV)$}
\begin{enumerate}
\item For each point $x$, define $N(x) := |\{\,l\, |\, (\II,l,0) \in \VV~~\text{and}~~(x,k,PR) \in \II_l\}|$

+++++++++ removed subscript $k$, $l$ +++++++++++++

\item $Y:= \{\,(x, k, -1) \,|\,N(x) \geq f+1\}$

+++++++++++ Nitin changed 0 to $-1$ above ++++++++ and changed X to Y +++++

\item Return $H(Y, 0)$

\end{enumerate}
\hrule
~ \\
+++++++++++++++++++++++++++++++++++++++++++++++++++++++++++++++}

\paragraph{Communication Primitives:}

As seen later, our algorithm proceeds in asynchronous rounds. We label
the {\em preliminary round} as round $-1$,
and the remaining rounds as rounds 0, 1, 2, etc.
For communication between the nodes,
we use the {\em reliable broadcast} primitive
\cite{abraham_04_3t+1_async} and {\em stable vector} primitive \cite{renaming_JACM, herlihy_colorless_async}, which are also used in other related work \cite{herlihy_multi-dimension_AA, Vaidya_BVC, herlihy_colorless_async}. Note that we adopt the version of stable vector presented in \cite{herlihy_colorless_async}.
In particular, in round $-1$ (preliminary round), we use {\em stable vector}
and {\em reliable broadcast} both, as explained below. In rounds 0 and larger, we only use {\em reliable broadcast}.

\paragraph{Round $t$, $t\geq 0$:}
In round $t$, $t\geq 0$,
each node performs reliable broadcast of one message using $\RBSend$. 
Each message sent using $\RBSend$ consists of a 3-tuple of the form $(v,i,t)$: here, $i$ denotes the sender node's identifier, $t$ is round index, and $v$ is message value (the value $v$ itself is often a tuple). The operation $\RBSend(v,i,t)$ is used by node $i$ to perform {\em reliable broadcast} of $(v,i,t)$  in round $t$.
Each such message may be eventually reliably received by a fault-free node. 
When message $(v,j,t)$ is {\em reliably received} by some node $i$,
the event $\RBRecv(v,j,t)$ is said to have occurred at node $i$
 (note that $i$ may possibly be equal to $j$).
The second element in a reliably received 3-tuple message,
namely $j$ above, is always identical to the identifier
of the node that performed the corresponding reliable broadcast.
An appropriate handler is executed on each such $\RBRecv$ event,
as described in the algorithm.

\paragraph{Round $-1$:}
In round $-1$,
each node $i$ performs reliable broadcast of message
$(x_i,i, -1)$  using $\RBSend(x_i,i,-1)$, where $x_i$ is the input vector at node $i$.
The {\em stable vector} primitive $\SVRecv(-1)$ is then invoked via a blocking
call.
$\SVRecv(-1)$ eventually returns at each fault-free node with a set containing
at least $(n-f)$ messages of the form $(\cdot,\,\cdot,-1)$.
These sets have the desirable
property that the sets returned to {\em all} the fault-free nodes
contain at least $(n-f)$ messages in common.
Messages sent by some of the nodes using $\RBSend$ in round $-1$ may not be included
in the set returned to a fault-free node by $\SVRecv(-1)$. Each such
message may be later delivered to the fault-free node via a $\RBRecv$
event. Thus, for round $-1$, at fault-free node $i$, the $\RBRecv$
events may occur only for messages that are not returned by $\SVRecv$. 
An appropriate handler is executed on each such $\RBRecv$ event,
as described in the algorithm below.

With a slight abuse of terminology, when we say that node $j$ reliably receives $(v,i,-1)$, we mean that either
 (i) $(v,i,-1)$ is included the set returned by $\SVRecv(-1)$ to node $j$, or
 (ii) event $\RBRecv(v,i,-1)$ occurs at node $j$ after $\SVRecv(-1)$
	had already returned. 

\paragraph{Properties of Communication Primitives:}
Each fault-free node performs one reliable broadcast ($\RBSend$) in each round of our algorithm.
{\em Reliable broadcast} and {\em stable vector} achieve the properties listed below,
as proved previously \cite{abraham_04_3t+1_async,herlihy_colorless_async}.
In the properties below, round index $r\geq -1$.
\begin{itemize}
\item {\bf Fault-Free Integrity:} If a fault-free node $i$ {\em never} reliably broadcasts $(v, i, r)$,
then no fault-free node ever reliably receives $(v,i,r)$.

\item {\bf Fault-Free Liveness:} If a fault-free node $i$ performs reliable broadcast of $(v, i, r)$, then each fault-free node eventually reliably receives $(v, i, r)$.

\item {\bf Global Uniqueness:} If two fault-free nodes $i, j$ reliably receive $(v, k, r)$ and $(w, k, r)$, respectively, then $v = w$, even if node $k$ is faulty.

\item {\bf Global Liveness:} For any two fault-free nodes $i, j$, if $i$ reliably receives $(v, k, r)$, then $j$ will eventually reliably receive $(v, k, r)$, even if node $k$ is faulty.

\item {\bf Fault-free Containment}: For fault-free nodes $i,j$, let $R_i, R_j$ be the set of messages returned to nodes $i, j$ by stable vector
primitive $\SVRecv(-1)$
in round $-1$, respectively. Then,
$|R_i|\geq n-f$, $|R_j|\geq n-f$, and either $R_i \subseteq R_j$ or $R_j \subseteq R_i$.
\end{itemize}
The last above property ensures that, in round $-1$, all the fault-free nodes
receive at least $(n-f)$ identical messages.
In addition to the above property, the following property is also ensured:
\begin{itemize}
\item Any fault-free node $i$, for any $t$ and $j$,
reliably receives (either via $\SVRecv$ or $\RBRecv$) at most one message of
the form $(*,j,t)$.
\end{itemize}
This property is implemented easily by requiring each node to,
after receiving the first message of the form $(*,j,t)$, to simply ignore
any further messages of that form.
In our algorithm, each fault-free node $j$ reliably broadcasts exactly
one message of the form $(*,j,t)$ in any round $t$. Thus, the above property
is useful to avoid responding to multiple messages with the same round
index from a
faulty node.

\section{Proposed Algorithm: Optimal Verified Averaging}
\label{s_alg}

The proposed algorithm (named {\em Optimal Verified Averaging}) proceeds in asynchronous rounds.
The input at each node $i$ is a $d$-dimensional vector of reals, denoted as $x_i$. The
initial round is called a {\em preliminary round}, and also referred to as round $-1$.
Subsequent rounds are named round 0, 1, 2, etc.
In each round $t\geq 0$, each node $i$ computes a state variable $h_i$, which represents a convex polytope in the $d$-dimensional Euclidean space. We will refer to the value of $h_i$ at the {\em end} of the $t$-th round performed by node $i$ as $h_i[t]$, $t\geq 0$. Thus, for $t\geq 1$, $h_i[t-1]$ is the value of $h_i$ at the {\em start} of the $t$-th round at node $i$.


Similar to the algorithm in our prior work \cite{Tseng_BCC}, we use a technique named {\em verification}
to ensures that 
if a faulty node deviates from the algorithm specification
(except possibly choosing an invalid input vector),
then its incorrect messages will be ignored by
the fault-free nodes.
The {\em verification} mechanism is motivated by prior work by other researchers
\cite{Welch_textbook}.
With {\em verification},
aside from choosing a bad input, a faulty node cannot cause any other damage
to the execution.
 
Before we present the proposed algorithm, we introduce a convention for the brevity of presentation:
\begin{itemize}
\item When we say that $(*,i,t) \in \VV$, we mean that {\em there exists $z$ such that $(z,i,t)\in\VV$}.

\item When we say that $(*,i,t) \not\in \VV$, we mean that {\em $\forall z$, $(z,i,t)\notin\VV$}.
\end{itemize}
The proposed {\em Optimal Verified Averaging} algorithm for node $i\in V$ is presented below.
All references to line numbers in our discussion refer to numbers listed
on the right side of the algorithm pseudo-code.
Recall that in round $t\geq 0$,
whenever a message is reliably received by any node, a handler is called
to process that message. In round $-1$,
messages that are not delivered by $\SVRecv(-1)$ may be later 
reliably received via a $\RBRecv$ event, invoking the corresponding
handler. Multiple such handlers 
may execute {\em concurrently} at a given node. For correct behavior, line 7,
and lines 11-16 in the algorithm 
are atomically executed in a {\em critical section}. Thus,
even though multiple event handlers may execute simultaneously,
execution of line 7 in one instance of the handler is not interleaved with execution of any other handler instance;
similarly, 
execution of lines 11-16 in one instance of the handler is not interleaved with execution of any other handler instance.

\begin{itemize}
\item {\bf Round $-1$}:
In round $-1$, each node $i$ uses $\RBSend$ to reliably broadcast $(x_i,i,-1)$ where $x_i$ is its input (line 1). Each node then calls the primitive $\SVRecv(-1)$, which eventually returns with a set of messages tagged with index $-1$. These messages are stored in $\R_i[-1]$ and $\R^c_i[-1]$ both (lines 2 and 3). At this point, $\R^c_i[-1] = \R_i[-1]$.
At line 4, node $i$ also sets $h_i[-1]$ to be equal to a default value $\emptyset$ (because
$h_i[-1]$ does not affect future computations). Afterwards, each node can proceed to round 0 (line 5).

Note that reliable broadcast of a message by some node $j$ may not be received by node $i$ using $\SVRecv$ at line 2; however, the
message may be later reliably received by node $i$ via a $\RBRecv$ event (line 6).
Line 7 specifies the behavior of the event handler for event $\RBRecv(x,j,-1)$ at node $i$. Whenever a message of the form $(x, j, -1)$ is reliably received
via event $\RBRecv(x,j,-1)$ (line 6), the set $\R_i[-1]$ is updated (line 7). Since line 7 is performed atomically, $\R_i[-1]$ may continue to grow even after node $i$ has proceeded
to round 0; however, $\R^c_i[-1]$ is not modified again. Note that a message received by node $i$ via $\SVRecv$ (at line 2) or $\RBRecv$ (at line 6) may possibly have been reliably
broadcast by node $i$ itself.

\item {\bf Round $t\geq 0$:}
In round $t\geq 0$, {\em Optimal Verified Averaging} adopts a similar structure to
round $-1$, with one key difference: stable vector ($\SVRecv$) is not used in these rounds, and all messages are received via $\RBRecv$ events.
In round $t\geq 0$, node $i$ first reliably broadcasts
message $((h_i[t-1],\R^c_i[t-1]),i,t)$ (line 8). 
Lines 9-16 specify the event handler for event $\RBRecv((h,\VV),j,t)$ at node $i$. 
Whenever a message of the form $((h,\VV),j,t)$ is reliably received from node $j$ (line 9),
node $i$ first waits until its own set $\R_i[t-1]$ becomes
large enough to contain $\VV$. Note that $\R_i[t-1]$ is initially computed
in the round $t-1$, but it may continue to grow even after node
$i$ proceeds to round $t$.
If the condition $\VV\subseteq\R_i[t-1]$ never becomes true, then this
message is not processed further.

Recall that lines 11-16 are performed atomically.
The message $((h,\VV),j,t)$ is considered to be
{\em verified} if Procedure $\Verify(h,\VV,j,t)$ returns \TRUE (line 11).
As shown in the pseudo-code for $\Verify$, the verification checks performed 
are different for $t=0$, $t=1$ and $t\geq 2$. 
If a message is thus verified, then some elements in the message are added to $\R_i[t]$ via Procedure $\Add(\cdot)$ (line 12). As shown in the pseudo-code for $\Add$, the elements added are different for $t=0$, and $t\geq 1$.

Procedure $\Proceed(t)$ at line 13 determines whether set $\R_i[t]$ has grown to
a point where it is appropriate to compute the new state $h_i[t]$ (line 15) and
proceed to round $t+1$ (line 16).
The checks performed in $\Proceed(t)$ are different for $t=0$ and $t\geq 1$,
as shown in the pseudo-code for $\Proceed$.
The value of $\R_i[t]$ used to compute $h_i[t]$ is stored in
$\R_i^c[t]$ (line 14).

New messages may still be added to $\R_i[t]$ if
events of the form $\RBRecv((h,\VV),j,t)$ occur
after node $i$ has proceeded to round $t+1$.
Thus, $\R_i[t]$ may continue to grow even after node $i$ has proceeded
to round 1; however, $\R^c_i[t]$ is not modified again, and remains unchanged
after it is set at line 14.

\end{itemize}

\vspace*{8pt}
\hrule

\vspace*{2pt}

\noindent {\bf Optimal Verified Averaging Algorithm:} Steps performed at node $i$ shown below.\\~\\
The algorithm terminates after $t_{end}$ rounds, where $t_{end}$ is a constant, defined in (\ref{e_end}).

\vspace*{4pt}

\hrule

\vspace*{8pt}

\noindent {\bf Initialization:} All sets used below are initialized to $\emptyset$.\\


\noindent {\bf Preliminary Round (Round $-1$) at node $i$:} 	

\begin{itemize}
\item $\RBSend(x_i,i,-1)$ 		\hfill 1

\item $\R_i[-1] := \SVRecv(-1)$ \hfill 2
\item $\R^c_i[-1] := \R_i[-1]$ \hfill 3

\item $h_i[-1] := \emptyset$  \hfill 4
\item Proceed to Round 0 \hfill 5

~

\indent {\em Comment}: Message sent by $j$ using $\RBSend$ may not be received by $i$ using $\SVRecv$ at line 2.\\
 \indent Due to Fault-free Liveness property of the primitive, this message will later be received \\
\indent by $i$ using $\RBRecv$ at line 6 below.\\

\item
{\bf Event handler for event $\RBRecv(x, j, -1)$ at node $i$} : \hfill 6\\

\centerline{\underline{\em Line 7 is performed atomically.}}

\begin{list}{}{}
\item[\hspace*{0.4in}$-$] $\R_i[-1] := \R_i[-1] \cup \{(x,j,-1)\}$ \hfill 7 \\

\end{list}

\end{itemize}

\noindent {\bf Round $t\geq 0$ at node $i$:} 

\begin{itemize}
\item $\RBSend( (h_i[t-1],\R^c_i[t-1]), i, 0)$ \hfill 8 \\

\item
{\bf Event handler for event $\RBRecv((h,\VV), j, t)$ at node $i$} : \hfill 9

\begin{itemize}

\item Wait until $\VV \subseteq \R_i[t-1]$ \hfill 10 \\

\centerline{\underline{\em Lines 11-16 are performed atomically.}}

\item If~~ $\Verify(h,\VV,j,t)$ returns \TRUE then \hfill 11

	\hspace*{1.6in} $\R_i[t] := \Add(\R_i[t], h, \VV, j, t)$ \hfill 12 \\

\item When $\Proceed(t)$ returns \TRUE \underline{for the first time} \hfill 13 \\
  \hspace*{1.6in} $\R^c_i[t] := \R_i[t]$ \hfill 14

	\hspace*{1.6in} $h_i[t] := H(\R^c_i[t],t)$ \hfill 15

	\hspace*{1.6in} Proceed to Round $t+1$  \hfill 16

\end{itemize}
\end{itemize}
\hrule

~

Procedure $\Verify(h,\VV,j,t)$ at node $i$:
\begin{itemize}
\item Case $t=0$: If $|\VV|\geq n-f$, then return \TRUE, else return \FALSE.
\item Case $t=1$: If $|\VV|\geq n-f$ and $h=H(\VV,0)$, then return \TRUE, else return \FALSE.
\item Case $t\geq 2$: If $|\VV|\geq n-f$ and $h=H(\VV,t-1)$ and $(*,j,t-2)\in\VV$,\\
 \hspace*{1in}then return \TRUE, else return \FALSE.
\end{itemize}

\hrule

~

Procedure $\Add(\R_i[t], h,\VV,j,t)$ at node $i$:
\begin{itemize}
\item Case $t=0$: return $\R_i[t] \cup \{(\VV,j,-1)\}$.
\item Case $t\geq 1$: return $\R_i[t] \cup \{(h,j,t-1)\}$.

\end{itemize}

\hrule

~

Procedure $\Proceed(t)$ at node $i$:
\begin{itemize}
\item Case $t=0$:
if $|\R_i[0]|\geq n-f$,\\
 \hspace*{1in}then return \TRUE,\\ \hspace*{1in}else return \FALSE. 
\item Case $t\geq 1$:
if $|\R_i[t]|\geq n-f$ and $(h_i[t-1],i,t-1)\in\R_i[t]$,\\
 \hspace*{1in}then return \TRUE,\\ \hspace*{1in}else return \FALSE. 
\end{itemize}
\hrule

\comment{++++++++++++++++++++++++++++++++++++++++++++++++++++++++++++++++++++++++
+++++

\begin{itemize}

\item
If $|\VV|\geq n-f$  \hfill 9 \\
then 

	\hspace*{1.6in} $\R_i[0] := \R_i[0] \cup \{(\VV,j,-1)\}$ \hfill 10 \\

\item
When $|\R_i[0]| \geq n-f$ \underline{for first time} \hfill 11

  \hspace*{1.6in} $\R^c_i[0] := \R_i[0]$ \hfill 12

	\hspace*{1.6in} $h_i[0] := H^h(\R^c_i[0])$ \hfill 13

	\hspace*{1.6in} Proceed to Round 1  \hfill 14

\end{itemize}
\end{itemize}

\noindent {\bf Round 1:} 

\begin{itemize}
\item $\RBSend((h_i[0], \R^c_i[0]), i, 1)$ \hfill 15

\item
{\bf Event handler for event $\RBRecv((h,\VV), j, 1)$ at node $i$} : \hfill 16

\begin{itemize}

\item Wait until $\VV \subseteq \R_i[0]$ \hfill 17 \\

\centerline{\underline{\em Lines 18-23 are performed atomically.}}

\item
If $|\VV| \geq n-f$ and $h = H^h(\VV)$ \hfill 18 \\ 
then 
	\hspace*{1.6in} $\R_i[1] := \R_i[1] \cup \{(h,j,0)\}$  \hfill 19 \\ 

\item
When $|\R_i[1]| \geq n-f$ \& $(h_i[0], i,0) \in \R_i[1]$ \underline{both true for first time} \hfill 20

	\hspace*{1.6in} $\R^c_i[1] :=\R_i[1]$ \hfill 21

	\hspace*{1.6in} $h_i[1] := H(\R^c_i[1],1)$ \hfill 22

	\hspace*{1.6in} Proceed to Round 2  \hfill 23

\end{itemize}
\end{itemize}

\noindent {\bf Round $t ~ (t \geq 2)$:} 

\begin{itemize}
\item $\RBSend((h_i[t-1], \R^c_i[t-1]), i, t)$ \hfill 24

\item
{\bf Event handler for event $\RBRecv((h,\VV), j, t)$ at node $i$} : \hfill 25

\begin{itemize}

\item Wait until $\VV \subseteq \R_i[t-1]$ \hfill 26 \\

\centerline{\underline{\em Lines 27-32 are performed atomically.}}

\item
If $|\VV|\geq n-f$ and $(*,j,t-2)\in \VV$ and $h = H(\VV,t-1)$  \hfill 27 \\
then 

	\hspace*{1.6in}$\R_i[t] := \R_i[t] \cup \{(h,j,t-1)\}$ \hfill 28 \\ 

\item
When $|\R_i[t]| \geq n-f$ \& $(h_i[t-1],i,t-1) \in \R_i[t]$ \underline{both true for first time} \hfill 29

	\hspace*{1.6in} $\R^c_i[t]:= \R_i[t]$ \hfill 30

	\hspace*{1.6in} $h_i[t] := H(\R^c_i[t],t)$ \hfill 31

	\hspace*{1.6in} Proceed to round $t+1$  \hfill 32

\end{itemize}
\end{itemize}

\hrule
++++++++++++++++++++++++++++++++++++++++++++++++++++++++++++++++++++++++++++++++}

\vspace*{8pt}

~

The algorithm terminates after $t_{end}$ rounds, where $t_{end}$ is a constant, defined in (\ref{e_end}). The state $\bfv_i[t_{end}]$ of each node $i$ is its output when the algorithm terminates after $t_{end}$ iterations.

\begin{definition}
\label{d_verified}
A node $k$'s execution of round $r$, $r\geq 0$, is said to be
\underline{\bf verified by a fault-free node $i$} if, eventually node $i$ reliably receives message
of the form $((h,\VV),k,r+1)$ from node $k$, and subsequently adds $(h,k,r)$ to $\R_i[r+1]$
(at line 12). 
Note that node $k$ may possibly be faulty.
Node $k$'s execution of round $r$ is said to be \underline{\bf verified} if it
is verified by at least one fault-free node.
\end{definition}
We now introduce some more notations (which are also summarized in Appendix \ref{app_s_notations}):
\begin{itemize}
\item For a {\em given} execution of the proposed algorithm, let $F$ denote the {\em actual} set of faulty nodes in the execution.
Let $|F|=\phi$. Thus, $0\leq \phi\leq f$.
\item For $r\geq 0$, let $F_v[r]$ denote the set of faulty nodes whose round $r$ execution is verified by at least one fault-free node, as per Definition \ref{d_verified}.
Note that $F_v[r]\subseteq F$.
\item Define $\overline{F_v}[r]=F-F_v[r]$, for $r\geq 0$.
\end{itemize}
For each faulty node $k\in F_v[r]$, by
	Definition \ref{d_verified}, there must exist
	a fault-free node $i$ that eventually reliably receives
	a message of the form $((h,\VV),k,r+1)$ from
	node $k$, and adds $(h,k,r)$ to $\R_i[r+1]$.
	Given these $h$ and $\VV$, for future reference, let us define 
	\begin{eqnarray}
	 h_k[r]&=&h \label{e_faulty_h} \\
	 \R^c_k[r]&=&\VV \label{e_faulty_V}
	\end{eqnarray}
	Node $i$ verifies node $k$'s round $r$ execution after node $i$
	has entered its round $r+1$.
	Since round $r$ execution of faulty node $k$ above is verified by fault-free
	node $i$, due to the checks performed in procedure $\Verify$, the equality below holds
	for $h_k[r]$ and $\R^c_k[r]$ defined in (\ref{e_faulty_h}) and (\ref{e_faulty_V}).

	\begin{eqnarray}
   h_k[r]& = & H(\R^c_k[r],r)~~~\text{for}~~r \geq 0 \label{e_faulty_H}
	\end{eqnarray}

(The proof of Claim \ref{c_ver} in Appendix \ref{a_claims} elaborates on the
above equality.)
	While the algorithm requires each node $k$ to maintain
	variables $h_k[r]$ and $\R^c_k[r]$, we cannot assume correct behavior on
	the part of the faulty nodes. However, from the perspective of
	each fault-free node that verifies the round $r$ execution of faulty node $k\in F_v[r]$,
	node $k$ behaves ``as if'' these local variable take the values specified
	in (\ref{e_faulty_h}) and (\ref{e_faulty_V}) that satisfy (\ref{e_faulty_H}). 
	Note that if the round $r$ execution (where $r\geq 0$) of a faulty node $k$ is verified by more than
	one fault-free node, due to the {\em Global Uniqueness} of reliable broadcast,
	all these fault-free nodes must have reliably received identical round $r+1$ messages
	from node $k$.  

Proofs of Lemmas \ref{l_progress}, \ref{lemma:J_in_H0} and \ref{lemma:always_notFv_not_verified}
below
are presented in Appendices \ref{a_lemma_progress}, \ref{a_lemma_J}, and
\ref{a_always}, respectively.
These lemmas are used to prove the correctness of the {\em Optimal Verified Averaging} algorithm.

\begin{lemma}
\label{l_progress}
Optimal Verified Averaging ensures {\em progress}: (i) all the fault-free nodes will eventually progress to round 0; and, (ii)
if all the fault-free nodes progress to the start of round $t$, $t \geq 0$, then all the
fault-free nodes will eventually progress to the start of round $t+1$.
\end{lemma}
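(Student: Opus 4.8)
The plan is to handle the two parts of the lemma separately. Part (i) is almost immediate: in round $-1$ each fault-free node $i$ performs the non-blocking $\RBSend(x_i,i,-1)$ (line 1) and then the blocking call $\SVRecv(-1)$ (line 2), which --- as stated in the preliminaries --- eventually returns at each fault-free node with a set of at least $n-f$ messages. Lines 3--5 are purely local, so node $i$ reaches round 0, and the same holds for every fault-free node.

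For part (ii), fix $t\ge 0$ and assume every fault-free node has reached the start of round $t$. I would exhibit the following chain of events at an arbitrary fault-free node $i$: (a) each fault-free node $j$ executes line 8, reliably broadcasting $((h_j[t-1],\R^c_j[t-1]),j,t)$; (b) by {\em Fault-Free Liveness}, node $i$ (including the case $i=j$) eventually reliably receives this message and runs the handler of lines 9--16; (c) the wait at line 10, requiring $\R^c_j[t-1]\subseteq\R_i[t-1]$, is eventually satisfied --- this is the crux, discussed below; when $j=i$ it is trivial, since $\R^c_i[t-1]$ is a snapshot of the only-growing set $\R_i[t-1]$; (d) $\Verify(h_j[t-1],\R^c_j[t-1],j,t)$ returns \TRUE, because $j$ follows the protocol: $|\R^c_j[t-1]|\ge n-f$ and, for $t\ge 1$, $h_j[t-1]=H(\R^c_j[t-1],t-1)$ hold by the conditions of $\Proceed(t-1)$ and of line 15 in round $t-1$ at node $j$, and for $t\ge 2$, $(*,j,t-2)\in\R^c_j[t-1]$ holds because $\Proceed(t-1)$ at node $j$ required $(h_j[t-2],j,t-2)\in\R_j[t-1]$ at the instant the snapshot was taken at line 14; by {\em Global Uniqueness} node $i$ receives the identical message, so its checks pass too; (e) hence $\Add$ at line 12 inserts into $\R_i[t]$ a tuple tagged with $j$, namely $(h_j[t-1],j,t-1)$ when $t\ge 1$ and $(\R^c_j[t-1],j,-1)$ when $t=0$. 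Ranging over the (at least $n-f$) fault-free nodes $j$, node $i$ eventually has $|\R_i[t]|\ge n-f$; taking $j=i$ additionally puts $(h_i[t-1],i,t-1)$ into $\R_i[t]$, which is the extra clause of $\Proceed(t)$ when $t\ge 1$. Therefore $\Proceed(t)$ eventually returns \TRUE inside some execution of the handler, so node $i$ runs lines 14--16 and reaches round $t+1$. I would also note that the wait at line 10 is outside the atomic block of lines 11--16, so a handler blocked at line 10 does not stall other handlers at node $i$, and no deadlock arises.

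The main obstacle is step (c): showing that the verification set $\R^c_j[t-1]$ broadcast by a fault-free node $j$ eventually lands inside every fault-free node $i$'s set $\R_i[t-1]$. I would prove this through an auxiliary claim, by induction on the round index $r\ge -1$: {\em if every fault-free node reaches the start of round $r+1$, then every tuple ever placed in $\R_j[r]$ by a fault-free node $j$ is eventually placed in $\R_i[r]$ by every fault-free node $i$.} The base case $r=-1$ is {\em Global Liveness}: any $(x,k,-1)\in\R_j[-1]$ was reliably received by $j$, hence is eventually reliably received by $i$ and added to $\R_i[-1]$ via $\SVRecv$ or line 7. For the inductive step, a tuple placed into $\R_j[r]$ at line 12 corresponds to a message $((h,\VV),k,r)$ that $j$ reliably received, for which $j$'s line 10 wait ($\VV\subseteq\R_j[r-1]$) held and $\Verify$ returned \TRUE; by {\em Global Liveness} and {\em Global Uniqueness} node $i$ eventually reliably receives the identical message and its $\Verify$ checks succeed, while node $i$'s line 10 wait ($\VV\subseteq\R_i[r-1]$) becomes satisfied by the induction hypothesis for round $r-1$ (every element of the finite set $\VV\subseteq\R_j[r-1]$ eventually reaches $\R_i[r-1]$), so $i$ adds the same tuple to $\R_i[r]$. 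The delicate point this claim absorbs is that $\R^c_j[t-1]$ may contain tuples originating from messages of {\em faulty} nodes; the argument is unaffected, because such a tuple still corresponds to a reliably received, verification-passing message whose content is pinned down by {\em Global Uniqueness}, and only the previous-round containment needed for its own line 10 wait appeals to the induction hypothesis. Applying the claim with $r=t-1$ --- whose hypothesis is exactly the assumption of part (ii), that all fault-free nodes reach round $t$ --- together with $\R^c_j[t-1]\subseteq\R_j[t-1]$ being finite, yields step (c) and completes the proof.
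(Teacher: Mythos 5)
Your proposal is correct and follows essentially the same route as the paper: part (i) from the guaranteed return of $\SVRecv(-1)$, and part (ii) by induction, with fault-free broadcasts passing \Verify and the line-10 wait discharged via a propagation claim. Your auxiliary claim (every tuple a fault-free node places in $\R_j[r]$ eventually appears in $\R_i[r]$ at every fault-free $i$) is precisely the paper's Claim~\ref{c_common_verify}, proved the same way—Global Liveness for round $-1$ and the verification checks plus the induction hypothesis for the step.
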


\begin{lemma}
\label{lemma:J_in_H0}
For each node $i\in V-\overline{F_v}[0]$, the polytope $h_i[0]$ is non-empty.
\end{lemma}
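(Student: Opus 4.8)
The plan is to show that in every case $h_i[0]$ equals $H(\R^c_i[0],0)$ for a set $\R^c_i[0]$ of triples whose structure forces the multiset $X$ built inside $H(\cdot,0)$ to have at least $(d+1)f+1$ elements, and then to invoke the centerpoint-type nonemptiness property of the operator $H(\cdot,0)$.

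First I would split the nodes of $V-\overline{F_v}[0]=(V-F)\cup F_v[0]$ into the fault-free ones and the verified faulty ones. If $i$ is fault-free, Lemma~\ref{l_progress} guarantees that $i$ reaches line~15 of round~$0$, where $h_i[0]:=H(\R^c_i[0],0)$; and line~15 runs only after $\Proceed(0)$ has returned \TRUE, so at line~14 we have $|\R^c_i[0]|=|\R_i[0]|\ge n-f$. If $i\in F_v[0]$, then by Definition~\ref{d_verified} some fault-free node $w$ reliably receives $((h,\VV),i,1)$ from $i$ and passes the $t=1$ branch of $\Verify$, so $|\VV|\ge n-f$ and $h=H(\VV,0)$; by the conventions (\ref{e_faulty_h})--(\ref{e_faulty_H}) this is exactly $h_i[0]=H(\R^c_i[0],0)$ with $|\R^c_i[0]|\ge n-f$. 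In both cases I would then follow the line-10 wait together with the $t=0$ branches of $\Verify$ and $\Add$ to record one structural fact: there is a fixed fault-free node $w$ ($w=i$ in the fault-free case, the verifier in the other) such that every triple of $\R^c_i[0]$ has the form $(\II,l,-1)$ with $\II\subseteq\R_w[-1]$ and $|\II|\ge n-f$, and, by the ``at most one message $(*,j,t)$'' property, the identifiers $l$ appearing in $\R^c_i[0]$ are pairwise distinct.

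Next I would unfold $H(\R^c_i[0],0)$ and argue that its multiset $X$ has $|X|\ge n-f$. Writing $\R^c_i[0]=\{(\II_l,l,-1)\}$, at least $(n-f)-\phi\ge n-2f$ of the identifiers $l$ are fault-free, and for each such $l$ the round-$0$ message of $l$ that produced $(\II_l,l,-1)$ must, by \emph{Fault-Free Integrity} applied to line~8, be the genuine message $((\emptyset,\R^c_l[-1]),l,0)$, so $\II_l=\R^c_l[-1]$ is a true stable-vector output. By \emph{Fault-free Containment} the stable-vector outputs of all fault-free nodes are nested, so there are at least $n-f$ round-$(-1)$ messages common to all of them; each such message $(x,k,-1)$ lies in $\II_l$ for every one of the $\ge n-2f$ fault-free $l$'s, so $N(x,k)\ge n-2f\ge f+1$ (here $d\ge1$, so $n\ge(d+2)f+1\ge 3f+1$), whence $(x,k)\in Y$. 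Since distinct triples give distinct pairs $(x,k)$, this yields $|Y|\ge n-f$, so $|X|=|Y|\ge n-f\ge(d+1)f+1$.

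The only genuinely geometric step is the nonemptiness of $H(\cdot,0)$ on such an $X$: if $X$ is a (nonempty) multiset of points in $\mathbb{R}^d$ with $|X|\ge(d+1)f+1$, then $\bigcap_{C\subseteq X,\,|C|=|X|-f}\HH(C)\neq\emptyset$. This is the centerpoint theorem --- a multiset of $m$ points in $\mathbb{R}^d$ has a point lying in $\HH(C)$ for every subset $C$ that omits fewer than $\lceil m/(d+1)\rceil$ of the points, and here $f<\lceil |X|/(d+1)\rceil$ --- and it is also exactly the fact needed to make $H(\cdot,0)$ well defined, so I would cite it from the prior work rather than reprove it. With $h_i[0]=H(\R^c_i[0],0)$ this gives $h_i[0]\neq\emptyset$. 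I expect the delicate part to be the bookkeeping of the first two paragraphs --- keeping straight which node's $\R_w[-1]$, $\R^c_l[-1]$ and $\VV$ each symbol refers to, especially in the faulty-but-verified case, and verifying that the line-10 waits really do constrain the contents of $\R^c_i[0]$ --- while the geometric input is standard.
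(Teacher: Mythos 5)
Your proof is correct, and its skeleton matches the paper's: split $V-\overline{F_v}[0]$ into $V-F$ and $F_v[0]$, establish in both cases that $h_i[0]=H(\R^c_i[0],0)$ with $|\R^c_i[0]|\geq n-f$ (the paper does the verified-faulty case via its Claim \ref{c_ver}), lower-bound the multiset $X$ built inside $H(\cdot,0)$ by $(d+1)f+1$, and conclude non-emptiness of the intersection of the hulls of the $(|X|-f)$-subsets. The differences are in the last two steps. For the size of $X$, the paper simply asserts ``$|X|=|\R^c_i[0]|\geq n-f$,'' which as an equality is not justified (the size of $X$ is governed by the counts $N(x,k)\geq f+1$, not by the number of tuples in $\R^c_i[0]$); your argument via \emph{Fault-Free Integrity}, \emph{Fault-free Containment} and the uniqueness of per-node messages actually proves the inequality $|X|\geq n-f$ that the paper needs, so on this point your write-up is more careful than the original. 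For the geometric core, the paper invokes Tverberg's theorem (Theorem \ref{thm:tverberg}): it partitions $X$ into $f+1$ parts with a common point $J$ of their hulls, and notes by pigeonhole that every $C$ omitting only $f$ points contains a full part, hence $J\in\HH(C)$. You instead invoke the centerpoint theorem, whose depth bound $\lceil |X|/(d+1)\rceil\geq f+1>f$ gives a point lying in every such $\HH(C)$ directly. The two are essentially interchangeable here (a Tverberg point is exactly such a deep point), so this buys no extra generality, but your citation is arguably the more direct statement of the property needed; the paper's choice of Tverberg keeps it aligned with the prior multidimensional-consensus literature it builds on. One small caveat: make sure the cited centerpoint statement is for multisets (with multiplicity), as $X$ is a multiset, though the standard proof goes through unchanged.
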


~

\begin{lemma}
\label{lemma:always_notFv_not_verified}
For $r\geq 0$,
if $b \in \overline{F_v}[r]$, then for all $\tau\geq r$,
\begin{itemize}
\item $b \in \overline{F_v}[\tau]$, and 
\item for all $i\in V-\overline{F_v}[\tau+1]$, $(*,b,\tau)\not\in\R^c_i[\tau+1]$.
\end{itemize}
\end{lemma}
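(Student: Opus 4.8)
The plan is to reduce everything to one structural fact about how triples enter the fault-free nodes' sets $\R_i[\cdot]$, prove the first bullet by a one-step induction in the round index, and then derive the second bullet from the first. The structural fact --- call it $(\star)$ --- is: for every $s\ge 0$ and every \emph{fault-free} node $i$, if $(*,b,s)\in\R_i[s+1]$ then node $b$'s round-$s$ execution is verified by $i$, so $b\in F_v[s]$. This is read directly off the code: for $s+1\ge 1$, the set $\R_i[s+1]$ is only ever modified at line 12, where $\Add$ (case $t\ge 1$) inserts exactly the triple $(h,b,s)$, and only after $\Verify(h,\VV,b,s+1)$ returned \TRUE{} on a message $((h,\VV),b,s+1)$ that $i$ reliably received --- which is precisely Definition~\ref{d_verified} for $b$'s round-$s$ execution being verified by $i$. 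I will also use two monotonicity facts: $\R^c_i[t]\subseteq\R_i[t]$ always (line 14 takes a snapshot that is never shrunk, while $\R_i[t]$ only grows); and, for a faulty $i\in F_v[\tau+1]$, writing $\R^c_i[\tau+1]=\VV$ as in (\ref{e_faulty_V}) via the round-$(\tau+2)$ message $((h,\VV),i,\tau+2)$ that some fault-free node $i'$ verified, we have $\VV\subseteq\R_{i'}[\tau+1]$, because $i'$ had to clear the wait at line 10 before adding $(h,i,\tau+1)$ to $\R_{i'}[\tau+2]$.

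For the first bullet I would isolate the single step ``$b\in\overline{F_v}[s]\Rightarrow b\in\overline{F_v}[s+1]$'' for every $s\ge 0$ and then iterate it from $s=r$ upward. Suppose, to the contrary, $b\in F_v[s+1]$: some fault-free node $i$ verifies $b$'s round-$(s+1)$ execution, i.e.\ $i$ reliably receives $((h,\VV),b,s+2)$, $\Verify(h,\VV,b,s+2)$ returns \TRUE, and $i$ adds $(h,b,s+1)$ to $\R_i[s+2]$. Since $s+2\ge 2$, the case-$t\ge 2$ clause of $\Verify$ forces $(*,b,s)\in\VV$, and line 10 forces $\VV\subseteq\R_i[s+1]$; hence $(*,b,s)\in\R_i[s+1]$, and $(\star)$ gives $b\in F_v[s]$, contradicting $b\in\overline{F_v}[s]$. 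Since $b\in F$ throughout, this yields $b\in\overline{F_v}[s+1]$, and iterating from $s=r$ gives $b\in\overline{F_v}[\tau]$ for all $\tau\ge r$.

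For the second bullet, fix $\tau\ge r$; by the first bullet $b\in\overline{F_v}[\tau]$. Let $i\in V-\overline{F_v}[\tau+1]$, so $i$ is either fault-free or a faulty node in $F_v[\tau+1]$, and suppose for contradiction that $(*,b,\tau)\in\R^c_i[\tau+1]$. If $i$ is fault-free, then $\R^c_i[\tau+1]\subseteq\R_i[\tau+1]$ gives $(*,b,\tau)\in\R_i[\tau+1]$, and $(\star)$ with $s=\tau$ gives $b\in F_v[\tau]$ --- contradiction. If $i\in F_v[\tau+1]$, then $\R^c_i[\tau+1]=\VV\subseteq\R_{i'}[\tau+1]$ for the fault-free verifier $i'$, so $(*,b,\tau)\in\R_{i'}[\tau+1]$ and again $(\star)$ gives $b\in F_v[\tau]$ --- contradiction. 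Hence $(*,b,\tau)\notin\R^c_i[\tau+1]$, as required.

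I expect the only genuine subtlety --- the ``hard part'' --- to be the faulty-but-verified case $i\in F_v[\tau+1]$ in the second bullet: one has to recognize that the committed set of such a node, although it is not maintained by an honest party, is pinned down by (\ref{e_faulty_V}) and, through the line-10 wait, sits inside some fault-free node's $\R_{i'}[\cdot]$, which is exactly what makes $(\star)$ applicable there. Everything else is routine bookkeeping of round indices and of which line can insert which triple.
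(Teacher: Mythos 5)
Your proposal is correct and follows essentially the same route as the paper: your structural fact $(\star)$ is exactly what the paper extracts from Definition~\ref{d_verified} together with lines 10--12, your one-step induction for the first bullet is the contrapositive of the paper's Claim~\ref{c_ver}(iii) iterated as in Claim~\ref{claim:j_verified}, and your two-case argument for the second bullet (fault-free $i$ via $\R^c_i[\tau+1]\subseteq\R_i[\tau+1]$; faulty $i\in F_v[\tau+1]$ via the verifier's line-10 wait, i.e.\ Claim~\ref{c_ver}(ii)) is the paper's Claim~\ref{claim:notFv_not_verified}. No gaps.
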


\newcommand{\printcomment}[1]{{\bf #1}}

\section{Correctness}
\label{s_correctness}

We first
introduce some terminology and definitions related to matrices. Then, we develop a
{\em transition matrix} representation of the proposed algorithm, and use
that to prove its correctness. Note that the technique is identical to the one present in our prior work \cite{Tseng_BCC}. We include the proof here for completeness.

\subsection{Matrix Preliminaries}

We use boldface upper case letters to denote matrices, rows of matrices, and their elements. For instance, $\bfA$ denotes a matrix, $\bfA_i$ denotes the $i$-th row of matrix $\bfA$, and $\bfA_{ij}$ denotes the element at the intersection of the $i$-th row and the $j$-th column of matrix $\bfA$.

\begin{definition}
\label{d_stochastic}
A vector is said to be stochastic if all its elements
are non-negative, and the elements add up to 1.
A matrix is said to be row stochastic if each row of the matrix is a
stochastic vector. 
\end{definition}
For matrix products, we adopt the ``backward'' product convention below, where $a \leq b$,
\begin{equation}
\label{backward}
\Pi_{\tau=a}^b \bfA[\tau] = \bfA[b]\bfA[b-1]\cdots\bfA[a]
\end{equation}
For a row stochastic matrix $\bfA$,
 coefficients of ergodicity $\delta(\bfA)$ and $\lambda(\bfA)$ are defined as
follows \cite{Wolfowitz}:
\begin{eqnarray*}
\delta(\bfA) & = &   \max_j ~ \max_{i_1,i_2}~ \| \bfA_{i_1\,j}-\bfA_{i_2\,j} \| \label{e_zelta} \\
\lambda(\bfA) & = & 1 - \min_{i_1,i_2} \sum_j \min(\bfA_{i_1\,j} ~, \bfA_{i_2\,j}) \label{e_lambda}
\end{eqnarray*}
\begin{claim}
\label{claim_zelta}
For any $p$ square row stochastic matrices $\bfA(1),\bfA(2),\dots, \bfA(p)$, 
\begin{eqnarray*}
\delta(\Pi_{\tau=1}^p \bfA(\tau)) ~\leq ~
 \Pi_{\tau=1}^p ~ \lambda(\bfA(\tau)).
\end{eqnarray*}
\end{claim}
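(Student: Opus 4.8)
The plan is to prove the standard submultiplicative property of the coefficients of ergodicity and then induct on $p$ using the backward product convention; the whole argument rests on viewing a row difference of a row stochastic matrix as a zero-sum vector. First I would record two elementary facts. Fix a row stochastic matrix $\bfA$ and row indices $i_1,i_2$. Since $\bfA_{i_1}$ and $\bfA_{i_2}$ are stochastic, $w := \bfA_{i_1}-\bfA_{i_2}$ has $\sum_j w_j = 0$, so its positive part and (the absolute value of) its negative part each sum to $\frac12\sum_j|w_j|$. Two consequences: (a) $\max_j |w_j|\le\frac12\sum_j|w_j|$, since the maximizing coordinate, whether positive or negative, is dominated by the corresponding one-sided total; and (b) $\sum_j\min(\bfA_{i_1\,j},\bfA_{i_2\,j}) = 1-\frac12\sum_j|w_j|$. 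Combining (a) and (b) and maximizing over $i_1,i_2$ gives $\delta(\bfA)\le\lambda(\bfA)$, which already settles the case $p=1$.

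The key step is the multiplicative inequality: for any two square row stochastic matrices $\bfM$ and $\bfC$ of the same dimension, $\delta(\bfM\bfC)\le\lambda(\bfM)\,\delta(\bfC)$. To prove it, fix rows $i_1,i_2$ and write $(\bfM\bfC)_{i_1}-(\bfM\bfC)_{i_2} = \sum_k (\bfM_{i_1\,k}-\bfM_{i_2\,k})\,\bfC_k$, a combination of the rows $\bfC_k$ of $\bfC$. Split $\bfM_{i_1\,k}-\bfM_{i_2\,k} = \beta_k^+ - \beta_k^-$ into nonnegative and nonpositive parts; by the observation above $\sum_k\beta_k^+ = \sum_k\beta_k^- = \mu$ where $\mu := \frac12\sum_k|\bfM_{i_1\,k}-\bfM_{i_2\,k}|\le\lambda(\bfM)$. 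If $\mu=0$ the two rows of $\bfM$ agree, hence so do the two rows of $\bfM\bfC$, and there is nothing to show; otherwise $u := \sum_k(\beta_k^+/\mu)\bfC_k$ and $v := \sum_k(\beta_k^-/\mu)\bfC_k$ are convex combinations of the rows of $\bfC$, and $(\bfM\bfC)_{i_1}-(\bfM\bfC)_{i_2} = \mu(u-v)$. For each column $j$, both $u_j$ and $v_j$ lie in $[\min_k\bfC_{k\,j},\,\max_k\bfC_{k\,j}]$, hence $|u_j-v_j|\le\max_k\bfC_{k\,j}-\min_k\bfC_{k\,j}\le\delta(\bfC)$, so $|(\bfM\bfC)_{i_1\,j}-(\bfM\bfC)_{i_2\,j}|\le\mu\,\delta(\bfC)\le\lambda(\bfM)\,\delta(\bfC)$. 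Maximizing over $j$ and over $i_1,i_2$ yields $\delta(\bfM\bfC)\le\lambda(\bfM)\,\delta(\bfC)$.

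Finally I would induct on $p$. A product of row stochastic matrices is again row stochastic (each row of $\bfM\bfC$ is a convex combination of the stochastic rows of $\bfC$), so every partial product $\Pi_{\tau=1}^m\bfA(\tau)$ is row stochastic. The case $p=1$ is $\delta(\bfA(1))\le\lambda(\bfA(1))$ from the first paragraph. For $p>1$, the backward convention gives $\Pi_{\tau=1}^p\bfA(\tau) = \bfA(p)\,\big(\Pi_{\tau=1}^{p-1}\bfA(\tau)\big)$; applying the multiplicative inequality with $\bfM=\bfA(p)$ and $\bfC=\Pi_{\tau=1}^{p-1}\bfA(\tau)$ and then the induction hypothesis, $\delta(\Pi_{\tau=1}^p\bfA(\tau))\le\lambda(\bfA(p))\,\delta(\Pi_{\tau=1}^{p-1}\bfA(\tau))\le\lambda(\bfA(p))\,\Pi_{\tau=1}^{p-1}\lambda(\bfA(\tau)) = \Pi_{\tau=1}^p\lambda(\bfA(\tau))$, as desired.

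The main obstacle is getting the multiplicative lemma exactly right: making sure the first factor contributes the $\lambda$ and the second the $\delta$ (so that the induction telescopes cleanly under the backward product order), correctly invoking the zero-sum structure when splitting the row difference of $\bfM$, and disposing of the degenerate case $\mu=0$. The remainder is routine bookkeeping with $\ell_1$/$\ell_\infty$ norms of row differences.
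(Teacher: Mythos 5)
Your proof is correct. Note that the paper does not actually prove Claim \ref{claim_zelta}; it simply cites \cite{Hajnal58}, so there is no in-paper argument to compare against. What you have written is essentially the classical proof from that line of work (Hajnal/Wolfowitz): first $\delta(\bfA)\le\lambda(\bfA)$ via the zero-sum structure of a row difference of a stochastic matrix, then the submultiplicative inequality $\delta(\bfM\bfC)\le\lambda(\bfM)\,\delta(\bfC)$ by writing the row difference of the product as $\mu(u-v)$ with $u,v$ convex combinations of the rows of $\bfC$, and finally induction on $p$. All the delicate points are handled correctly: the identity $\sum_j\min(\bfA_{i_1 j},\bfA_{i_2 j})=1-\tfrac12\sum_j|\bfA_{i_1 j}-\bfA_{i_2 j}|$, the bound $\max_j|w_j|\le\tfrac12\sum_j|w_j|$ for zero-sum $w$, the degenerate case $\mu=0$, the fact that products of row stochastic matrices remain row stochastic, and the placement of $\lambda$ on the newest factor $\bfA(p)$ so that the recursion telescopes under the backward product convention of (\ref{backward}). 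This yields a self-contained justification of the claim that the paper leaves to the cited reference.
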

Claim \ref{claim_zelta} is proved in \cite{Hajnal58}.
%
%
%
%
Claim \ref{c_lambda_bound} below follows directly from the definition of $\lambda(\cdotp)$. 
\begin{claim}
\label{c_lambda_bound}
If there exists a constant $\gamma$, where $0<\gamma\leq 1$, such
that, for any pair of rows $i,j$ of matrix $\bfA$, there exists a column
$g$ (that may depend on $i,j$) such that,
$\min (\bfA_{ig},\bfA_{jg}) \geq \gamma$,
then
 $\lambda(\bfA)\leq 1-\gamma<1$.
\end{claim}

Let $\vectorv$ be a column vector with $n$ elements, such that the $i$-th element of vector $\vectorv$, namely $\vectorv_i$, is a convex polytope in the $d$-dimensional Euclidean space. Let $\bfA$ be a $n\times n$ row stochastic square matrix. Then multiplication of
matrix $\bfA$ and vector $\vectorv$ is performed by multiplying each row of $\bfA$
with column vector $\vectorv$ of polytopes. Formally, 
\begin{equation}
\label{eq:multiplication}
\bfA \vectorv = [H_l(\vectorv^T; \bfA_1)
~~~~~H_l(\vectorv^T; \bfA_2)
~~~~~...~~~~~
H_l(\vectorv^T; \bfA_n)]^T
\end{equation}
where $^T$ denotes the transpose operation (thus, $\vectorv^T$ is the transpose of $\vectorv$).
$H_l$ is defined in Definition \ref{def:linear_hull}. Thus, the result of the multiplication
$\bfA\vectorv$ is a column vector consisting of $n$ convex polytopes.
Similarly, product of row vector $\bfA_i$ and above vector $\vectorv$ is
obtained as follows, and it is a polytope.
\begin{eqnarray}
\label{e_r_c}
\bfA_i\vectorv &=& H_l(\vectorv^T\,;~\bfA_i)
\end{eqnarray}

\subsection{Transition Matrix Representation of {\em Optimal Verified Averaging}}

\noindent
Let $\vectorv[t]$, $t\geq 0$, denote a column vector of length $|V|=n$.
In the remaining discussion,
we will refer to $\vectorv[t]$ as the state of the system at the end of round $t$.
In particular,
$\vectorv_i[t]$ for $i\in V$ is viewed as
the state of node $i$ at the end of round $t$.
We define $\vectorv[0]$ as follows:
\begin{itemize}
\item[(I1)] For each fault-free node $i\in V-F$, $\vectorv_i[0]:=h_i[0]$.
\item[(I2)] For each faulty node $k\in F_v[0]$, 
	$\vectorv_k[0]:=h_k[0]$, where
	$h_k[0]$ is defined in (\ref{e_faulty_h}).
\item[(I3)] For each faulty node $k\in \overline{F_v}[0]$,
$\vectorv_k[0]$ is {\em arbitrarily} defined as the origin in the $d$-dimensional Euclidean space.
We will justify this arbitrary choice later.
\end{itemize}

We will show that the state evolution 
can be represented in a matrix form as in (\ref{matrix:alg1}) below, for a suitably
chosen $n\times n$ matrix  $\matrixm[t]$.
$\matrixm[t]$ is said to be the {\em transition matrix} for round $t$.
\begin{equation}
\label{matrix:alg1}
\vectorv[t] = \matrixm[t]~\vectorv[t-1], ~~~~~ t\geq 1
\end{equation}
For all $t\geq 0$,
Theorem \ref{t_M} below proves that,
for each $i\in V-\overline{F_v}[t]$, $h_i[t]=\vectorv_i[t]$.

Given a particular execution of the algorithm, we construct the transition matrix
$\bfM[t]$
for round $t\geq 1$ using
the following procedure.

\vspace*{10pt}
\hrule

\vspace*{2pt}

\noindent {\bf Construction of the Transition Matrix for Round $t~(t \geq 1)$}

\vspace*{4pt}

\hrule

\vspace*{8pt}

\begin{itemize}
\item For each node $i \in V-\overline{F_v}[t]$, and each $k\in V$:

\begin{list}{}{}
\item{} If $(*,k,t-1) \in \R^c_i[t]$, then 

\begin{equation}
\label{eq:matrix_i}
\matrixm_{ik}[t] := \frac{1}{|\R^c_i[t]|}
\end{equation}

\item{} Otherwise,
\begin{equation}
\label{eq:matrix_i-2}
\matrixm_{ik}[t] := 0
\end{equation}

\end{list}

{\em Comment:}
For a faulty node $i\in F_v[t]$, $h_i[t]$ and $\R^c_i[t]$
are defined in (\ref{e_faulty_h}) and (\ref{e_faulty_V}).\\

\item For each node $j \in \overline{F_v}[t]$, 
and each $k \in V$,
\begin{eqnarray}
\matrixm_{jk}[t] &:=& \frac{1}{n}
\label{e_fv}
\end{eqnarray}

\end{itemize}

\hrule

\vspace*{8pt}

~

\begin{theorem}
\label{t_M}
For $r\geq 0$,
with state evolution
specified as $\vectorv[r+1]=\bfM[r+1]\vectorv[r]$ using $\bfM[r+1]$ constructed above,
for all $i\in V-\overline{F_v}[r]$,
(i) $h_i[r]$ is non-empty, and
 (ii) $h_i[r]=\vectorv_i[r]$.
\end{theorem}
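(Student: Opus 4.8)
The natural approach is induction on $r$. For the base case $r=0$, we must show that for every $i \in V - \overline{F_v}[0]$ the polytope $h_i[0]$ is non-empty and equals $\vectorv_i[0]$. Non-emptiness is exactly Lemma \ref{lemma:J_in_H0}. The equality $h_i[0]=\vectorv_i[0]$ holds by construction: for fault-free $i$ it is (I1), and for faulty $i \in F_v[0]$ it is (I2) together with the definition (\ref{e_faulty_h}) of $h_k[0]$. So the base case is essentially a matter of quoting the right definitions.

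For the inductive step, assume the claim holds for $r$; I would prove it for $r+1$. Fix $i \in V - \overline{F_v}[r+1]$. First I would argue $h_i[r+1]$ is non-empty: by $\Proceed(r+1)$ and the $\Verify$ checks, $\R^c_i[r+1]$ contains at least $n-f$ tuples $(h,k,r)$, each of which was verified, hence satisfies $h = H(\R^c_k[r], r)$; each such $k$ is therefore in $F_v[r] \cup (V-F) = V - \overline{F_v}[r]$, so by the induction hypothesis each such $h = h_k[r] = \vectorv_k[r]$ is non-empty. Then $h_i[r+1] = H(\R^c_i[r+1], r+1) = H_l$ of these non-empty polytopes with positive weights $\tfrac{1}{|\R^c_i[r+1]|}$, which is non-empty (and convex). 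The key technical point here is that every node contributing a tuple to $\R^c_i[r+1]$ lies in $V - \overline{F_v}[r]$; this is where Lemma \ref{lemma:always_notFv_not_verified} is needed — it rules out any $(*,b,r) \in \R^c_i[r+1]$ with $b \in \overline{F_v}[r]$.

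Next, the equality $h_i[r+1] = \vectorv_i[r+1] = \bfM_{i}[r+1]\,\vectorv[r]$. By the matrix construction, row $i$ of $\bfM[r+1]$ (since $i \in V-\overline{F_v}[r+1]$, not the $\overline{F_v}$ case) has $\bfM_{ik}[r+1] = \tfrac{1}{|\R^c_i[r+1]|}$ exactly when $(*,k,r) \in \R^c_i[r+1]$ and $0$ otherwise. So $\bfM_i[r+1]\,\vectorv[r] = H_l(\vectorv[r]^T;\, \bfM_i[r+1])$ is the equal-weight linear combination of $\{\vectorv_k[r] : (*,k,r)\in\R^c_i[r+1]\}$. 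By the induction hypothesis, for each such $k$ (which, as argued, lies in $V-\overline{F_v}[r]$) we have $\vectorv_k[r] = h_k[r]$, and these $h_k[r]$ are precisely the polytopes forming the multiset $X$ in the $t>0$ branch of function $H$ applied to $(\R^c_i[r+1], r+1)$. Hence $\bfM_i[r+1]\,\vectorv[r] = H_l(X;\tfrac{1}{|X|},\dots) = H(\R^c_i[r+1],r+1) = h_i[r+1]$, using line 15 (or (\ref{e_faulty_H}) for faulty $i \in F_v[r+1]$). A small point to check carefully is multiset bookkeeping: $|\R^c_i[r+1]|$ equals the number of $k$ with $(*,k,r) \in \R^c_i[r+1]$, because by the last communication-primitive property each node contributes at most one round-$r$ message — so the weights in $\bfM_i[r+1]$ and in $H$ agree.

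I expect the main obstacle to be the careful justification that $\R^c_i[r+1]$ only contains tuples from nodes in $V-\overline{F_v}[r]$, and that the "as-if" local variables $h_k[r], \R^c_k[r]$ of verified faulty nodes (\ref{e_faulty_h})--(\ref{e_faulty_H}) are consistent enough that the matrix row for a faulty $i \in F_v[r+1]$ genuinely reproduces $h_i[r+1]$ via (\ref{e_faulty_H}); this relies on Global Uniqueness so that the $\VV$ and $h$ seen by distinct verifying fault-free nodes coincide. Everything else is routine unwinding of the definitions of $H$, $H_l$, and the matrix construction.
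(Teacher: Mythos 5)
Your proposal is correct and follows essentially the same route as the paper: induction on the round, with the base case given by Lemma \ref{lemma:J_in_H0} together with (I1)--(I3), and the inductive step showing that every tuple $(h,k,r)\in\R^c_i[r+1]$ comes from a node $k\in V-\overline{F_v}[r]$ with $h=h_k[r]=\vectorv_k[r]$, so that row $i$ of $\bfM[r+1]$ reproduces $H(\R^c_i[r+1],r+1)$. The only cosmetic difference is that you invoke Lemma \ref{lemma:always_notFv_not_verified} for the key exclusion of nodes in $\overline{F_v}[r]$, whereas the paper re-derives that fact inline (via Definition \ref{d_verified} for fault-free $i$ and a short contradiction using Claim \ref{c_ver}(ii) for verified faulty $i$); both are equivalent and non-circular.
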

\begin{proof}

The proof of the theorem is by induction. The theorem holds for $r=0$ due to
Lemma \ref{lemma:J_in_H0}, and the choice
of the elements of $\vectorv[0]$, as specified in (I1), (I2) and (I3) above.

Now, suppose that the theorem holds for $r=t-1$ where $t-1\geq 0$,
and prove it for $r=t$. 
Thus, by induction hypothesis, for
all $i\in V-\overline{F_v}[t-1]$, $h_i[t-1]=\vectorv_i[t-1]\neq \emptyset$.
Now, $\vectorv[t] = \bfM[t]\vectorv[t-1]$.
\begin{itemize}
\item In round $t\geq 1$, each fault-free node $i\in V-F$ computes its new state 
$h_i[t]$
at line 15 using function $H(\R^c_i[t],t)$. The function $H(\R^c_i[t],t)$ for $t\geq 1$ then computes a
linear combination of $|\R^c_i[t]|$ convex hulls, with all the weights being equal to
$\frac{1}{|\R^c_i[t]|}$.
Also,
by Definition \ref{d_verified} and the definition of $\overline{F_v}[t-1]$, if $(h,j,t-1)\in \R^c_i[t]$, then $j\not\in \overline{F_v}[t-1]$ (i.e., $j\in V-\overline{F_v}[t-1]$).
Therefore, if $(h,j,t-1)\in \R^c_i[t]$, then either $j$ is fault-free, or it is faulty and its
round $t-1$ execution is verified: thus, $h=h_j[t-1]$.
Also, by induction hypothesis, $h=h_j[t-1]\neq\emptyset$.
This implies that $h_i[t]=H(\R^c_i[t],t)$ is non-empty.

Then observe that, by defining $\bfM_{ik}[t]$ elements as in (\ref{eq:matrix_i})
and (\ref{eq:matrix_i-2}), we ensure that
$\bfM_i[t]\vectorv[t-1]$ equals $H(\R^c_i[t],t)$, and hence equals $h_i[t]$.

\item
For $i\in F_v[t]$ as well, as shown in (\ref{e_faulty_H}), $h_i[t]=H(\R^c_i[t],t)$, where
$h_i[t]$ and $\R^c_i[t]$ are as defined in (\ref{e_faulty_h})
and (\ref{e_faulty_V}).
The function $H(\R^c_i[t],t)$ for $t\geq 1$ then computes a
linear combination of $|\R^c_i[t]|$ convex hulls, with all the weights being equal to
$\frac{1}{|\R^c_i[t]|}$.
Consider an element $(h,j,t-1)$ in $\R^c_i[t]$. We argue that $j \in V - \overline{F_v}[t-1]$. Suppose this is not true, i.e., $j \in \overline{F_v}[t-1]$. 
By Definition \ref{d_verified}, node $i$'s round $t$ execution is verified by some fault-free node $k$, which implies that eventually, $\R^c_i[t] \subseteq \R_k[t]$. However, since $k$ is fault-free, and $(h,j,t-1) \not\in \R_k[t]$, a contradiction.
Hence, if $(h,j,t-1)\in \R^c_i[t]$, then $j\in V-\overline{F_v}[t-1]$.
That is, if $(h,j,t-1)\in \R^c_i[t]$, then either $j$ is fault-free, or it is faulty and its
round $t-1$ execution is verified: thus, $h=h_j[t-1]$.

Also, by induction hypothesis, $h=h_j[t-1]\neq\emptyset$.
This implies that $h_i[t]=H(\R^c_i[t],t)$ is non-empty.

Then observe that, by defining $\bfM_{ik}[t]$ elements as in (\ref{eq:matrix_i})
and (\ref{eq:matrix_i-2}), we ensure that
$\bfM_i[t]\vectorv[t-1]$ equals $H(\R^c_i[t],t)$, and hence equals $h_i[t]$.

\end{itemize}
\end{proof}


Now, we argue that for $t \geq 0$, the state $\vectorv_j[t]$ for each node $j\in\overline{F_v}[t]$ does not affect the state
of the nodes $V-\overline{F_v}[\tau]$, for $\tau\geq t+1$.
From the discussion in the above proof, we see that for $j\in\overline{F_v}[t]$, $(*,j,t)\not\in \R^c_i[t+1]$ for $i \in V-\overline{F_v}[t+1]$. Thus, the sate $\vectorv_j[t]$ does not affect the state $h_i[t+1]$. Then, by Lemma \ref{lemma:always_notFv_not_verified}, if $j\in\overline{F_v}[t]$, then $j\in\overline{F_v}[\tau]$, for $\tau\geq t+1$. Thus, by the same argument, the sate $\vectorv_j[\tau]$ does not affect the state $h_i[t+1]$.
This justifies the somewhat arbitrary choice of
$\vectorv_j[0]$ for $j\in\overline{F_v}[0]$,
and $\matrixm_{jk}[t]$ in (\ref{e_fv}) for $j\in \overline{F_v}[t], ~t\geq 1$.
This choice does simplify the remaining proof somewhat.

The above discussion shows that, for $t\geq 1$, the evolution of $\vectorv[t]$ can be written as
in (\ref{matrix:alg1}), that is, $\vectorv[t]=\matrixm[t]\vectorv[t-1]$.
Given the matrix product definition in (\ref{eq:multiplication}),
it is easy to verify that
\[ \bfM[\tau+1] ~ \left(\bfM[\tau] \vectorv[\tau-1]\right) ~=~ \left(\bfM[\tau+1]\bfM[\tau]\right)~\vectorv[\tau-1]
\text{~ for~} \tau \geq 1.
\]
Therefore, by repeated application of (\ref{matrix:alg1}),
we obtain:
\begin{eqnarray}
\vectorv[t] & = & \left(\,\Pi_{\tau=1}^t \matrixm[\tau]\,\right)\, \vectorv[0],
 ~~~~ t\geq 1
\label{e_unroll}
\end{eqnarray}
Recall that we adopt the ``backward'' matrix product convention presented in
(\ref{backward}).

\begin{lemma}
\label{lemma:transition_matrix}
For $t\geq 1$,
transition matrix $\bfM[t]$ constructed using the above procedure satisfies the following conditions. 
\begin{itemize}
\item For $i,j \in V$, there exists a fault-free node $g(i,j)$ such that $\bfM_{ig(i,j)}[t] \geq \frac{1}{n}$. 

\item $\bfM[t]$ is a row stochastic matrix, and $\lambda(\bfM[t])\leq 1-\frac{1}{n}$.


\end{itemize}

\end{lemma}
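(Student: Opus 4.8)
## Proof Proposal for Lemma~\ref{lemma:transition_matrix}

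The plan is to prove the two claims by a direct case analysis on the structure of the rows of $\bfM[t]$, exploiting the fact that every row of $\bfM[t]$ has at least $n-f$ nonzero entries, all equal, and that such a large support must intersect the support of any other row in a fault-free coordinate. First I would dispose of row stochasticity: for a row $j \in \overline{F_v}[t]$, all $n$ entries equal $\frac{1}{n}$ by~(\ref{e_fv}), so the row sums to $1$; for a row $i \in V - \overline{F_v}[t]$, the nonzero entries are exactly those $k$ with $(*,k,t-1) \in \R^c_i[t]$, each equal to $\frac{1}{|\R^c_i[t]|}$ by~(\ref{eq:matrix_i})--(\ref{eq:matrix_i-2}), and there are exactly $|\R^c_i[t]|$ of them (here I would invoke the property that a node reliably receives at most one message of the form $(*,k,t-1)$, so the tuples in $\R^c_i[t]$ have distinct node identifiers), so this row also sums to $1$. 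Hence $\bfM[t]$ is row stochastic.

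Next I would establish the common-fault-free-column claim, from which $\lambda(\bfM[t]) \le 1 - \frac{1}{n}$ follows immediately via Claim~\ref{c_lambda_bound} with $\gamma = \frac{1}{n}$. Fix rows $i, j \in V$. The key observation is that every row of $\bfM[t]$ has at least $n-f$ positive entries: a row indexed by $\overline{F_v}[t]$ has all $n$ entries positive, and a row indexed by $i \in V - \overline{F_v}[t]$ has $|\R^c_i[t]| \ge n-f$ positive entries, since $\R^c_i[t]$ was frozen at line~14 only after $\Proceed(t)$ returned \TRUE, which for $t \ge 1$ requires $|\R_i[t]| \ge n-f$ (and similarly $|\R^c_i[t]| = |\VV| \ge n-f$ for a verified faulty node $i$, by the $\Verify$ check). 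Moreover each positive entry in row $i$ (resp. row $j$) is at least $\frac{1}{n}$, because $|\R^c_i[t]| \le n$. Now let $S_i, S_j \subseteq V$ be the supports of rows $i$ and $j$; since $|S_i| \ge n-f$ and $|S_j| \ge n-f$, we get $|S_i \cap S_j| \ge n - 2f$. Because $n \ge (d+2)f + 1 \ge 2f + 1$ (using $d \ge 0$; in fact $d \ge 1$), we have $n - 2f \ge 1$, so $S_i \cap S_j$ is nonempty. But I need a \emph{fault-free} common column. Here I would refine the count: at most $\phi \le f$ of the coordinates in $S_i \cap S_j$ are faulty, so $S_i \cap S_j$ contains at least $n - 2f - f = n - 3f$ fault-free indices — wait, that is not quite the right bookkeeping. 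Instead: $|S_i \cap S_j \cap (V-F)| \ge |S_i| + |S_j| - |S_i \cup S_j| - |F| \ge (n-f) + (n-f) - n - f = n - 3f$, which is positive since $n \ge (d+2)f+1 \ge 3f+1$ (again $d \ge 1$, so $d + 2 \ge 3$). Pick $g(i,j)$ in this set; it is fault-free and $\bfM_{ig(i,j)}[t], \bfM_{jg(i,j)}[t] \ge \frac{1}{n}$, as required.

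The main obstacle I anticipate is the bookkeeping in the inclusion--exclusion step: one must be careful that the $n-f$ lower bound on the support size applies uniformly to \emph{all} rows, including rows of verified faulty nodes, and that subtracting off the at-most-$f$ faulty columns still leaves a nonempty set — this is exactly where the hypothesis $n \ge (d+2)f+1$ (hence $n \ge 3f+1$ for $d \ge 1$) is used, so I would flag that dependence explicitly. A secondary point needing care is justifying $|\R^c_i[t]| = $ (number of positive entries in row $i$), i.e., that distinct tuples $(*,k,t-1) \in \R^c_i[t]$ have distinct identifiers $k$; this follows from the ``at most one message of the form $(*,j,t)$'' property stated in the Preliminaries, but it should be cited. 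Everything else is a routine combination of Claim~\ref{c_lambda_bound} with the explicit $\frac{1}{n}$ bound on nonzero entries.
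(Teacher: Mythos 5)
Your proof is correct and rests on the same core counting idea as the paper's ($n\geq(d+2)f+1\geq 3f+1$ forces two rows with support of size at least $n-f$, entries at least $\frac{1}{n}$, to share a fault-free column; then Claim~\ref{c_lambda_bound} gives the $\lambda$ bound), but you organize it differently. The paper splits into four cases according to whether $i,j$ lie in $\overline{F_v}[t]$ and, for the case $i,j\in V-\overline{F_v}[t]$, routes through Claim~\ref{claim:at_least_one_verified}, which produces a common \emph{tuple} $(h_g[t-1],g,t-1)\in\R^c_i[t]\cap\R^c_j[t]$ and therefore needs the Global Uniqueness property (via Claim~\ref{c_common}) to match the values $h$ across the two sets. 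You observe, correctly, that for this lemma only the column \emph{indices} matter, so a uniform support-intersection argument ($|S_i\cap S_j\cap(V-F)|\geq n-3f\geq 1$) covers all row types at once and dispenses with Global Uniqueness entirely; that is a mild but genuine streamlining, at the cost of proving slightly less than Claim~\ref{claim:at_least_one_verified} (which the paper also reuses in spirit elsewhere). One detail you should tighten: for a verified faulty node $i\in F_v[t]$, the fact that the tuples in $\R^c_i[t]$ carry distinct identifiers (hence $|\R^c_i[t]|\leq n$, each positive entry is $\frac{1}{|\R^c_i[t]|}\geq\frac{1}{n}$, and the row sums to $1$) does not follow directly from the ``at most one message of the form $(*,j,t)$'' property, which constrains what a \emph{fault-free} node receives; it follows because, by Claim~\ref{c_ver}(ii), $\R^c_i[t]\subseteq\R_j[t]$ for some fault-free $j$, and $\R_j[t]$ has at most one tuple per identifier. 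With that citation added, your argument is complete.
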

The proof of Lemma \ref{lemma:transition_matrix} is presented in Appendix \ref{a_t_matrix}.

\subsection{Correctness of {\em Optimal Verified Averaging}}

\begin{definition}
\label{def:valid_hull}
A convex polytope $h$ is said to be {\em valid} if every point in $h$ is in the convex hull of the inputs at the fault-free nodes. 
\end{definition}

Lemmas \ref{lemma:valid_initial_hull}
and \ref{lemma:linear_valid} below are proved in  
Appendices \ref{app_s_lemma:valid_initial_hull} and \ref{app_s_lemma:linear_valid}, respectively.

\begin{lemma}
\label{lemma:valid_initial_hull}
$h_i[0]$ for each node $i \in \sv - \overline{F_v}[0]$ is valid.
\end{lemma}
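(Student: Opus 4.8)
The plan is to show that every point in $h_i[0]$ lies in the convex hull of the fault-free inputs, by unwinding the definition of the function $H(\cdot,0)$ applied to $\R^c_i[0]$. The key observation is that $H(\cdot,0)$ produces an intersection $\bigcap_{C\subseteq X,\,|C|=|X|-f}\HH(C)$ over subsets of a multiset $X$, and the subsets $C$ of size $|X|-f$ are precisely those obtained by deleting any $f$ of the elements of $X$. So the first step is to track what the multiset $X$ (as built inside $H$ when $t=0$) actually contains: its elements are points $x$ such that, for some node identifier $k$, the pair $(x,k)$ has $N(x,k)\geq f+1$, where $N(x,k)$ counts how many of the received sets $(\II,l,-1)\in\R^c_i[0]$ contain $(x,k,-1)$. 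Because at most $f$ nodes are faulty, the threshold $f+1$ forces at least one fault-free node $l$ to have had $(x,k,-1)$ in its set returned from round $-1$; by Fault-Free Integrity, that message could only be in $l$'s set if node $k$ actually reliably broadcast $(x,k,-1)$ — and by a second application, if $k$ itself were faulty the value $x$ could still be arbitrary, so this alone is not enough.

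Hence the second, crucial step is to argue that among the points in $X$, at least $|X|-f$ of them are inputs of fault-free nodes — equivalently, at most $f$ of the $(x,k)\in Y$ have $k$ faulty. Here I would use Global Uniqueness together with the fact that $N(x,k)\geq f+1$ requires a fault-free witness: for each faulty $k$ there is at most one value $x_k$ for which $(x_k,k,-1)$ is ever reliably received by a fault-free node (Global Uniqueness), so at most $\phi\le f$ pairs $(x,k)\in Y$ have $k\in F$, and for each fault-free $k$ the only value that appears is its true input $x_k$ (Fault-Free Integrity). Consequently, the multiset $X$ consists of the true inputs of at least $|X|-\phi\ge|X|-f$ distinct fault-free nodes, plus at most $f$ arbitrary points contributed by faulty nodes. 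The third step then invokes the structure of $H$: since $|X|\ge n-f\ge (d+1)f+1 > f$ (guaranteeing $|X|-f\ge 1$ so the intersection is over non-empty index set), and since at most $f$ of the elements of $X$ are "bad," the subset $C^*$ obtained by deleting exactly those $\le f$ bad points (padding with arbitrary deletions if fewer than $f$ are bad) is a size-$(|X|-f)$ subset all of whose points are fault-free inputs; therefore $\HH(C^*)$ is contained in the convex hull of the fault-free inputs, and since $h_i[0]={\tt temp}\subseteq\HH(C^*)$, every point of $h_i[0]$ is valid.

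For a faulty node $i\in F_v[0]$, the same argument applies verbatim: by Definition \ref{d_verified} and the check in $\Verify$ for $t=0$, $h_i[0]=H(\R^c_i[0],0)$ with $|\R^c_i[0]|\ge n-f$, and the set $\R^c_i[0]$ was verified by some fault-free node, so it contains only messages that are reliably received and thus subject to the same Integrity/Uniqueness constraints. I would remark that the argument does not depend on $i$ being fault-free, only on the contents of $\R^c_i[0]$, so the two cases collapse into one. The main obstacle I anticipate is the careful multiset bookkeeping in the second step — making precise that "at most $f$ elements of the multiset $X$ are contributed by faulty nodes" when the same point value could in principle be claimed both as a faulty node's value and as some fault-free node's input, and handling the size-$|X|-f$ subset selection when fewer than $f$ elements are bad (so one must delete some good elements too, which only shrinks $\HH(C^*)$ and hence still gives validity). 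Everything else is a direct consequence of the communication-primitive properties stated in the preliminaries and the $n\ge(d+2)f+1$ assumption.
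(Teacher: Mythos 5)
Your proposal is correct and takes essentially the same approach as the paper's proof: unwind $H(\cdot,0)$, use the $f+1$ count to extract a fault-free witness so that pairs with fault-free $k$ must carry true inputs (Fault-Free Integrity) while faulty identifiers contribute at most $f$ elements of $X$ (Global Uniqueness / one message per sender), and then observe that some size-$(|X|-f)$ subset $C^*$ of $X$ contains only fault-free inputs, so $h_i[0]\subseteq \HH(C^*)$ lies in the convex hull of fault-free inputs. Your treatment of $i\in F_v[0]$ — noting the argument depends only on $h_i[0]=H(\R^c_i[0],0)$, $|\R^c_i[0]|\geq n-f$, and $\R^c_i[0]\subseteq \R_j[0]$ for a fault-free verifier $j$ (Claim \ref{c_ver}) — is a harmless streamlining of the paper's second case, which instead applies the fault-free argument to $\R_j[0]$ and uses monotonicity of $H(\cdot,0)$ under set inclusion.
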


\begin{lemma}
\label{lemma:linear_valid}
Suppose non-empty convex polytopes $h_1, h_2, \cdots, h_k$ are all valid. Consider $k$ constants $c_1, c_2, \cdots, c_k$ such that $0 \leq c_i \leq 1$ and $\sum_{i = 1}^k c_i = 1$.
Then the linear combination of these convex polytopes,
$H_l(h_1, h_2, \cdots, h_k; c_1, c_2, \cdots, c_k)$, is valid.
\end{lemma}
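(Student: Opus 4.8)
The plan is to prove Lemma~\ref{lemma:linear_valid} by reducing it to the definition of validity together with the structure of the linear combination $H_l$. Let $C$ denote the convex hull of the inputs at the fault-free nodes; by hypothesis each $h_i$ is valid, meaning $h_i \subseteq C$ for all $i$. I want to show that every point $p \in H_l(h_1,\dots,h_k;c_1,\dots,c_k)$ lies in $C$ as well. First I would set $Q := \{\, i \mid c_i \neq 0\,\}$, exactly as in Definition~\ref{def:linear_hull}. Then, by the definition of $H_l$, for such a point $p$ there exist points $p_i \in h_i$ for each $i \in Q$ with $p = \sum_{i \in Q} c_i p_i$. Since each $h_i$ with $i \in Q$ is valid (and non-empty, which is consistent with the hypothesis and with clause (ii) of Definition~\ref{def:linear_hull}), we have $p_i \in C$ for every $i \in Q$.

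The key step is then to observe that $p$ is a convex combination of the points $\{p_i\}_{i \in Q}$: the coefficients satisfy $0 \le c_i \le 1$ and $\sum_{i \in Q} c_i = \sum_{i=1}^k c_i = 1$ (the coefficients outside $Q$ being zero). Since $C$ is itself a convex set (it is a convex hull), and every $p_i$ with $i \in Q$ lies in $C$, the convex combination $p = \sum_{i \in Q} c_i p_i$ also lies in $C$. Hence every point of $H_l(h_1,\dots,h_k;c_1,\dots,c_k)$ lies in the convex hull of the inputs at the fault-free nodes, which is precisely the statement that $H_l(h_1,\dots,h_k;c_1,\dots,c_k)$ is valid per Definition~\ref{def:valid_hull}.

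There is essentially no hard obstacle here; the proof is a direct unwinding of the two relevant definitions plus the elementary fact that a convex set is closed under convex combinations. The only point requiring a small amount of care is bookkeeping around the index set $Q$ versus the full index set $\{1,\dots,k\}$: one must note that dropping the zero-weight terms does not change the sum and that $H_l$ is defined only in terms of the $i \in Q$, so that validity of $h_i$ for $i \notin Q$ is never actually needed. I would also remark that $H_l(h_1,\dots,h_k;c_1,\dots,c_k)$ is non-empty and convex (by the result cited in Appendix~\ref{a:linear_convex}), so that ``valid convex polytope'' is a meaningful conclusion, but this is not strictly part of what Lemma~\ref{lemma:linear_valid} asserts.
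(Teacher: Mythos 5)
Your proof is correct and follows essentially the same route as the paper: both arguments note that any point of $H_l(h_1,\dots,h_k;c_1,\dots,c_k)$ is a convex combination of points drawn from the valid polytopes, and conclude by convexity of the hull of the fault-free inputs. Your extra bookkeeping about the index set $Q$ of nonzero weights is a harmless refinement of the same idea.
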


\begin{theorem}
\label{thm:correctness}

Optimal Verified Averaging satisfies the {\em validity},
{\em $\epsilon$-agreement} and {\em termination} properties
after a large enough number of asynchronous rounds.
\end{theorem}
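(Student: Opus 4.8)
My strategy is to prove the three properties — termination, validity, and $\epsilon$-agreement — in that order, since termination follows almost immediately from what is already established, validity from a straightforward induction, and $\epsilon$-agreement from the transition-matrix machinery that has been set up.

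\textbf{Termination.} By Lemma~\ref{l_progress}, the fault-free nodes progress from round to round indefinitely, so each fault-free node reaches round $t_{end}$ in finite time; since $t_{end}$ is a fixed constant (to be chosen below), every fault-free node terminates in finite time. The only work here is to confirm that the choice of $t_{end}$ made for $\epsilon$-agreement is indeed a fixed finite constant, which it will be.

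\textbf{Validity.} I would show by induction on $t\geq 0$ that for every $i\in V-\overline{F_v}[t]$, the polytope $h_i[t]$ is valid in the sense of Definition~\ref{def:valid_hull}. The base case $t=0$ is exactly Lemma~\ref{lemma:valid_initial_hull}. For the inductive step, fix $i\in V-\overline{F_v}[t]$; from the proof of Theorem~\ref{t_M} we know $h_i[t]=H(\R^c_i[t],t)=H_l(X;\frac{1}{|X|},\dots,\frac{1}{|X|})$ where $X=\{h_j[t-1] : (h_j[t-1],j,t-1)\in\R^c_i[t]\}$, and every such $j$ lies in $V-\overline{F_v}[t-1]$, so by the induction hypothesis each $h_j[t-1]$ is valid and non-empty. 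Lemma~\ref{lemma:linear_valid} then gives that $h_i[t]$ is valid. Since every fault-free node $i$ lies in $V-\overline{F_v}[t_{end}]$ and outputs $\vectorv_i[t_{end}]=h_i[t_{end}]$ (Theorem~\ref{t_M}), the output at each fault-free node is valid, which is the validity requirement.

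\textbf{$\epsilon$-agreement.} This is the substantive part. Using (\ref{e_unroll}), the state of the non-excluded nodes at round $t$ is $\vectorv[t]=\left(\Pi_{\tau=1}^{t}\bfM[\tau]\right)\vectorv[0]$, a backward product of row-stochastic matrices. By Lemma~\ref{lemma:transition_matrix}, each $\bfM[\tau]$ has, for every pair of rows, a common fault-free column with entry at least $\frac1n$, so by Claim~\ref{c_lambda_bound} $\lambda(\bfM[\tau])\leq 1-\frac1n$, and by Claim~\ref{claim_zelta} the coefficient of ergodicity of the product satisfies $\delta\!\left(\Pi_{\tau=1}^{t}\bfM[\tau]\right)\leq\left(1-\frac1n\right)^{t}$. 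I then need to translate this contraction of matrix rows into a bound on the Hausdorff distance between the output polytopes $h_i[t]$ and $h_j[t]$ of two fault-free nodes $i,j$: the rows $\bfM_i[t]$ and $\bfM_j[t]$ of the full product being close (entrywise within $\delta$) means that $H_l(\vectorv[0]^T;\bfM_i[t])$ and $H_l(\vectorv[0]^T;\bfM_j[t])$ differ in Hausdorff distance by at most $\delta$ times a quantity controlled by the diameter of the initial polytopes $\vectorv_k[0]$ — each of which is valid, hence contained in the convex hull of the fault-free inputs, which has some finite diameter $D$ determined by the (bounded) inputs. Concretely I expect a bound of the shape $\D(h_i[t],h_j[t]) \leq n\,D\,\delta\!\left(\Pi_{\tau=1}^{t}\bfM[\tau]\right) \leq n D \left(1-\frac1n\right)^{t}$. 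Choosing $t_{end}$ large enough that $nD\left(1-\frac1n\right)^{t_{end}}\leq\epsilon$ — a finite constant since $1-\frac1n<1$ — yields $\epsilon$-agreement, and simultaneously pins down the constant $t_{end}$ needed above for termination.

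\textbf{Main obstacle.} The routine parts are termination and validity; the genuine work is the last step of $\epsilon$-agreement — carefully justifying that closeness of the weight vectors $\bfM_i[t]$, $\bfM_j[t]$ (in the $\delta$ sense) implies proportional closeness in Hausdorff distance of the corresponding linear combinations $H_l$ of a fixed tuple of bounded valid polytopes. This requires a lemma relating $\D(H_l(\vectorv^T;\mathbf{a}),H_l(\vectorv^T;\mathbf{b}))$ to $\|\mathbf{a}-\mathbf{b}\|$ and the diameter of the polytopes, handling the subtlety that the index sets $Q$ (nonzero weights) may differ between $\mathbf a$ and $\mathbf b$ — here the common fault-free column guaranteed by Lemma~\ref{lemma:transition_matrix} ensures the supports overlap, and Lemma~\ref{lemma:always_notFv_not_verified} together with Theorem~\ref{t_M} guarantees the excluded rows/columns never contaminate the relevant states, so the comparison is well-posed. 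I would isolate this geometric estimate as a separate lemma and then assemble the three properties as above.
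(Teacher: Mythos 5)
Your proposal is correct and follows essentially the same route as the paper: termination via Lemma~\ref{l_progress}, validity via Lemmas~\ref{lemma:valid_initial_hull} and~\ref{lemma:linear_valid} (the paper applies them once to the row-stochastic product $\bfM^*$ rather than inducting round by round, which is equivalent), and $\epsilon$-agreement via $\delta(\Pi_\tau\bfM[\tau])\leq(1-\frac{1}{n})^t$. The ``geometric estimate'' you defer to a separate lemma is exactly what the paper carries out concretely: since $\bfM^*_{ib}=0$ for $b\in\overline{F_v}[0]$, it pairs each point $p_i^*=\sum_k\bfM^*_{ik}p_k$ of $h_i[t]$ with $p_j^*=\sum_k\bfM^*_{jk}p_k$ built from the \emph{same} initial points $p_k\in h_k[0]$ and bounds $d(p_i^*,p_j^*)\leq\alpha^t\Omega$ coordinatewise, yielding the constant $t_{end}$ in (\ref{e_end}).
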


\begin{proof}
Repeated applications of
Lemma \ref{l_progress} ensures that the fault-free nodes will progress
from the preliminary round through round $r$, for any $r\geq 0$, allowing us to use (\ref{e_unroll}).
Consider round $t \geq 1$. Let
\begin{eqnarray}\bfM^* &=& \Pi_{\tau=1}^t \bfM[\tau]
\label{e_Mstar}
\end{eqnarray}
(To simplify the presentation, we do not include the round index $[t]$
in the notation $\bfM^*$ above.)
 Then $\vectorv[t]=\bfM^* \vectorv[0]$.
By Lemma \ref{lemma:transition_matrix},
each $\bfM[t]$ is a {\em row stochastic} matrix,
therefore, $\bfM^*$ is also row stochastic. 
By Lemma \ref{lemma:valid_initial_hull},
$h_i[0]=\vectorv_i[0]$ for each $i\in \sv - \overline{F_v}[0]$ is valid. Therefore,
by Lemma \ref{lemma:linear_valid}, $\bfM_i^*\vectorv[0]$ for each $i\in V-F$
is valid. Also, by Theorem \ref{t_M} and (\ref{e_unroll}), $h_i[t]=\bfM_i^*\vectorv[0]$
for $i\in V-F$.
Thus, $h_i[t]$ is valid for $t \geq 1$. This observation together with Lemma \ref{lemma:valid_initial_hull} implies that {\em Optimal Verified Averaging} satisfies the validity condition for all round $r\geq 0$.

Let us define $\alpha = 1-\frac{1}{n}$.
By Lemma \ref{lemma:transition_matrix}, $\lambda(\bfM[t]) \leq 1-\frac{1}{n}=\alpha$. Then by Claim \ref{claim_zelta},
\begin{eqnarray}
\label{e_alpha}
\delta(\bfM^*)=
\delta(\Pi_{\tau=1}^t \bfM[\tau])
~ \leq ~ \lim_{t\rightarrow\infty} \Pi_{\tau=1}^{t} \lambda(\bfM[\tau]) 
 ~\leq~ {\left(1-\frac{1}{n}\right)}^t ~=~\alpha^t
\end{eqnarray}
%
%

Consider any two fault-free nodes $i,j\in V-F$.
By (\ref{e_alpha}), $\delta(\bfM^*)\leq \alpha^t$. 
Therefore, by the definition of $\delta(\cdot)$, for $1 \leq k \leq n$,
\begin{equation}
\label{eq:delta1}
\| \bfM^*_{ik} -  \bfM^*_{jk}\| \leq \alpha^t
\end{equation}
%
%
By Lemma \ref{lemma:always_notFv_not_verified}, and construction of the transition matrices,
it should be easy to see that $\bfM^*_{ib}=0$ for $b\in \overline{F_v}[0]$.
Then, for
any point $p_i^*$ in $h_i[t]=\bfM^*_i\vectorv[0]$,
there must exist, for all $k\in V-\overline{F_v}[0]$, $p_k \in h_k[0],$ such that
\begin{equation}
\label{pi}
p_i^* = \sum_{k\in V-\overline{F_v}[0]} \bfM^*_{ik} p_k
= \left(\sum_{k\in V-\overline{F_v}[0]} \bfM^*_{ik} p_k(1),~~\sum_{k\in V-\overline{F_v}[0]} \bfM^*_{ik} p_k(2), \cdots, \sum_{k\in V-\overline{F_v}[0]} \bfM^*_{ik} p_k(d)\right)
\end{equation}
where $p_k(l)$ denotes the value of $p_k$'s $l$-th coordinate.
Now choose point $p_j^*$ in $h_j[t]$ defined as follows.
\begin{equation}
\label{pj}
p_j^* = \sum_{k\in V-\overline{F_v}[0]} \bfM^*_{jk} p_k 
= \left(\sum_{k\in V-\overline{F_v}[0]} \bfM^*_{jk} p_k(1),~~ \sum_{k\in V-\overline{F_v}[0]} \bfM^*_{jk} p_k(2), \cdots,  \sum_{k\in V-\overline{F_v}[0]} \bfM^*_{jk} p_k(d)\right)
\end{equation}

Then the Euclidean distance between $p_i^*$ and $p_j^*$ is $d(p_i^*,p_j^*)$.
The following derivation is obtained by simple algebraic manipulation, using (\ref{eq:delta1}),
(\ref{pi}) and (\ref{pj}).
The omitted steps in the algebraic manipulation are shown in Appendix \ref{a_hausdorff}.
\begin{eqnarray}
d(p_i^*, p_j^*) &=  \sqrt{\sum_{l=1}^d (p_i^*(l) - p_j^*(l))^2} = \sqrt{\sum_{l=1}^d \left(\sum_{k\in V-\overline{F_v}[0]} \bfM^*_{ik} p_k(l) - \sum_{k\in V-\overline{F_v}[0]} \bfM^*_{jk} p_k(l)\right)^2}\nonumber\\
&\leq \alpha^t \sqrt{\sum_{l=1}^d \left(\sum_{k\in V-\overline{F_v}[0]} \|p_k(l)\|\right)^2}\label{eq:d_pipj1} \leq ~~ \alpha^t\Omega
\label{e_o}
\end{eqnarray} 
where
$\Omega = \max_{p_k \in h_k[0],k\in V-\overline{F_v}[0]} \sqrt{\sum_{l=1}^d (\sum_{k\in V-\overline{F_v}[0]} \|p_k(l)\|)^2}$.
Because the $h_k[0]$'s in the definition of $\Omega$ are all valid (by
Lemma \ref{lemma:valid_initial_hull}),
$\Omega$ can itself be upper bounded by a function of the input vectors at the fault-free
nodes.
In particular, under the assumption that each element of fault-free nodes' input vectors is
upper bounded by $U$ and lower bounded by $\mu$, $\Omega$ is upper
bounded by $\sqrt{dn^2\max(U^2,\mu^2)}$.
Observe that the upper bound on the right side of (\ref{e_o}) monotonically decreases with $t$, because $\alpha<1$.
Define $t_{end}$ as the smallest positive integer $t$
for which
\begin{eqnarray}\alpha^t\sqrt{dn^2\max(U^2,\mu^2)}<\epsilon \label{e_end} \end{eqnarray}
Recall that the algorithm terminates after $t_{end}$ rounds.
(\ref{e_o}) and (\ref{e_end}) together imply that, 
for fault-free $i,j$, for each point $p_i^*\in h_i[t_{end}]$
there exists a point $p_j^*[t]\in h_j[t_{end}]$ such that $d(p^*_i,p^*_j)<\epsilon$ (and, similarly, vice-versa).
Thus, by Definition \ref{def:dist}, 
Hausdorff distance $\D(h_i[t_{end}],h_j[t_{end}]) <\epsilon$.
Since this holds true for any pair of fault-free nodes
$i,j$, the $\epsilon$-agreement property is satisfied at termination.
\end{proof}

\section{Optimality of {\em Optimal Verified Averaging}}
\label{s_size}

Due to the {\em Fault-free Containment} property of {\em Stable Vector}, all fault-free nodes share at least $(n-f)$ messages in $\R^c_i[-1]$ (see lines 2-3). Let $Z$ denote the set of these shared messages, that is,
\begin{eqnarray}
Z &:=& \cap_{j\in V-F} \R^c_j[-1]
\label{e_Z}
\end{eqnarray}
Define $X_Z =: \{x~|~(x, k, -1) \in Z\}$. Then, define a convex polytope $I_Z$ as follows.
\begin{eqnarray}
I_Z& := &\cap_{D \subset X_Z, |D| = |X_Z|-f}\, \HH(D)
\label{e_I_Z}
\end{eqnarray}

The following lemma establishes a ``lower bound'' on the convex polytope that the fault-free nodes decide on. Recall that $\overline{F_v}[0]$ is defined as all the faulty nodes that are not verified by any fault-free nodes in Round 1. The proof is presented in Appendix \ref{a_l:svSize}.

\begin{lemma}
\label{lemma:svSize}
For all $i\in V-\overline{F_v}[t]$ and $t \geq 0$,
 $I_Z \subseteq h_i[t]$.
\end{lemma}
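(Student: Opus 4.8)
The plan is to prove $I_Z \subseteq h_i[t]$ by induction on $t$, using Theorem~\ref{t_M} (more precisely, the argument in its proof) to control which nodes' round $t-1$ states feed into $h_i[t]$, together with two elementary facts: (a) enlarging a finite multiset of points can only shrink the intersection $\cap_{|D|=|X|-f}\HH(D)$, and (b) if convex polytopes $h_1,\dots,h_s$ all contain a convex set $g$, then $H_l(h_1,\dots,h_s;\frac1s,\dots,\frac1s)$ contains $g$.

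For the base case $t=0$, fix $i\in V-\overline{F_v}[0]$. First I would establish that $\R^c_i[0]$ is a collection of at least $n-f$ tuples $(\VV_j,j,-1)$ with distinct node identifiers $j$, each with $|\VV_j|\ge n-f$, and such that whenever $j$ is fault-free, $\VV_j=\R^c_j[-1]$ (hence $\VV_j\supseteq Z$). For fault-free $i$ this is immediate from $\Proceed(0)$, line~14, and Fault-Free Integrity; for a faulty node $i\in F_v[0]$ one uses that $\R^c_i[0]$ as defined in (\ref{e_faulty_V}) is the set $\VV$ of a message $((h,\VV),i,1)$ that some fault-free node $k$ reliably received and verified, so by line~10 we have $\R^c_i[0]=\VV\subseteq\R_k[0]$, and $\R_k[0]$ has exactly the stated structure. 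Among these tuples, at least $n-2f$ carry a fault-free index $j$. Consequently, for every $(x,m,-1)\in Z$ the count $N(x,m)$ computed inside $H(\R^c_i[0],0)$ is at least $n-2f$, which is $\ge f+1$ since $n\ge(d+2)f+1$; therefore $(x,m)\in Y$ and $x$ occurs in the multiset $X$, so $X_Z$ is a sub-multiset of $X$. It then remains to show $\cap_{D\subset X_Z,|D|=|X_Z|-f}\HH(D)\subseteq\cap_{C\subseteq X,|C|=|X|-f}\HH(C)$: given $C\subseteq X$ with $|C|=|X|-f$, at most $f$ elements of $X_Z$ are missing from $C$, so $C$ contains a sub-multiset $D\subseteq X_Z$ of size $|X_Z|-f$, and then $I_Z\subseteq\HH(D)\subseteq\HH(C)$; intersecting over all such $C$ yields $I_Z\subseteq h_i[0]$.

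For the inductive step $t\ge1$, assume $I_Z\subseteq h_k[t-1]$ for all $k\in V-\overline{F_v}[t-1]$; if $I_Z=\emptyset$ the claim is trivial, so assume $I_Z\neq\emptyset$ and fix $i\in V-\overline{F_v}[t]$. Here $h_i[t]=H(\R^c_i[t],t)=H_l(X;\frac1{|X|},\dots,\frac1{|X|})$ with $X=\{h\mid(h,j,t-1)\in\R^c_i[t]\}$. Exactly as argued in the proof of Theorem~\ref{t_M}, every tuple $(h,j,t-1)\in\R^c_i[t]$ has $j\in V-\overline{F_v}[t-1]$ and $h=h_j[t-1]$, so by the induction hypothesis $h\supseteq I_Z$ for every $h\in X$. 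Finally, for any $p\in I_Z$ we have $p\in h$ for all $h\in X$, so choosing $p_h=p$ for each $h$ in Definition~\ref{def:linear_hull} gives $p=\sum_{h\in X}\frac1{|X|}\,p\in H_l(X;\frac1{|X|},\dots,\frac1{|X|})=h_i[t]$; hence $I_Z\subseteq h_i[t]$.

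The main obstacle I expect is the base case, and within it the bookkeeping that establishes $X_Z\subseteq X$: one must carefully justify that $\R^c_i[0]$ — including for verified faulty nodes — is genuinely a set of at least $n-f$ distinctly-indexed tuples whose honest entries equal $\R^c_j[-1]\supseteq Z$, and that $n-2f\ge f+1$ under the standing assumption $n\ge(d+2)f+1$; the monotonicity of $\cap_{|D|=|X|-f}\HH(D)$ under enlarging $X$ also has to be handled carefully at the multiset level. Once those pieces are in place, both the base case and the inductive step reduce to the two elementary geometric facts noted above, with the inductive step being essentially routine given Theorem~\ref{t_M}.
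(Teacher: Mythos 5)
Your proof is correct, and it diverges from the paper only in the $t\geq 1$ part. For $t=0$ you reproduce essentially the paper's own argument: $\R^c_i[0]$ (for fault-free $i$ directly, for $i\in F_v[0]$ via the verifying fault-free node and the line-10 containment $\R^c_i[0]\subseteq\R_k[0]$) contains at least $n-2f\geq f+1$ distinctly-indexed tuples from fault-free nodes $j$ with $\VV_j=\R^c_j[-1]\supseteq Z$, so every element of $X_Z$ survives the $f+1$ threshold in $H(\cdot,0)$, and the multiset-level monotonicity of $\cap_{|C|=|X|-f}\HH(C)$ under enlarging $X$ gives $I_Z\subseteq h_i[0]$; this is exactly the paper's Observations 1--4. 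For $t\geq 1$ the paper instead reuses its transition-matrix machinery: it writes $h_i[t]=\bfM^*_i\vectorv[0]$ with $\bfM^*=\Pi_{\tau=1}^t\bfM[\tau]$, proves separately (Claim \ref{claim:M*}, by induction on the matrix product) that $\bfM^*_{ik}=0$ for $k\in\overline{F_v}[0]$, and then applies row stochasticity together with the same elementary fact you use, that a convex combination of $I_Z$-containing polytopes contains $I_Z$. You replace this with a direct induction on the round index, invoking the structural fact established inside the proof of Theorem \ref{t_M} that every tuple $(h,j,t-1)\in\R^c_i[t]$ with $i\in V-\overline{F_v}[t]$ satisfies $j\in V-\overline{F_v}[t-1]$ and $h=h_j[t-1]\neq\emptyset$. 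Both arguments are sound; yours is more elementary and self-contained (no matrix products, no Claim \ref{claim:M*}) and treats fault-free and verified-faulty $i$ uniformly, whereas the paper's closing observations are phrased only for fault-free $i$ and leave the extension to $F_v[t]$ implicit. What the paper's route buys is economy within Section \ref{s_correctness}: the product $\bfM^*$ and its properties are already needed for the $\epsilon$-agreement proof, so the lemma simply piggybacks on machinery that has been set up anyway.
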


Then, the following key theorem shows that the presented algorithm is optimal. The proof is presented in Appendix \ref{a_t:optSize}. This theorem closes an open question raised in our prior work \cite{Tseng_BCC}.

\begin{theorem}
\label{thm:optSize}
The output convex polytope at fault-free node $i$ using {\em Optimal Verified Averaging} is optimal as per Definition \ref{def:optSize}.
\end{theorem}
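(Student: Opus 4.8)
The plan is to obtain Theorem~\ref{thm:optSize} by pairing Lemma~\ref{lemma:svSize} with a matching ``upper bound'' on any competing algorithm, and then quoting Definition~\ref{def:optSize}. Lemma~\ref{lemma:svSize} gives one direction for free: at termination, the output $h_i[t_{end}]$ of \emph{Optimal Verified Averaging} at every fault-free node $i\in V-\overline{F_v}[t_{end}]$ contains $I_Z$. It therefore suffices to construct a single execution in which $I_Z$ is simultaneously the \emph{exact} output of \emph{Optimal Verified Averaging} at every fault-free node and an upper bound on the output of an arbitrary BCC algorithm $B$ at every fault-free node; in that execution $B$'s output is contained in $I_Z$, which is contained in \emph{Optimal Verified Averaging}'s output, which is exactly what Definition~\ref{def:optSize} asks for.

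For the construction I would use the (no-fault) execution in which the $n$ nodes are partitioned into blocks each of size at least $f+1$ (possible since $n\ge f+1$; more and richer blocks, hence a higher-dimensional $I_Z$, are available as $n$ grows relative to $f$), all nodes of a block are assigned the same $d$-dimensional input value, and message delays are chosen so that $\SVRecv(-1)$ returns the set of all $n$ round-$(-1)$ messages to every node. Then $\R^c_j[-1]$ is the same at every node, the common core $Z$ from (\ref{e_Z}) consists of all $n$ input messages, and $X_Z$ is the block-structured multiset of all $n$ inputs. Since every input value occurs in $X_Z$ with multiplicity at least $f+1$, deleting any $f$ elements of $X_Z$ leaves every value present, so $\HH(D)=\HH(X_Z)$ for every $D\subset X_Z$ with $|D|=|X_Z|-f$; hence, by (\ref{e_I_Z}), $I_Z=\HH(X_Z)$ equals the convex hull of the inputs at the fault-free nodes. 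In this execution $\overline{F_v}[t]=\emptyset$ for all $t$, so Lemma~\ref{lemma:svSize} gives $I_Z\subseteq h_i[t_{end}]$ at every node $i$, while validity (Lemmas~\ref{lemma:valid_initial_hull} and~\ref{lemma:linear_valid}, as used in Theorem~\ref{thm:correctness}) gives $h_i[t_{end}]\subseteq\HH(\text{fault-free inputs})=I_Z$; thus $h_i[t_{end}]=I_Z$. Running $B$ in the very same execution, its termination property produces an output $P_i$ at every fault-free node $i$, and validity of $B$ forces $P_i\subseteq\HH(\text{fault-free inputs})=I_Z$. Hence $P_i\subseteq I_Z=h_i[t_{end}]$ at every fault-free node, and Definition~\ref{def:optSize} is satisfied.

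The step I expect to be the main obstacle is extending this to a general (non-block) input configuration, which the block structure is designed to sidestep. There one argues by indistinguishability: fixing a fault-free node $i$, one reinterprets the $B$-execution under each admissible way of declaring $f$ of the nodes that contribute to $Z$ to be faulty, keeping the view of node $i$ unchanged so that its output remains $P_i$, and then intersects the validity constraints over all these reinterpreted executions. The delicacy is that each such reinterpretation still leaves node $i$'s own input inside the fault-free convex hull, so a priori the intersection is only some superset of $I_Z$; one must choose the input configuration so that the deletion of any $f$ points still cannot remove the ``support'' of $I_Z$ (the block construction does this by giving every value multiplicity at least $f+1$), in order to make the intersection collapse to exactly $I_Z$. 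Two routine points also need checking: that the crafted round-$(-1)$ delay schedule is allowed by the stated guarantees of \emph{reliable broadcast} and \emph{stable vector}, and that $n\ge(d+2)f+1$ keeps $I_Z$ non-empty so that the containment $P_i\subseteq I_Z$ is not vacuous.
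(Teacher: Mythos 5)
Your lower-bound half (using Lemma \ref{lemma:svSize} to get $I_Z\subseteq h_i[t_{end}]$ at every fault-free node) is the same as the paper's, and your block construction is internally sound for the one scenario it builds. The gap is in what Definition \ref{def:optSize} actually demands: the existential quantifier there ranges only over the behavior of the faulty nodes and the message delay pattern, \emph{not} over the inputs of the fault-free nodes, and the claim announced in Section \ref{s_intro} is that for the inputs of the given execution every correct BCC algorithm $B$ can be forced to decide inside $I_Z$. You prove the containment only for an input configuration you yourself choose (blocks of at least $f+1$ identical inputs, no faults, all round-$(-1)$ messages delivered), and you explicitly defer the general-input case as ``the main obstacle'' without closing it. If choosing the fault-free inputs were permitted, the definition would collapse: give all fault-free nodes the same input $p$; Validity then forces every correct BCC algorithm to output the single point $p$, so \emph{every} correct algorithm would be ``optimal,'' contrary to the paper's point that the algorithm of \cite{Tseng_BCC} is not. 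So the case you postpone is exactly the content of the theorem, and your proposal does not establish it.

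The paper closes that case with an adversarial construction your sketch lacks. Fix an arbitrary execution of {\em Optimal Verified Averaging}; let $V_Z$ be the set of fault-free nodes whose round-$(-1)$ values appear in $X_Z$, and let $S=V-F-V_Z$, so $|S|\leq f$. Run any algorithm $B$ with the same inputs, let the faulty nodes follow $B$ correctly except for choosing bad (but legal) inputs, and delay every node of $V-V_Z$ --- in particular the up to $f$ \emph{fault-free} nodes of $S$ --- so long that the nodes of $V_Z$ must terminate before hearing from them; they cannot distinguish slow from crashed. Their decision can therefore depend only on the values in $X_Z$, and since any $f$ of the contributors of $X_Z$ may be Byzantine with adversarial inputs, Validity over the indistinguishable executions forces $B$'s output into $\cap_{D\subset X_Z,\,|D|=|X_Z|-f}\,\HH(D)=I_Z\subseteq h_i[t_{end}]$. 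The essential ingredient you are missing is making up to $f$ fault-free nodes appear crashed; merely reinterpreting $f$ of the received values as faulty, as in your fallback sketch, runs into precisely the ``own input'' difficulty you point out (a real subtlety, which the paper's argument itself treats only informally), and choosing the input configuration to evade it is not an option the definition gives you.
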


\section{Summary}

This paper considers Byzantine Convex Consensus (BCC), wherein each node has a $d$-dimensional vector as its input, and each fault-free node should agree on an output polytope that is in the convex hull of
the input vectors at the fault-free nodes. We present an asynchronous approximate BCC algorithm
with optimal fault tolerance that reaches consensus
on an {\em optimal} output polytope.



\begin{thebibliography}{10}

\bibitem{abraham_04_3t+1_async}
I.~Abraham, Y.~Amit, and D.~Dolev.
\newblock Optimal resilience asynchronous approximate agreement.
\newblock In {\em OPODIS}, pages 229--239, 2004.

\bibitem{renaming_JACM}
H. Attiya, A. Bar-Noy, D. Dolev, D. Peleg, and R. Reischuk. 
\newblock Renaming in an Asynchronous Environment. 
\newblock Journal of the ACM, July 1990.

\bibitem{Welch_textbook}
H.~Attiya and J.~Welch.
\newblock {\em Distributed Computing: Fundamentals, Simulations, and Advanced
  Topics}.
\newblock Wiley Series on Parallel and Distributed Computing, 2004.

\bibitem{AA_Dolev_1986}
D.~Dolev, N.~A. Lynch, S.~S. Pinter, E.~W. Stark, and W.~E. Weihl.
\newblock Reaching approximate agreement in the presence of faults.
\newblock {\em J. ACM}, 33:499--516, May 1986.

\bibitem{FLP_one_crash}
M.~J. Fischer, N.~A. Lynch, and M.~S. Paterson.
\newblock Impossibility of distributed consensus with one faulty process.
\newblock {\em J. ACM}, 32:374--382, April 1985.

\bibitem{Hajnal58}
J.~Hajnal.
\newblock Weak ergodicity in non-homogeneous markov chains.
\newblock In {\em Proceedings of the Cambridge Philosophical Society},
  volume~54, pages 233--246, 1958.

\bibitem{Hausdorff}
D.~Huttenlocher, G.~Klanderman, and W.~Rucklidge.
\newblock Comparing images using the Hausdorff distance.
\newblock {\em IEEE Transactions on Pattern Analysis and Machine Intelligence},
 15(9):850--863, 1993.

\bibitem{AA_nancy}
N.~A. Lynch.
\newblock {\em Distributed Algorithms}.
\newblock Morgan Kaufmann, 1996.

\bibitem{herlihy_multi-dimension_AA}
H.~Mendes and M.~Herlihy.
\newblock Multidimensional approximate agreement in byzantine asynchronous
  systems.
\newblock In {\em ACM Symposium on Theory of Computing (STOC)}, 2013.

\bibitem{herlihy_colorless_async}
H.~Mendes, C.~Tasson, and M.~Herlihy.
\newblock The topology of asynchronous byzantine colorless tasks.
\newblock {\em CoRR}, abs/1302.6224, 2013.

\bibitem{tverberg}
M.~A. Perles and M.~Sigorn.
\newblock A generalization of Tverberg's theorem.
\newblock {\em CoRR}, abs/0710.4668, 2007.

\bibitem{Tseng_BCC}
L. Tseng and N.~H. Vaidya.
\newblock Byzantine Convex Consensus: Preliminary Version.
\newblock {\em CoRR}, abs/1307.1051, 2013.

\bibitem{Vaidya_BVC}
N.~H. Vaidya and V.~K. Garg.
\newblock Byzantine vector consensus in complete graphs.
\newblock {\em CoRR}, abs/1302.2543, 2013. To appear at ACM PODC 2013.

\bibitem{Wolfowitz}
J.~Wolfowitz.
\newblock Products of indecomposable, aperiodic, stochastic matrices.
\newblock In {\em Proceedings of the American Mathematical Society}, volume~14,
  pages 733--737, 1963.

\end{thebibliography}

\clearpage
\appendix

\section{Notations}
\label{app_s_notations}

This appendix summarizes some of the notations and terminology introduced throughout the paper.

\begin{itemize}
\item $n =$ number of nodes. We assume that $n\geq 2$.
\item $f =$ maximum number of Byzantine nodes.
\item $\sv = \{1, 2, \cdots, n\}$ is the set of all nodes.
\item $d = $ dimension of the input vector at each node.
\item $d(p, q) = $ the function returns the Euclidean distance between points $p$ and $q$.
\item $d_H(h_1, h_2) = $ the Hausdorff distance between convex polytopes $h_1, h_2$.
\item $\HH(C) = $ the convex hull of a multiset $C$.
\item $H_l(h_1, h_2, \cdots, h_k;~c_1, c_2, \cdots, c_k)$, defined in Section \ref{s_ops},
is a linear combination of convex polytopes $h_1, h_2, ..., h_k$ with weights $c_1,c_2,\cdots,c_k$.
\item $H(\VV,t)$ is a function defined in Section \ref{s_ops}.
\item $| X | = $ the size of a {\em multiset} or {\em set} $X$.
\item $\| a \| = $ the absolute value of a real number $a$.
\item $F$ denotes the {\em actual} set of faulty nodes in an execution of the algorithm.
\item $\phi=|F|$. Thus, $0\leq \phi\leq f$.
\item $F_v[t]$, $t\geq 0$, denotes the set of faulty nodes whose round $t$ execution is verified by at least one fault-free node, as per Definition \ref{d_verified}.
\item $\overline{F_v}[t]=F-F_v[t]$, $t\geq 0$.
\item $\alpha=1-\frac{1}{n}$.
\item 
We use boldface upper case letters to denote matrices, rows of matrices, and their elements. For instance, $\bfA$ denotes a matrix, $\bfA_i$ denotes the $i$-th row of matrix $\bfA$, and $\bfA_{ij}$ denotes the element at the intersection of the $i$-th row and the $j$-th column of matrix $\bfA$.

\end{itemize}

\section{$H_l(h_1, h_2,\cdots, h_\nu;~c_1, c_2,\cdots, c_\nu)$ is Convex}
\label{a:linear_convex}

\begin{claim}
$H_l(h_1, h_2,\cdots, h_\nu;~c_1, c_2,\cdots, c_\nu)$ defined in Definition \ref{def:linear_hull}
is convex.
\end{claim}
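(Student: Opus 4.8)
The plan is to prove that $H_l(h_1,\dots,h_\nu;c_1,\dots,c_\nu)$ is convex directly from Definition \ref{def:linear_hull}, by taking two arbitrary points in the set and showing that any convex combination of them also lies in the set. Let $Q=\{i \mid c_i\neq 0\}$ as in the definition. Suppose $p,q\in H_l(h_1,\dots,h_\nu;c_1,\dots,c_\nu)$; then for each $i\in Q$ there exist $p_i\in h_i$ and $q_i\in h_i$ such that $p=\sum_{i\in Q}c_i p_i$ and $q=\sum_{i\in Q}c_i q_i$. Fix $\theta\in[0,1]$ and consider the point $\theta p+(1-\theta)q$.

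The key step is a straightforward rearrangement:
\[
\theta p+(1-\theta)q=\theta\sum_{i\in Q}c_i p_i+(1-\theta)\sum_{i\in Q}c_i q_i=\sum_{i\in Q}c_i\bigl(\theta p_i+(1-\theta)q_i\bigr).
\]
For each $i\in Q$, since $c_i\neq 0$ condition (ii) of Definition \ref{def:linear_hull} guarantees $h_i\neq\emptyset$, and both $p_i$ and $q_i$ lie in $h_i$; because $h_i$ is convex, the point $r_i:=\theta p_i+(1-\theta)q_i$ also lies in $h_i$. Thus $\theta p+(1-\theta)q=\sum_{i\in Q}c_i r_i$ with $r_i\in h_i$ for every $i\in Q$, which is exactly the membership condition \eqref{eq:linear_hull}. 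Hence $\theta p+(1-\theta)q\in H_l(h_1,\dots,h_\nu;c_1,\dots,c_\nu)$, establishing convexity.

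Strictly speaking, one should also observe that the set is nonempty whenever it is meant to be used (so that "convex" is not vacuous), which follows because each $h_i$ with $i\in Q$ is nonempty by hypothesis, so picking any $p_i\in h_i$ yields a point $\sum_{i\in Q}c_i p_i$ in the set; and one may remark that it is in fact a polytope, not just a convex set, since it is the image of the (bounded) polytope $\prod_{i\in Q}h_i$ under the linear map $(p_i)_{i\in Q}\mapsto\sum_{i\in Q}c_i p_i$, and linear images of polytopes are polytopes. I do not anticipate a real obstacle here — the only mild subtlety is bookkeeping with the index set $Q$ (ensuring we never invoke an empty $h_i$), which condition (ii) of the definition handles cleanly; the argument is otherwise a one-line convex-combination manipulation.
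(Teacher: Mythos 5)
Your proof is correct and follows essentially the same argument as the paper: take two points of $H_l$, rearrange their convex combination as $\sum_{i\in Q}c_i\bigl(\theta p_i+(1-\theta)q_i\bigr)$, and invoke convexity of each $h_i$. The extra remarks on nonemptiness and on the set being a polytope are fine but not needed for the claim as stated.
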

\begin{proof}

The proof is straightforward.

Let
\[ h_L := H_l(h_1, h_2,\cdots, h_\nu;~c_1, c_2,\cdots, c_\nu)\]
and
\[Q := \{ i ~|~ c_i\neq 0, ~ 1\leq i\leq \nu \}.\]
Given any two points $x, y$ in $h_L$, by Definition \ref{def:linear_hull}, we have
\begin{equation}
\label{eq:x2}
x = \sum_{i \in Q} c_i p_{(i, x)}~~\text{for some}~p_{(i,x)} \in h_i,~~i \in Q
\end{equation}

and 

\begin{equation}
\label{eq:y2}
y = \sum_{i \in Q} c_i p_{(i,y)}~~\text{for some}~p_{(i,y)} \in h_i,~~i\in Q
\end{equation}

Now, we show that any convex combination of $x$ and $y$ is also in $h_L$. Consider a point $z$ such that

\begin{equation}
\label{eq:z2}
z = \theta x + (1 - \theta) y~~~\text{where}~0 \leq \theta \leq 1
\end{equation}

Substituting (\ref{eq:x2}) and (\ref{eq:y2}) into (\ref{eq:z2}), we have

\begin{align}
z &= \theta \sum_{i\in Q}~c_i p_{(i,x)} + (1-\theta) \sum_{i \in Q}~c_i p_{(i,y)}\nonumber\\
&= \sum_{i\in Q}~c_i\left(\theta p_{(i,x)} + (1-\theta) p_{(i,y)}\right) \label{eq:p_iz2}
\end{align}

Define $p_{(i,z)} = \theta p_{(i,x)} + (1-\theta) p_{(i,y)}$ for all $i\in Q$. Since $h_i$ is convex, and $p_{(i,z)}$ is a convex combination of $p_{(i,x)}$ and $p_{(i,y)}$, $p_{(i,z)}$ is also in $h_i$. Substituting the definition of $p_{(i,z)}$ in (\ref{eq:p_iz2}), we have

\begin{align*}
z &=  \sum_{i \in Q}~~c_i~p_{(i,z)}~~\text{where}~p_{(i,z)} \in h_i,~~i\in Q
\end{align*}

Hence, by Definition \ref{def:linear_hull}, $z$ is also in $h_L$. Therefore, $h_L$ is convex.

\end{proof}


\section{Claim \ref{c_common_verify}}
\label{a_claim_common_verify}

\begin{claim}
\label{c_common_verify}
Consider fault-free nodes $i, j\in V-F$. Then

\begin{itemize}
\item 
If $(h,k,-1)\in \R_i[-1]$ at some point of time, then eventually, $(h,k,-1)\in\R_j[-1]$.

\item
For $t\geq 0$, if $(h,k,t-1)\in \R_i[t]$ at some point of time, then eventually $(h,k,t-1)\in \R_j[t]$.

\end{itemize}
\end{claim}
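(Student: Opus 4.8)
The plan is to prove both parts by a joint induction on the round index $t$, relying on the \emph{Global Liveness} property of reliable broadcast and on the way the sets $\R_i[t]$ are populated in the algorithm. The two bullets are coupled: showing that round-$t$ elements propagate between fault-free nodes requires knowing that the round-$(t-1)$ certificates already propagated, so the round-$(t-1)$ statement must be available as an induction hypothesis.

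For the base case ($t=-1$, the first bullet), suppose $(h,k,-1)\in\R_i[-1]$ at node $i$. This tuple was placed there either by $\SVRecv(-1)$ at line 2 or by an $\RBRecv(h,k,-1)$ event at line 7; in either case, by the conventions in Section \ref{s_ops}, node $i$ reliably received $(h,k,-1)$. By \emph{Global Liveness}, node $j$ will eventually reliably receive $(h,k,-1)$ as well — either through its own $\SVRecv(-1)$ or through a later $\RBRecv$ event — and in both cases the handler (line 2/3 or line 7) inserts $(h,k,-1)$ into $\R_j[-1]$. Hence eventually $(h,k,-1)\in\R_j[-1]$.

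For the inductive step, assume the claim holds for round $t-1$, and suppose $(h,k,t-1)\in\R_i[t]$ at some time. This tuple was added by $\Add$ at line 12, which means node $i$ reliably received a message $((h,\VV),k,t)$ from node $k$ (for some $\VV$) and $\Verify(h,\VV,k,t)$ returned \TRUE, which in particular required $\VV\subseteq\R_i[t-1]$ to have held at line 10. By \emph{Global Liveness}, node $j$ will eventually reliably receive the same message $((h,\VV),k,t)$ (and by \emph{Global Uniqueness} it is the \emph{same} $(h,\VV)$). The remaining obligation is to show that node $j$'s copy of the wait condition at line 10, namely $\VV\subseteq\R_j[t-1]$, eventually becomes true, and that $j$'s $\Verify$ call then also returns \TRUE. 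The first follows from the induction hypothesis applied elementwise: every tuple in $\VV$ lies in $\R_i[t-1]$, hence eventually in $\R_j[t-1]$, and $\VV$ is finite so eventually all of $\VV$ is contained in $\R_j[t-1]$. The second follows because the checks in $\Verify$ ($|\VV|\ge n-f$, $h=H(\VV,t-1)$, and for $t\ge 2$ also $(*,k,t-2)\in\VV$) depend only on $h$, $\VV$, $k$, $t$, which are identical at both nodes; so $\Verify$ returns \TRUE at node $j$ as well, and line 12 inserts $(h,k,t-1)$ into $\R_j[t]$.

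The main obstacle is the interplay between the blocking wait at line 10 and the fact that $\R_j[t-1]$ keeps growing even after node $j$ has advanced past round $t-1$: one must argue that ``eventually $\VV\subseteq\R_j[t-1]$'' really does hold, using that $\R_j[t-1]$ is never truncated (only $\R^c_j[t-1]$ is frozen, at line 14) and that each required tuple arrives by the induction hypothesis. A minor subtlety is handling $k\in F$ (a faulty sender): the argument above never assumes $k$ is fault-free — it only uses that \emph{some} fault-free node $i$ received and verified $k$'s message, and \emph{Global Uniqueness}/\emph{Global Liveness} then carry the identical message to $j$ — so the claim holds even when $k$ is Byzantine.
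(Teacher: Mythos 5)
Your proposal is correct and follows essentially the same route as the paper's own proof: a direct Global-Liveness argument for round $-1$, then induction on the round index where the hypothesis for round $t-1$ guarantees the wait condition $\VV\subseteq\R_j[t-1]$ eventually clears, and the deterministic checks in $\Verify$ (together with Global Uniqueness/Liveness delivering the identical message, even from a faulty sender $k$) then force $(h,k,t-1)$ into $\R_j[t]$. No gaps.
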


\begin{proof}

\noindent
{\bf First Part:}

In the preliminary round ($t = -1$),
node $i$ adds $(h,k)$ to $\R_i[-1]$ 
whenever it reliably receives message $(h,k,-1)$, i.e., $(h,k,-1)$ is either received by using stable vector or $\RBRecv(h,k,-1)$ occurred. (For messages in the preliminary round, $h$ is just
a single point.)
Then by {\em Global Liveness} property,
node $j$ will eventually reliably receive the same message, and add $(h,k,-1)$ to $\R_j[-1]$.

~

\noindent
{\bf Second Part:}

The proof is by induction.

{\em Induction basis:}
Suppose that in round $t = 0,$ at some real time
$\mu$, $(h,k,-1)\in \R_i[0]$.
Thus, node $i$ must have reliably received (at line 9 of round 0)
a message of the form $((h,\VV),k,-1)$ such that the following
conditions are true at time $\mu$:
\begin{itemize}
\item Condition 1: $\VV\subseteq \R_i[-1]$  (due to line 10, and the fact that $\R_i[-1]$ can
	only grow with time)
\item Condition 2: $|\VV| \geq n-f$ (due to Case $t=0$ in Procedure \Verify)
\end{itemize}
The {\em Global Liveness} property
implies that eventually node $j$ will also reliably receive the
message $((h,\VV),k,0)$ that was reliably received by node $i$.
Also, the correctness of the first part implies 
that eventually each element of $\R_i[-1]$ will be included
in $\R_j[-1]$. Thus, because $\VV\subseteq \R_i[-1]$ at time $\mu$, eventually $\VV\subseteq \R_j[-1]$.
As in Condition 2 above, node $j$ will also find that $|\VV| \geq n-f$.
Therefore, by lines 10-12, it follows that eventually
$(h,k,-1)\in \R_j[0]$.

{\em Induction:}
Consider round $t\geq 1$.
Assume that the second part of the lemma holds true through
rounds $t-1$. Therefore, if $(h,k, t-2)\in \R_i[t-1]$ at some point of time, then eventually
$(h,k, t-2)\in \R_j[t-1]$.

Now we will prove that the second part of the lemma holds for round $t$.
Suppose that at some time
$\mu$, $(h,k,t-1)\in \R_i[t]$.
Thus, node $i$ must have reliably received (at line 9 of round $t$)
a message of the form $((h,\VV),k,t)$ such that the following
conditions are true at time $\mu$:
\begin{itemize}
\item Condition 1: $\VV\subseteq \R_i[t-1]$  (due to 10, and the fact that $\R_i[t-1]$ can
	only grow with time)
\item Condition 2.1: when $t = 1$, $|\VV| \geq n-f$ and $h=H(\VV,0)$ (due to Case $t=1$ in Procedure \Verify)

\item Condition 2.2: when $t \geq 2$, $|\VV| \geq n-f$,
$h=H(\VV,t-1)$, and $(*,k,t-2)\in \VV$
 (due to Case $t\geq 2$ in Procedure \Verify)
\end{itemize}
The correctness of the second part of the lemma through round $t-1$ implies 
that eventually each element of $\R_i[t-1]$ will be included
in $\R_j[t-1]$. Thus, because $\VV\subseteq \R_i[t-1]$ at time $\mu$, eventually $\VV\subseteq \R_j[t-1]$.
Also, the {\em Global Liveness} property 
implies that eventually node $j$ will reliably receive the
message $((h,\VV),k,t)$ that was reliably received by node $i$; then, consider two cases:

\begin{itemize}
\item When $t = 1$, as in Condition 2.1 above, node $j$ will also find that
$|\VV| \geq n-f$, and $h=H(\VV,0)$.
Therefore, by lines 10-12, it follows that eventually
$(h,k,t-1)\in \R_j[t]$.

\item When $r \geq 2$, as in Condition 2.2 above, node $j$ will also find that
$|\VV| \geq n-f$, $(*,k,t-2)\in \VV$ and $h=H(\VV,t-1)$.
Therefore, by lines 10-12, it follows that eventually
$(h,k,t-1)\in \R_j[t]$.

\end{itemize}

Therefore, the proof for the second part is complete.

\end{proof}

\section{Proof of Lemma \ref{l_progress}}
\label{a_lemma_progress}

\noindent
{\bf Lemma \ref{l_progress}:}
{\em
Optimal Verified Averaging ensures {\em progress}: (i) all the fault-free nodes will eventually progress to round 0; and, (ii)
if all the fault-free nodes progress to the start of round $t$, $t \geq 0$, then all the
fault-free nodes will eventually progress to the start of round $t+1$.\\
}

\begin{proof}
\noindent
{\bf First Part:}

By assumption, all fault-free nodes begin the preliminary round eventually, and perform reliable broadcast
of their input (line 1).
Since the $(n-f)$ fault-free nodes follow the algorithm correctly,
$\SVRecv(-1)$ will eventually return (line 2). Therefore, node $i$ will eventually proceed to round 0 (line 5).


~
\noindent
{\bf Second Part:}

The proof is by induction.
By the first part, each fault-free node $i$ begins round 0 eventually, and performs reliable broadcast 
of $((h_i[-1],\R^c_i[-1]),i,0)$
on line 8.
Consider fault-free nodes $i,j$. 
By {\em Fault-Free Liveness} property of the primitives,
node $i$ will eventually reliable receive message $((h_j[-1],\R^c_j[-1]),j,0)$ from fault-free node $j$. By Claim \ref{c_common_verify}, eventually, $\R^c_j[-1] \subseteq \R_i[-1]$; therefore, node $i$ will progress past line 10. Moreover, since node $j$ is fault-free, it follows the algorithm specification correctly. Therefore, \Verify will return \TRUE, and node $i$ will eventually include $(\R^c_j[-1],j,-1)$. Since the above argument holds for all fault-free nodes $i,j$, it implies that
each fault-free node $i$ eventually adds $(\R^c_j[-1],j,-1)$ to $\R_i[0]$, for each fault-free node $j$ (including $j=i$). Therefore, at each fault-free node $i$,
eventually,
$|\R_i[0]|\geq n-f$, thus satisfying the condition
in Case $t=0$ of Procedure \Proceed.
Thus, Procedure \Proceed will return \TRUE, and each fault-free node $i$ will eventually proceed to round 1 (lines 13-16).

Now we assume that all the fault-free nodes have progressed to the start of round $t$,
where $t\geq 1$, and prove that all the fault-free nodes will eventually progress
to the start of round $t+1$.

Consider fault-free nodes $i,j\in V-F$.
At line 8 of round $t$, fault-free node $j$ performs reliable broadcast of $((h_j[t-1],\R^c_j[t-1]),j,t)$.
By {\em Fault-free Liveness} of reliable broadcast, fault-free node $i$ will eventually
reliably receive message $((h_j[t-1],\R^c_j[t-1]),j,t)$ 
from fault-free node $j$.
By Claim \ref{c_common_verify}, eventually 
$\R^c_j[t-1] \subseteq \R_i[t-1]$; therefore, node $i$ will progress past line 10
in the handler for message $((h_j[t-1],\R^c_j[t-1]),j,t)$.
Moreover, since node $j$ is fault-free, it follows the algorithm
specification correctly. Therefore, Procedure \Verify will return \TRUE in the handler
at node $i$ for message $((h_j[t-1],\R^c_j[t-1]),j,t)$ will all be correct.
Therefore, by lines 11-12, node $i$ will eventually include $(h_j[t-1],j,t-1)$ in $\R_i[t]$.
Since the above argument holds for all fault-free nodes $i,j$, it implies that
each fault-free node $i$ eventually adds $(h_j[t-1],j,t-1)$ to $\R_i[t]$, for each
fault-free node $j$ (including $j=i$). Therefore, at each fault-free node $i$,
eventually,
$|\R_i[t]|\geq n-f$,
and $(h_i[t-1],i,t-1)\in \R_i[t]$ (because the previous statement holds for $j=i$ too), thus satisfying both the conditions
in Case $t\geq 1$ of Procedure \Proceed.
Thus, Procedure \Proceed will return \TRUE, and each fault-free node $i$ will eventually proceed to round $t+1$ (lines 13-16).

\end{proof}

\section{Claims \ref{c_ver} and  \ref{claim:j_verified}}
\label{a_claims}

\begin{claim}
\label{c_ver}
If faulty node $i$'s round $t$ execution is verified by a fault-free node $j$, then
the following statements hold:


(i) For $t\geq 0$, $\R^c_i[t]\geq n-f$ and $h_i[t]=H(\R^c_i[t],t)$,

(ii) For $t\geq 0$, eventually $\R^c_i[t]\subseteq \R_j[t]$, and

(iii) For $t\geq 1$, node $i$'s round $t-1$ execution is also verified by node $j$.
\end{claim}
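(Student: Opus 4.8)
The plan is to unpack Definition~\ref{d_verified} applied to the hypothesis, then read parts~(i) and~(ii) directly off the pseudocode of Procedure $\Verify$ and of line~10, and finally derive part~(iii) from part~(ii) together with a ``provenance'' argument tracking how a tuple can enter $\R_j[t]$. Concretely, since faulty node $i$'s round $t$ execution is verified by fault-free node $j$, by Definition~\ref{d_verified} node $j$ eventually reliably receives a message $((h,\VV),i,t+1)$ from $i$ and, in the handler for that message, adds $(h,i,t)$ to $\R_j[t+1]$ at line~12; by (\ref{e_faulty_h})--(\ref{e_faulty_V}) this $h$ and $\VV$ are exactly $h_i[t]$ and $\R^c_i[t]$.

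For part~(i): reaching line~12 requires $\Verify(h,\VV,i,t+1)$ to return \TRUE\ at line~11. If $t=0$ the relevant branch of $\Verify$ (for messages with round index $1$) forces $|\VV|\geq n-f$ and $h=H(\VV,0)$; if $t\geq 1$ the relevant branch (round index $\geq 2$) forces $|\VV|\geq n-f$, $h=H(\VV,t)$, and $(*,i,t-1)\in\VV$. Either way $|\R^c_i[t]|=|\VV|\geq n-f$ and $h_i[t]=h=H(\R^c_i[t],t)$, which is part~(i) (and is consistent with (\ref{e_faulty_H})). For part~(ii): node $j$'s handler for the round $t+1$ message $((h,\VV),i,t+1)$ gets past line~10 only once $\VV\subseteq\R_j[t]$; since $\R^c_i[t]=\VV$ and $\R_j[t]$ never shrinks, eventually $\R^c_i[t]\subseteq\R_j[t]$.

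For part~(iii), assume $t\geq 1$, so the round-index-$\geq 2$ branch of $\Verify$ was the one used above, giving $(*,i,t-1)\in\VV=\R^c_i[t]$; fix a polytope $h'$ with $(h',i,t-1)\in\R^c_i[t]$. By part~(ii), $(h',i,t-1)$ eventually lies in $\R_j[t]$. I would then argue that, for $t\geq 0$, the set $\R_j[t]$ is modified only at line~12 (via $\Add$), and that for $t\geq 1$ the $\Add$ procedure inserts exactly one tuple, of the form $(\text{polytope},\,\text{sender id},\,t-1)$, during the handler for a reliably received round $t$ message. Hence $(h',i,t-1)$ can only have entered $\R_j[t]$ because node $j$ reliably received a round $t$ message $((h',\cdot),i,t)$ from node $i$ and added $(h',i,t-1)$ to $\R_j[t]$ at line~12 --- which is precisely the statement (Definition~\ref{d_verified}, with round $t-1$) that node $i$'s round $t-1$ execution is verified by node $j$.

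The step I expect to be the main obstacle is part~(iii): it rests on the purely syntactic facts that $\R_j[t]$ is never written outside line~12, that the tuple's node field recorded by $\Add$ equals the sender of the message being processed, and that a round $t$ message carries round $t-1$ polytopes --- so the round indices line up to match Definition~\ref{d_verified} for round $t-1$. These all follow from the pseudocode of $\Add$ and the handler at lines~9--16, but must be stated with care. The ``eventually'' quantifiers themselves need only the observation that each $\R_j[t]$ is monotone nondecreasing in time, so no delicate asynchrony reasoning is required.
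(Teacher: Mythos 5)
Your proposal is correct and follows essentially the same route as the paper's proof: unpack Definition~\ref{d_verified} to identify the round $t+1$ message, read (i) off the checks in $\Verify$ and (ii) off the wait at line~10, and get (iii) from $(*,i,t-1)\in\R^c_i[t]$ together with (ii) and Definition~\ref{d_verified}. The only difference is that you spell out the provenance step (tuples of the form $(\cdot,i,t-1)$ can enter $\R_j[t]$ only via line~12 upon reliably receiving a round $t$ message from $i$), which the paper leaves implicit; that is a harmless, indeed helpful, elaboration.
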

\begin{proof}
Let $t\geq 0$.
Suppose that node $i$'s round $t$ execution is verified by a fault-free node $j$.
In this case, we can use definitions (\ref{e_faulty_h}) and (\ref{e_faulty_V})
of $h_i[t]$ and $\R^c_i[t]$.
Definition \ref{d_verified} implies that node $j$ eventually 
reliably receives message $((h_i[t],\R^c_i[t]),i,t+1)$ from node $i$,
and subsequently adds (at line 12 in its round $t+1$) $(h_i[t],i,t)$ to $\R_j[t+1]$.
This implies that this message satisfies the checks done by node $j$ at lines 10 and 11:
Specifically,
(a) $|\R^c_i[t]|\geq n-f$,
(b) $h_i[t]=H(\R^c_i[t],t)$, and
(c) for $t \geq 1$, $(*,i,t-1)\in\R^c_i[t]$.
Also, by the time node $j$ adds $(h_i[t],i,t)$ to $\R_j[t+1]$, the condition
checked at line 10 also hold: specifically, $\R^c_i[t]\subseteq \R_j[t]$, proving
claim {\em (ii)} stated above.
Also, (a) and (b) above prove claim {\em (i)}.

For $t\geq 1$,
$(*,i,t-1)\in\R^c_i[t]$ and eventually $\R^c_i[t]\subseteq \R_j[t]$
together imply 
that eventually $(*,i,t-1)\in \R_j[t]$. Then
this observation together with 
Definition \ref{d_verified} imply that round $t-1$ execution of node $i$
is verified by node $j$. This proves claim {\em (iii)}.
\end{proof}

~

\begin{claim}
\label{claim:j_verified}
If faulty node $i$'s round $t$ execution is verified by a fault-free node $j$, $t \geq 0$,
then for all $r$ such that $0\leq r \leq t$, node $i$'s round $r$ execution is verified by node $j$.
\end{claim}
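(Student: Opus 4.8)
The plan is to prove the claim by downward induction on $r$, running from $r=t$ down to $r=0$, where each single step of the descent is handled entirely by part (iii) of Claim \ref{c_ver}. There is essentially no new content here: the claim simply iterates the ``one round back'' statement already established.

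For the base case $r=t$, the assertion is exactly the hypothesis of the claim, namely that node $i$'s round $t$ execution is verified by the fault-free node $j$. For the inductive step, I would assume that $1 \le r \le t$ and that node $i$'s round $r$ execution is verified by node $j$; since $r \ge 1$, Claim \ref{c_ver}(iii) applies and immediately gives that node $i$'s round $r-1$ execution is also verified by node $j$. Applying this descent repeatedly, starting from $r=t$ and decreasing one round at a time, yields that node $i$'s round $r$ execution is verified by node $j$ for every $r$ with $0 \le r \le t$.

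The one point that warrants care is the range of validity of Claim \ref{c_ver}(iii): it is stated only for round indices $\ge 1$. This is exactly why the induction bottoms out at $r=0$ and why no statement about the preliminary round ($r=-1$) is claimed or needed. Consequently I do not expect any genuine obstacle; the ``hard part'' was already done in the proof of Claim \ref{c_ver}, where the verification of round $t$ was shown to propagate $(*,i,t-1)$ into $\R_j[t]$ and hence, via Definition \ref{d_verified}, to witness verification of round $t-1$. The present claim is just the transitive closure of that one-step fact.
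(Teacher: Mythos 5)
Your proof is correct and matches the paper's own argument, which likewise handles the trivial case and then obtains the claim by repeated application of Claim \ref{c_ver}(iii). The explicit downward induction and the remark about the range of validity of Claim \ref{c_ver}(iii) (stopping at $r=0$) are just a more careful writing of the same idea.
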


\begin{proof}
The claim is trivially true for $t=0$.
The proof of the claim for $t>0$ follows by repeated application of Claim \ref{c_ver}(iii) above.

\end{proof}

\section{Proof of Lemma \ref{lemma:J_in_H0}}
\label{a_lemma_J}

The proof of Lemma \ref{lemma:J_in_H0} uses the following theorem by Tverberg \cite{tverberg}:
\begin{theorem}
\label{thm:tverberg}
(Tverberg's Theorem \cite{tverberg}) For any integer $f \geq 0$, for every multiset $Y$ containing at least $(d+1)f+1$ points in a $d$-dimensional space, there exists a partition
$Y_1, .., Y_{f+1}$ of $Y$ into $f+1$ non-empty multisets such that $\cap_{l=1}^{f+1} \HH(Y_l) \neq \emptyset$.
\end{theorem}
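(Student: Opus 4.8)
The plan is to prove Tverberg's theorem through Sarkaria's tensor-product reduction to the \emph{Colorful Carath\'eodory theorem} of B\'ar\'any: if $C_1,\dots,C_{N+1}$ are finite subsets of $\mathbb{R}^{N}$ and a point $p$ satisfies $p\in\HH(C_j)$ for every $j$, then there is a transversal $x_j\in C_j$ with $p\in\HH(\{x_1,\dots,x_{N+1}\})$. I would take this as the main engine, recalling that it has a short nearest-point-and-swap proof (among all ``colorful'' simplices pick one whose hull is closest to $p$; if the closest point is not $p$, a separating hyperplane together with $p\in\HH(C_j)$ for the offending color $j$ lets one swap in a point of $C_j$ to strictly decrease the distance, a contradiction). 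First I would reduce to the tight case: if $|Y|>(d+1)f+1$, take a sub-multiset $Y'$ of size exactly $(d+1)f+1$, partition $Y'$ as claimed, and append each leftover point of $Y$ to any one part; since enlarging a set only enlarges its convex hull, the common point survives and every part stays non-empty. So I assume $|Y|=(d+1)f+1$, write $Y=\{y_1,\dots,y_{N+1}\}$, and set $r:=f+1$ and $N:=(d+1)(r-1)=(d+1)f$, so that $|Y|=N+1$.

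Next comes the construction. I would lift each point to $\hat{y}_j:=(y_j,1)\in\mathbb{R}^{d+1}$ and fix vectors $v_1,\dots,v_r\in\mathbb{R}^{r-1}$ forming the vertices of a simplex centered at the origin, so that $\sum_{i=1}^r v_i=0$ while any $r-1$ of them are linearly independent; equivalently, the kernel of the map $(c_1,\dots,c_r)\mapsto\sum_i c_i v_i$ is exactly the line spanned by $(1,\dots,1)$. For each $j$ define the color class $C_j:=\{\hat{y}_j\otimes v_1,\dots,\hat{y}_j\otimes v_r\}\subseteq\mathbb{R}^{(d+1)(r-1)}=\mathbb{R}^{N}$. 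The origin lies in $\HH(C_j)$ for every $j$, because the equal-weight combination gives $\tfrac{1}{r}\sum_i \hat{y}_j\otimes v_i=\hat{y}_j\otimes\big(\tfrac{1}{r}\sum_i v_i\big)=0$. Since there are exactly $N+1$ color classes in $\mathbb{R}^N$ and $0$ lies in each hull, Colorful Carath\'eodory applies and yields a selection $\sigma(j)\in\{1,\dots,r\}$ together with weights $\lambda_j\ge 0$, $\sum_j\lambda_j=1$, such that $\sum_{j}\lambda_j\,(\hat{y}_j\otimes v_{\sigma(j)})=0$.

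Finally I would decode this relation to read off the partition and the common point. Grouping by color and writing $w_i:=\sum_{j:\sigma(j)=i}\lambda_j\,\hat{y}_j\in\mathbb{R}^{d+1}$ turns the relation into $\sum_{i=1}^r w_i\otimes v_i=0$. Expanding each $w_i$ in a basis of $\mathbb{R}^{d+1}$ and using independence of that basis splits this, coordinate by coordinate, into $d+1$ relations $\sum_i (w_i)_k\,v_i=0$ in $\mathbb{R}^{r-1}$; by the kernel property of the $v_i$, each relation forces $(w_1)_k=\cdots=(w_r)_k$, hence $w_1=\cdots=w_r=:w$. Reading the last coordinate, $\mu_i:=\sum_{j:\sigma(j)=i}\lambda_j$ is the same for all $i$ and sums to $1$, so $\mu_i=1/r>0$; in particular each part $Y_i:=\{y_j:\sigma(j)=i\}$ is non-empty. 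Reading the first $d$ coordinates and scaling by $r$, the point $p$ obtained from the first $d$ coordinates of $rw$ equals $\sum_{j:\sigma(j)=i}(r\lambda_j)\,y_j$, a convex combination of $Y_i$ (its coefficients are non-negative and sum to $r\mu_i=1$) that is independent of $i$. Hence $p\in\bigcap_{i=1}^{r}\HH(Y_i)$, and since $r=f+1$ this is exactly the claimed partition into $f+1$ non-empty parts with a common point.

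I expect the main obstacle to be the Colorful Carath\'eodory theorem itself, which carries essentially all of the geometric weight; granting it, the delicate remaining point is the tensor bookkeeping of the last paragraph, specifically the kernel argument that collapses $w_1=\cdots=w_r$, but that is routine linear algebra once the vectors $v_1,\dots,v_r$ are chosen with the stated independence property.
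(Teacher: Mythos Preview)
Your proof via Sarkaria's tensor reduction to the Colorful Carath\'eodory theorem is correct and is one of the standard modern proofs of Tverberg's theorem. However, the paper does not actually prove this statement: Theorem~\ref{thm:tverberg} is simply quoted from the literature (with citation \cite{tverberg}) as a known result and then invoked in the proof of Lemma~\ref{lemma:J_in_H0}. There is therefore no ``paper's own proof'' to compare against; you have supplied a complete argument where the paper intentionally omits one. If anything, your write-up goes well beyond what the paper requires.
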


~

\noindent
Now we prove Lemma \ref{lemma:J_in_H0}. \\

\noindent{\bf Lemma \ref{lemma:J_in_H0}:}
{\em
 For each node $i\in V-\overline{F_v}[0]$, the polytope $h_i[0]$ is non-empty. \\
}

\begin{proof}
%
Note that $V-\overline{F_v}[0]=(V-F)\cup F_v[0]$.
\begin{itemize}
\item
For a fault-free node $i\in V-F$, since it behaves correctly,
$|\R^c_i[0]|\geq n-f$ (due to the checks performed in \Verify), and
$h_i[0]=H(\R^c_i[0],0)$ (due to line 15).
\item
For faulty node $i\in F_v[0]$ as well, by Claim \ref{c_ver}(i)
in Appendix \ref{a_claims},
$|\R^c_i[0]|\geq n-f$ and
$h_i[0]=H(\R^c_i[0],0)$.
\end{itemize}
Thus, for each $i\in V-\overline{F_v}[0]$,
$|\R^c_i[0]|\geq n-f$ and
$h_i[0]=H(\R^c_i[0],0)$.

Consider any $i\in V-\overline{F_v}[0]$.
Consider the computation of polytope $h_i[0]$ as $H(\R^c_i[0],0)$.
By step 3 of Case $t=0$ in function $H$ in Section \ref{s_ops}, $|X|=|\R^c_i[0]|\geq n-f$.
Recall that, due to the lower bound on $n$ discussed in Section \ref{s_intro},
we assume $n \geq (d+2)f+1$.
Thus, in function $H$, $|X| \geq n-f \geq (d+1)f+1$. By Theorem \ref{thm:tverberg} above, there exists a partition $X_1, X_2, \cdots, X_{f+1}$ of $X$ into multisets $X_j$ such that $\cap_{j=1}^{f+1} \HH(X_j) \neq \emptyset$. Let us define
\begin{equation}
\label{eq:J}
J = \cap_{i=1}^{f+1} \HH(X_j)
\end{equation}
Thus, $J$ is non-empty.
In item (i.e., step) 4 of Case $t=0$ in function $H$, because $|X|\geq n-f$,
each multiset $C$ used in the computation of function $H$ is of size at least $n-2f$.
Thus, each $C$ excludes only $f$ elements of $X$, whereas there are $f+1$ multisets
in the above partition of $X$.
Therefore, each set $C$ in step 4 of item 1 of function $H$ will fully contain at least one multiset $X_j$
from the partition. Therefore, $\HH(C)$
will contain $J$. Since this holds true for all $C$'s, $J$ is contained in
the convex polytope computed by $H(\R^c_i[0],0)$.
Since $J$ is non-empty, $h_i[0]=H(\R^c_i[0],0)$ is non-empty.

\end{proof}

\section{Claim \ref{claim:notFv_not_verified}}
\label{app_s_claim:notFv_not_verified}

\begin{claim}
\label{claim:notFv_not_verified}
For $t\geq 0$,
if $b \in \overline{F_v}[t]$, then for all $i\in V-\overline{F_v}[t+1]$,
$(*,b,t)\not\in\R^c_i[t+1]$.
\end{claim}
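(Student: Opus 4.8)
The plan is to split on whether the node $i$ is fault-free or a verified faulty node, using the identity $V-\overline{F_v}[t+1]=(V-F)\cup F_v[t+1]$, and in each case trace how a tuple $(*,b,t)$ could have entered $\R^c_i[t+1]$ back to an application of Definition~\ref{d_verified} that would force $b\in F_v[t]$, contradicting the hypothesis $b\in\overline{F_v}[t]$.

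First I would handle the case $i\in V-F$. Suppose, for contradiction, that $(h,b,t)\in\R^c_i[t+1]$ for some $h$. Since line~14 sets $\R^c_i[t+1]$ to a copy of $\R_i[t+1]$, and $\R_i[t+1]$ only grows, we have $\R^c_i[t+1]\subseteq\R_i[t+1]$, so $(h,b,t)\in\R_i[t+1]$. For round index $t+1\geq 1$, the only point at which a tuple is ever inserted into $\R_i[t+1]$ is line~12, via $\Add(\cdot,\cdot,\cdot,\cdot,t+1)$, and in the Case $t\geq 1$ branch this inserts a tuple of the form $(h',j,t)$ precisely inside an execution of the handler for $\RBRecv((h',\VV'),j,t+1)$ in which $\Verify(h',\VV',j,t+1)$ returned $\TRUE$. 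Instantiating with $j=b$, node $i$ reliably received $((h,\VV),b,t+1)$ from node $b$ and subsequently added $(h,b,t)$ to $\R_i[t+1]$ at line~12 — which is exactly the condition of Definition~\ref{d_verified} for node $b$'s round $t$ execution to be verified by the fault-free node $i$. Hence $b\in F_v[t]$, contradicting $b\in\overline{F_v}[t]$.

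Next I would handle the case $i\in F_v[t+1]$, i.e., $i$ is faulty but its round $t+1$ execution is verified by some fault-free node $k$. Here $\R^c_i[t+1]$ is the set $\VV$ from (\ref{e_faulty_V}) carried in the message $((h_i[t+1],\R^c_i[t+1]),i,t+2)$ that $k$ used to verify $i$. By Claim~\ref{c_ver}(ii), eventually $\R^c_i[t+1]\subseteq\R_k[t+1]$. Thus if $(*,b,t)\in\R^c_i[t+1]$, then eventually $(*,b,t)\in\R_k[t+1]$ with $k$ fault-free, and the argument of the previous paragraph (applied to $k$ in place of $i$) again forces $b\in F_v[t]$, a contradiction.

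I do not expect a genuine obstacle: the statement is essentially a careful unpacking of Definition~\ref{d_verified}. The one point needing care is the faulty verified-node case, because for such $i$ the quantity $\R^c_i[t+1]$ is only a ``virtual'' variable defined via the message $i$ broadcasts in round $t+2$, so one cannot reason directly about $i$'s local execution and must instead transport membership of $(*,b,t)$ into the set $\R_k[t+1]$ of an actual fault-free verifier $k$ (via Claim~\ref{c_ver}(ii)) before invoking Definition~\ref{d_verified}.
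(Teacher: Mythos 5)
Your proposal is correct and follows essentially the same route as the paper's own proof: the same case split via $V-\overline{F_v}[t+1]=(V-F)\cup F_v[t+1]$, with the fault-free case resolved by unpacking Definition~\ref{d_verified} (the paper states this directly rather than by contradiction, but the content is identical), and the verified-faulty case handled by transporting $(*,b,t)$ into the fault-free verifier's set $\R_k[t+1]$ via Claim~\ref{c_ver}(ii) before invoking the definition. No gaps.
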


\begin{proof}
Consider faulty node $b\in \overline{F_v}[t]$.
Note that $V-\overline{F_v}[t+1]=(V-F)\cup F_v[t+1]$.

\begin{itemize}
\item Consider a fault-free node $i\in V-F$.
Since $b\in \overline{F_v}[t]$, node $b$'s round $t$ execution is {\em not} verified
by {\em any} fault-free node. Therefore, by Definition \ref{d_verified},
for fault-free node $i\in V-F$, {\bf at all times},
$(*,b,t)\not\in \R_i[t+1]$.
Therefore, by line 14, $(*,b,t)\not\in \R^c_i[t+1]$.

\item
Consider a faulty node $i\in F_v[t+1]$.
In this case,
the proof is by contradiction. In particular, for some $h$, assume that $(h,b,t) \in \R^c_i[t+1]$.
Since $i\in F_v[t+1]$, there exists a fault-free node $j$ that verifies the
round $t+1$ execution of node $i$. Therefore, by Claim \ref{c_ver}(ii) in
Appendix \ref{a_claims},
eventually $\R^c_i[t+1]\subseteq \R_j[t+1]$.  This observation, along with the above assumption
that $(h,b,t) \in \R^c_i[t+1]$, implies that eventually $(h,b,t)\in \R_j[t+1]$.
Since node $j$ is fault-free, Definition \ref{d_verified} implies that execution of
node $b$ in round $t$ is verified, and hence $b\in F_v[t]$. This is a contradiction.
Therefore, $(*,b,t)\not\in \R^c_i[t+1]$.
\end{itemize}
\end{proof}

\section{Proof of Lemma \ref{lemma:always_notFv_not_verified}}
\label{a_always}

\noindent{\bf Lemma \ref{lemma:always_notFv_not_verified}:}
{\em
 For $r\geq 0$,
if $b \in \overline{F_v}[r]$, then for all $\tau\geq r$,
\begin{itemize}
\item $b \in \overline{F_v}[\tau]$, and 
\item for all $i\in V-\overline{F_v}[\tau+1]$, $(*,b,\tau)\not\in\R^c_i[\tau+1]$.
\end{itemize}
}

\begin{proof}
Recall that $F_v[r]\subseteq F$, and $\overline{F_v}[r] = F-F_v[r]$.

For $r \geq 0$, consider a faulty node $b \in \overline{F_v}[r]$.
Thus, $b\in F$.

We first prove that
$b \in \overline{F_v}[\tau]$, for $\tau\geq r$.
This is trivially true for $\tau=r$. So we only need to prove this
for $\tau>r$.
The proof is by contradiction.

Suppose that
there exists $\tau>r$ such that $b \not\in \overline{F_v}[\tau]$.
Thus, $b\in F_v[\tau]$.
The definition of $F_v[\tau]$ implies that 
node $b$'s round $\tau$ execution is verified by some fault-free node $j$.
Then Claim \ref{claim:j_verified} implies that
node $b$'s round $r$ execution is verified
by node $j$.
Hence by the definition of $F_v[r]$, $b\in F_v[r]$. This is a contradiction. 
This proves that $b\in\overline{F_v}[\tau]$.

Now, since $b\in\overline{F_v}[\tau]$, by
Claim \ref{claim:notFv_not_verified},
for all $i\in V-\overline{F_v}[\tau+1]$, $(*,b,\tau)\not\in\R^c_i[\tau+1]$.

\end{proof}

\section{Claims \ref{c_single}, \ref{c_common} and \ref{claim:at_least_one_verified}}
\label{a_claim:one}

\begin{claim}
\label{c_single}
For $t \geq -1$, a fault-free node $i$ adds at most one message from node $j$ to $\R_i[t]$, even if $j$ is faulty.
\end{claim}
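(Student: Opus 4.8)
The plan is to argue by cases on the round index $t$, in each case enumerating every line of the pseudocode at which a tuple carrying node identifier $j$ can be inserted into $\R_i[t]$, and then invoking the uniqueness property of the communication primitives recorded in Section~\ref{s_ops}: a fault-free node reliably receives (via $\SVRecv$ or $\RBRecv$) at most one message of the form $(*,j,t)$ for any fixed $j$ and $t$.

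First I would handle the preliminary round, $t = -1$. Here a fault-free node $i$ inserts a tuple of the form $(x,j,-1)$ into $\R_i[-1]$ only when $(x,j,-1)$ is returned by $\SVRecv(-1)$ at line~2, or when an event $\RBRecv(x,j,-1)$ occurs and line~7 is executed; in both situations, by the abuse of terminology adopted for round $-1$, node $i$ has \emph{reliably received} $(x,j,-1)$. Since node $i$ reliably receives at most one message of the form $(*,j,-1)$, at most one such insertion takes place.

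Next I would handle $t \geq 0$. The only place a tuple bearing identifier $j$ is inserted into $\R_i[t]$ is line~12, within the event handler for $\RBRecv((h,\VV),j,t)$ (lines~9--16), and inspection of Procedure $\Add$ shows that each execution of line~12 inserts exactly one such tuple (namely $(\VV,j,-1)$ when $t=0$, and $(h,j,t-1)$ when $t\geq 1$); moreover the handler body runs at most once for each reliably received message $((h,\VV),j,t)$, whether or not $\Verify$ succeeds. Again the uniqueness property bounds the number of reliably received messages of the form $(*,j,t)$ by one, so at most one insertion occurs, which finishes the claim.

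I expect the only delicate point to be the round $-1$ case, where a tuple with identifier $j$ could in principle be delivered both inside the set returned by $\SVRecv(-1)$ and later through an $\RBRecv$ event: I would make explicit that the ``at most one message of the form $(*,j,t)$'' property — implemented by having a node ignore all further messages of that form after the first — already subsumes this overlap, so no double counting can occur. Everything else is a direct reading of the algorithm text.
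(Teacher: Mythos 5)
Your proposal is correct and uses the same key fact as the paper's proof: each fault-free node reliably receives at most one message of the form $(*,j,t)$ (whether via $\SVRecv$ or $\RBRecv$), and $\R_i[t]$ only gains tuples from reliably received messages. The paper states this in two sentences; your case analysis on $t=-1$ versus $t\geq 0$ and the explicit treatment of the $\SVRecv$/$\RBRecv$ overlap merely spell out the same argument in more detail.
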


\begin{proof}
As stated in the properties of the communication primitives in Section \ref{s_ops}, each fault-free
node $i$ will reliably receive at most one message of the form $(*,j,t)$ from node $j$ (either 
via \SVRecv~or via \RBRecv). Since $\R_i[t]$ only contains tuples corresponding to reliably received
messages, the claim follows.
\end{proof}

\begin{claim}
\label{c_common}
For $t\geq 1$, consider nodes $i,j\in V-\overline{F_v}[t]$.
If $(h,k,t)\in \R^c_i[t]$ and $(h',k,t)\in\R^c_j[t]$, then
$h=h'$.
\end{claim}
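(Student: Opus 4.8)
The plan is to read off the equality $h=h'$ from the \emph{Global Uniqueness} property of reliable broadcast, after a short case split on whether each of $i,j$ is fault-free or faulty-but-verified. Recall that $V-\overline{F_v}[t]=(V-F)\cup F_v[t]$, and that for $t\ge 1$ every tuple stored in $\R^c_i[t]$ carries round index $t-1$ and is placed there only at line~12; so the hypothesis is of the form $(h,k,t-1)\in\R^c_i[t]$ and $(h',k,t-1)\in\R^c_j[t]$, and I will prove $h=h'$ for that form.

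First I would reduce each hypothesis to a statement about the reliable receptions of a \emph{fault-free} node. Consider $(h,k,t-1)\in\R^c_i[t]$. If $i\in V-F$, then node $i$ itself added this tuple at line~12, which by the $\Add$ procedure (case $t\ge 1$) and the event handler of line~9 means node $i$ reliably received a message of the form $((h,\VV_i),k,t)$ from node $k$; set $g_i:=i$. If instead $i\in F_v[t]$, then by the definitions (\ref{e_faulty_h})--(\ref{e_faulty_V}) there is a fault-free node $g$ verifying $i$'s round-$t$ execution, and by Claim~\ref{c_ver}(ii) eventually $\R^c_i[t]\subseteq\R_g[t]$; hence $(h,k,t-1)\in\R_g[t]$, and since $g$ is fault-free it placed this tuple there at line~12, so $g$ reliably received $((h,\VV_i),k,t)$ from node $k$; set $g_i:=g$. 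Applying the identical argument to $(h',k,t-1)\in\R^c_j[t]$ yields a fault-free node $g_j$ that reliably received a message $((h',\VV_j),k,t)$ from node $k$.

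Finally I would invoke \emph{Global Uniqueness}: $g_i$ and $g_j$ are fault-free nodes that both reliably received a round-$t$ message from the (possibly faulty) node $k$, namely $((h,\VV_i),k,t)$ and $((h',\VV_j),k,t)$ respectively; hence these two messages are identical, and in particular $h=h'$. Note that this is exactly the regime Global Uniqueness covers, since it applies even when the nominal sender $k$ is faulty.

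The step I expect to be the main obstacle is the bookkeeping in the faulty-but-verified case: one has to justify that the quantities $h_i[t]$ and $\R^c_i[t]$ attached to a faulty $i$ are well defined (they are, by Global Uniqueness across all fault-free verifiers of $i$, as noted just after~(\ref{e_faulty_H})), and that membership of a tuple in $\R^c_i[t]$ for faulty $i$ genuinely forces some fault-free node to hold that same tuple in its $\R$-set --- both of which are supplied by Claim~\ref{c_ver}. Once that reduction is in place, the conclusion is an immediate application of the communication-primitive properties (and one may additionally cite Claim~\ref{c_single} to note that $k$ contributes at most one tuple to any fault-free node's $\R$-set).
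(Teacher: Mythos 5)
Your proof is correct and follows essentially the same route as the paper's: reduce each membership hypothesis to a reliably received round-$t$ message from $k$ at some fault-free node (either $i$ or $j$ itself when fault-free, or a verifying fault-free node via Claim~\ref{c_ver}(ii) when faulty-but-verified), and then conclude $h=h'$ by \emph{Global Uniqueness}. Your per-node reduction merely streamlines the paper's explicit four-way case split, and your observation that the stored tuples actually carry round index $t-1$ is a harmless fix of the same loose indexing the paper itself uses.
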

\begin{proof}
We consider four cases:
\begin{itemize}
\item $i,j\in V-F$:
In this case, due to {\em Global Uniqueness} property of the primitive,
nodes $i$ and $j$ cannot reliably receive different round $t$ messages from the same
node. Hence the claim follows.

\item $i\in V-F$ and $j\in F_v[t]$:
Suppose that fault-free node $p$ verifies round $t$ execution of node $j$. Then
by Claim \ref{c_ver}(ii), eventually $\R^c_j[t]\subseteq \R_p[t]$.
Since nodes $i$ and $p$ are both fault-free, similar to the previous
case, due to the {\em Global Uniqueness} property, nodes $i$ and $p$
cannot reliably receive distinct round $t$ messages.
Thus, if $(h,k,t)\in \R^c_i[t]$ and $(h',k,t)\in\R^c_j[t]\subseteq \R_p[t]$,
then $h=h'$.

\item $j\in V-F$ and $i\in F_v[t]$: This case is similar to the previous case.
\item $i,j\in F_v[t]$:
In this case, there exist fault-free nodes $k_i$ and $k_j$
that verify round $t$ execution of nodes $i$ and $j$, respectively.
Thus, by Claim \ref{c_ver}(ii), eventually $(h,i,t)\in \R^c_i[t]\subseteq \R_{k_i}[t]$ and $(h',i,t)\in\R^c_j[t]\subseteq \R_{k_j}[t]$.
Since $k_i,k_j$ are fault-free, {\em Global Uniqueness} implies that $h=h'$.
\end{itemize}
\end{proof}

\begin{claim}
\label{claim:at_least_one_verified}
For $t\geq 1$,
consider nodes $i, j \in V - \overline{F_v}[t]$. There exists a fault-free node $g\in V-F$ such that $(h_g[t-1],g,t-1) \in \R^c_i[t]\cap\R^c_j[t]$.
\end{claim}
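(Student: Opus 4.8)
The plan is to show that the committed sets $\R^c_i[t]$ and $\R^c_j[t]$, each of size at least $n-f$, together with the bound $n \geq (d+2)f+1 > 3f$, must contain a tuple from a common fault-free node, and then to argue that the polytope component of that tuple is forced to be $h_g[t-1]$. First I would recall, from the discussion following Definition \ref{d_verified} and from Claim \ref{c_ver}(i), that for any $i \in V - \overline{F_v}[t]$ (whether $i$ is fault-free or $i \in F_v[t]$), the procedure \Verify guarantees $|\R^c_i[t]| \geq n-f$. Moreover, by the reasoning already used in the proof of Theorem \ref{t_M}, every tuple $(h,k,t-1) \in \R^c_i[t]$ has $k \in V - \overline{F_v}[t-1]$; but for the present claim I only need the weaker fact that each such $k$ is a distinct node identifier, which holds because \Add inserts one tuple per round-$t$ message and, by Claim \ref{c_single} (applied to fault-free verifiers when $i$ is faulty, via Claim \ref{c_ver}(ii)), at most one round-$t$ message per sender is ever accepted.

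Next I would do the counting. Let $S_i = \{\,k \mid (*,k,t-1) \in \R^c_i[t]\,\}$ and $S_j = \{\,k \mid (*,k,t-1) \in \R^c_j[t]\,\}$; by the previous paragraph $|S_i| \geq n-f$ and $|S_j| \geq n-f$. Then $|S_i \cap S_j| \geq |S_i| + |S_j| - n \geq n - 2f$. Since $n \geq (d+2)f+1 \geq 3f+1$ (using $d \geq 1$), we get $|S_i \cap S_j| \geq n - 2f \geq f+1 > 0$, and in fact at least $f+1$ common senders, so at least one of them, call it $g$, is fault-free. Thus there exist tuples $(h,g,t-1) \in \R^c_i[t]$ and $(h',g,t-1) \in \R^c_j[t]$.

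It remains to identify $h$ and $h'$ with $h_g[t-1]$. Here I would invoke Claim \ref{c_common}: since $i,j \in V - \overline{F_v}[t]$ and both hold a round-$(t-1)$-labeled tuple from the same node $g$, we get $h = h'$. Finally, because $g$ is fault-free, any message of the form $((h,\VV),g,t)$ reliably received by a fault-free node must, by \textbf{Fault-Free Integrity}, have actually been broadcast by $g$ at line 8 of round $t$, i.e.\ $h = h_g[t-1]$; and when $i$ (resp.\ $j$) is faulty, the verifying fault-free node $p$ from Claim \ref{c_ver}(ii) reliably receives exactly this tuple with $\R^c_i[t] \subseteq \R_p[t]$, so the same conclusion transfers. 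Hence $(h_g[t-1],g,t-1) \in \R^c_i[t] \cap \R^c_j[t]$, as claimed.

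The main obstacle I anticipate is the bookkeeping in the last step: carefully justifying that the polytope recorded for a fault-free sender $g$ equals $h_g[t-1]$ uniformly across the four cases ($i,j$ fault-free or faulty), which requires chaining \textbf{Fault-Free Integrity}, \textbf{Global Uniqueness}, and Claim \ref{c_ver}(ii) rather than any new argument — but it is routine given the machinery already established. The counting step itself is immediate once the size bound $|\R^c_i[t]| \geq n-f$ and the distinctness of senders are in hand.
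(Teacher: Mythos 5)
Your proposal is correct and follows essentially the same route as the paper's proof: lower-bound $|\R^c_i[t]|$ and $|\R^c_j[t]|$ by $n-f$, count at least $n-2f\geq f+1$ common senders using Claims \ref{c_single} and \ref{c_common}, and conclude by pigeonhole that one of them is fault-free, with the recorded polytope equal to $h_g[t-1]$ (a step the paper leaves implicit but you justify via Fault-Free Integrity and Claim \ref{c_ver}(ii)). The only tiny slip is attributional: for a fault-free node $i$ the bound $|\R^c_i[t]|\geq n-f$ comes from the condition in Procedure \Proceed (line 13--14), not from \Verify, which is how the paper states it.
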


\begin{proof}
For any fault-free node, say $p$, due to the conditions checked in Procedure \Proceed,
$|\R^c_p[t]|\geq n-f$.
For a node $k\in F_v[t]$, recall that $h_k[t]$ and $\R^c_k[t]$ are defined in
(\ref{e_faulty_h}) and (\ref{e_faulty_V}).
Thus, by Definition \ref{d_verified}, there exists some fault-free node, say $q$, that reliably receives
message $((h_k[t],\R^c_k[t]),k,t+1)$ from node $k$ in round $t+1$, and after performing checks
in Procedure \Verify, adds $(h_k[t],k,t)$ to $\R^c_q[t+1]$. The checks in Procedure \Verify,  performed by 
fault-free node $q$, ensure that $|\R^c_k[t]|\geq n-f$.

Above argument implies that for the nodes $i,j\in V-\overline{F_v}[t]$,
$\R^c_i[t]$ and $\R^c_j[t]$ both contain at least $n-f$ messages.
Therefore, by Claims \ref{c_single} and \ref{c_common}, there will be at least $n-2f \geq df+1\geq f+1$ elements in $\R^c_i[t]\cap\R^c_j[t]$.
Since $f$ is the upper bound on the number of faulty nodes,
at least one element in $\R^c_i[t]\cap\R^c_j[t]$ corresponds to
a fault-free node, say node $g\in V-F$. That is, there exists $g\in V-F$ such that
$(h_g[t-1],g,t-1) \in \R^c_i[t] \cap \R^c_j[t]$.

\end{proof}

\section{Proof of Lemma \ref{lemma:transition_matrix}}
\label{a_t_matrix}

\noindent{\bf Lemma 
\ref{lemma:transition_matrix}:}
{\em
 For $t\geq 1$,
transition matrix $\bfM[t]$ constructed using the above procedure satisfies the following conditions.
\begin{itemize}
\item For $i,j \in V$, there exists a fault-free node $g(i,j)$ such that $\bfM_{ig(i,j)}[t] \geq \frac{1}{n}$.

\item $\bfM[t]$ is a row stochastic matrix, and $\lambda(\bfM[t])\leq 1-\frac{1}{n}$.


\end{itemize}
}

\begin{proof}

\begin{itemize}
\item To prove the first claim in the lemma, we consider four cases for node pairs $i,j$.
\begin{list}{}{}
\item[(i)] $i,j\in V-\overline{F_v}[t]$:
 By Claim \ref{claim:at_least_one_verified},
there exists a node $g(i,j)$ such that
$(h_{g(i,j)}[t-1],g(i,j),t-1) \in \R^c_i[t]\cap \R^c_j[t]$.
By (\ref{eq:matrix_i}) in the procedure to construct $\bfM[t]$, $\matrixm_{ig(i,j)}[t]=\frac{1}{|\R^c_i[t]|}\geq \frac{1}{n}$
and 
$\matrixm_{jg(i,j)}[t]=\frac{1}{|\R^c_j[t]|}\geq\frac{1}{n}$.
\item[(ii)] $i\in\overline{F_v}[t]$ and $j\in V-\overline{F_v}[t]$:
$|\R^c_j[t]|\geq n-f$ elements of
$\matrixm_j[t]$ are equal to $\frac{1}{|\R^c_j[t]|}\geq \frac{1}{n}$.
Since $n-f\geq (d+1)f+1\geq 2f+1$, there exists a fault-free node $g(i,j)$ such that
$\matrixm_{jg(i,j)}\geq \frac{1}{n}$.
By (\ref{e_fv}), all elements of $\matrixm_i[t]$, including $\matrixm_{ig(i,j)}[t]=\frac{1}{n}$.
\item[(iii)] $j\in\overline{F_v}[t]$ and $i\in V-\overline{F_v}[t]$:
Similar to case (ii).
\item[(iv)] $i,j\in\overline{F_v}[t]$: 
By (\ref{e_fv}) in the procedure to construct $\bfM[t]$, all $n$ elements in $\matrixm_i[t]$ and $\matrixm_j[t]$
both equal $\frac{1}{n}$. Choose a fault-free node as node $g(i,j)$.
Then
$\matrixm_{ig(i,j)}[t]=
\matrixm_{ig(i,j)}[t]=\frac{1}{n}$.
\end{list}

\item Observe that, by construction, for each $i\in V$, the row vector $\bfM_i[t]$ 
is stochastic. Thus, $\bfM[t]$ is row stochastic.
Also, due to the claim proved in the previous item, and
Claim \ref{c_lambda_bound}, $\lambda(\matrixm[t])\leq 1-\frac{1}{n}<1$.

\end{itemize}
\end{proof}

\section{Proof of Lemma \ref{lemma:valid_initial_hull}}
\label{app_s_lemma:valid_initial_hull}

\noindent{\bf Lemma \ref{lemma:valid_initial_hull}:} {\em
 $h_i[0]$ for each node $i \in V - \overline{F_v}[0]$ is valid. \\
}


\begin{proof}
Recall that $V - \overline{F_v}[0] = (V-F) \cup F_v[0]$. Now, consider two cases:

\begin{itemize}
\item $i \in V - F$:
Recall that $h_i[0]$ is obtained using function $H(\R^c_i[0],0)$.
Note that the function $H(\R^c_i[0],0)$ first computes frequency counts $N(x,k)$ for each $(x,k)$, and then computes sets $Y$
and $X$ using $N(x,k)$ values.

For $X$ and $Y$ computed in $H(\R^c_i[0],0)$,
consider a value $x^*\in X$. Since $x^*\in X$, there must exist $k^*$
such that $(x^*,k^*)\in Y$.
This, in turn, implies that there must exist at least $f+1$ tuples of the
form $(\II,j,-1)\in \R^c_i[0]$ such that $(x^*,k^*,-1)\in\II$.
When $j$ is a fault-free node, $\II$ above must be equal to $\R_j^c[-1]$ due to the algorithm specification.
Since $(x^*,k^*,-1)$ appears in at least $f+1$ tuples as observed above, there exists at least
one fault-free node $j$ such that $(x^*,k^*,-1)\in\R_j^c[-1]$.
Therefore, if $k^*$ is fault-free, then $x^*$ must be the input vector at node $k^*$.

Also, by Claim \ref{c_single}, for any faulty node $b$, at most one tuple of the
form $(b,v)$ may appear in set $Y$ above. Therefore,
except for at most $f$ values in $X$ (which may correspond to faulty nodes), all the
other values in $X$ must be equal to inputs at fault-free nodes.
Therefore, at least one set $C$ used to compute {\tt temp} in step 4 in Case $t=0$ of function $H(\R^c_i[0],0)$
must contain only the inputs at fault-free nodes. 
Therefore, $h_i[0] = {\tt temp}$ is in the convex hull of the inputs at fault-free nodes.
That is, $h_i[0]$ is valid.

\item $i \in F_v[0]$:
Suppose that round $0$ execution of node $i$ is verified by a fault-free node $j$.
By Claim \ref{c_ver} in Appendix \ref{a_claims}, $h_i[0]=H(\R^c_i[0],0)$, $|\R^c_i[0]| \geq n-f$, and eventually
$\R^c_i[0]\subseteq \R_j[0]$.
Suppose that at some time $\tau$, $\R^c_i[0]\subseteq \R_j[0]$. Let $\R_j[0]$ at real time $\tau$ be
denoted as $\R^\tau_j[0]$. Then, $\R^c_i[0]\subseteq \R_j^\tau[0]$.
By an argument similar to the previous item, it should be easy to see that $H(\R_j^\tau[0],0)$ is {\em valid}.
Also, observe that if $\VV_1\subseteq \VV_2$, then $H(\VV_1,0)\subseteq H(\VV_2,0)$.
Thus, $H(\R^c_i[0],0)\subseteq H(\R_j^\tau[0],0)$, and since $H(\R_j^\tau[0],0)$ is valid,
$H(\R^c_i[0],0)$ is also valid.
Thus, $h_i[0]=H(\R^c_i[0],0)$ is valid.
\end{itemize}
\end{proof}

\comment{+++++++++++++++++++++++++++++++++++++++ old proof +++++++++++++++++++++++++++++++
\begin{proof}
 Consider two cases:

\begin{itemize}
\item $i \in \sv - F$:
By Claim \ref{c_single} and the fact that $i$ is fault-free, $\R^c_i[0]$ contains at most $f$ elements
corresponding to faulty nodes. Recall that $h_i[0]$ is obtained using function $H(\R^c_i[0],0)$. Now, we make the following observations:

\begin{itemize}
\item {\em Observation 1}: By {\em Fault-free Containment} property of the
{\em stable vector}
communication primitive, a set of at least $(n-f)$ preliminary round (roudn $-1$) messages
of the form $(*,*,-1)$, say set $Z$, where $|Z| \geq (n-f)$, is contained in $\cap_{j\in \sv-F}\R^c_j[-1]$. 
Now, for each $(\II, l, -1) \in \R^c_i[0]$ such that $l$ is fault-free,
$\II=\R_l^c[-1]$ (by Case $t=0$ of \Add).
Thus, for each fault-free $l$ such that $(\II,l,-1)\in \R^c_i[0]$,
$Z \subseteq \II$.

\item {\em Observation 2}: Since $|\R^c_i[0]| \geq n-f$ and contains at most $f$ tuples corresponding to faulty nodes, it contains at least $n-2f \geq f+1$ tuples corresponding to fault-free nodes. Together with Observation 1, this implies that for each $(x,k,-1)\in Z$, $N(x,k) \geq f+1$ in step 1 of Case $t = 0$ in function $H(\R_i^c[0],0)$.

\item {\em Observation 3}: Observation 2 and the definitions of $Y$ and $X$ in steps 2 and 3 of Case $t = 0$ in function $H$ imply that for each $(x,k,-1)\in Z$, $(x,k) \in Y$ and $x \in X$.

\item {\em Observation 4}: Since there are at most $f$ elements in $\R^c_i[0]$ corresponding to faulty nodes, there are at most $f$ such elements in $Z$.
Thus, $Y$ and $X$ contain at most $f$ values received from faulty nodes.

\end{itemize}

Now, consider the computation of {\tt temp} in step 4 of Case $t = 0$ in function $H$. Due to Observations 3 and 4, at least one set $C$ used in that step will contain only the inputs of fault-free nodes. 
Therefore, $h_i[0] = {\tt temp}$ is in the convex hull of the inputs at fault-free nodes.
That is, $h_i[0]$ is valid.

\item $i \in F_v[0]$:
Suppose that round $0$ execution of node $i$ is verified by fault-free node $j$.
By Claim \ref{c_ver} in Appendix \ref{a_claims}, $h_i[0]=H(\R^c_i[0],0)$, $|\R^c_i[0]| \geq n-f$ and eventually
$\R^c_i[0]\subseteq \R_j[0]$.
Since $j$ is fault-free, $\R_j[0]$, and hence $\R^c_i[0]$, contains at most $f$ tuples
corresponding to faulty nodes.
Consder any tuple $(\II,l,-1)\in\R^c_i[0]$, where $l$ is fault-free.
Since eventually $\R^c_i[0]\subseteq \R_j[0]$, it must be the case
that $\II=\R^c_l[-1]$ (by Case $t=0$ of \Add), and therefore, $Z\subseteq \II$.

This observation, together with the fact that $|\R^c_i[0]| \geq n-f$, implies, by an argument similar to the previous item, that
$H(\R_i[0],0)$ is valid.


\end{itemize}

\end{proof}
++++++++++++++++++++++++++++++++++++++++++ old proof above ++++++++++++++++++++
}

\section{Proof of Lemma \ref{lemma:linear_valid}}
\label{app_s_lemma:linear_valid}

The proof is straightforward, but included here for completeness.

\noindent{\bf Lemma \ref{lemma:linear_valid}:} {\em
 Suppose non-empty convex polytopes $h_1, h_2, \cdots, h_k$ are all valid. Consider $k$ constants $c_1, c_2, \cdots, c_k$ such that $0 \leq c_i \leq 1$ and $\sum_{i = 1}^k c_i = 1$.
Then the linear combination of these convex polytopes,
$H_l(h_1, h_2, \cdots, h_k; c_1, c_2, \cdots, c_k)$, is valid. \\
}

\begin{proof}

Observe that the points in $H_l(h_1,\cdots,h_k;c_1,\cdots,c_k)$ are convex
combinations of the points in $h_1,\cdots,h_k$, because $\sum_{i=1}^k c_i = 1$ and $0\leq c_i\leq 1$,
for $1\leq i\leq k$.
Let $G$ be the set of input vectors at the fault-free nodes
in $V-F$. Then, $\HH(G)$ is the convex hull of the inputs at the fault-free nodes.
Since $h_i$, $1\leq i\leq k$, is valid, each point $p\in h_i$ is in $\HH(G)$.
Since $\HH(G)$ is a convex polytope, it follows that any convex combination
of the points in $h_1,\cdots,h_k$ is also in $\HH(G)$.

\end{proof}

\section{Algebraic Manipulation in the Proof of Theorem \ref{thm:correctness}}
\label{a_hausdorff}

\begin{align}
d(p_i^*, p_j^*) &=  \sqrt{\sum_{l=1}^d (p_i^*(l) - p_j^*(l))^2} \nonumber\\
&= \sqrt{\sum_{l=1}^d \left(\sum_{k\in V-\overline{F_v}[0]} \bfM^*_{ik} p_k(l) - \sum_{k\in V-\overline{F_v}[0]} \bfM^*_{jk} p_k(l)\right)^2}\nonumber ~~~~ \text{by (\ref{pi}) and (\ref{pj}})\\
&= \sqrt{\sum_{l=1}^d \left(\sum_{k\in V-\overline{F_v}[0]} (\bfM^*_{ik}-\bfM^*_{jk}) p_k(l)\right)^2}\nonumber\\
&\leq \sqrt{\sum_{l=1}^d \left[\alpha^{2t}\left( \sum_{k\in V-\overline{F_v}[0]} \|p_k(l)\|\right)^2\right]} 
~~~~~~~ \mbox{~~~~ by (\ref{eq:delta1})}\nonumber\\
&= \alpha^t \sqrt{\sum_{l=1}^d \left(\sum_{k\in V-\overline{F_v}[0]} \|p_k(l)\|\right)^2}\label{eq:d_pipj}
\end{align} 


\comment{+++++++++++
\begin{claim}
\label{claim:h^h_inclusive}
Let $Z$ be a set of tuples of the form $(x,j,PR)$, where $x$ is a point, $j$ a node identifier, and $PR$ a round index. Suppose that $A$ is a proper input to $H^h$.
\begin{itemize}
\item Define $Z' = \{\,(x_k,k,0)\, | \,(x_k,k,PR) \in Z\}$.
\item Define $B = \{\,(\II,l,0)\,|\, Z \subseteq \II \}$.
\end{itemize}
If $|B|\geq f+1$, then then $H(Z',0) \subseteq H^h(A)$.
++++++++++++++++ make this consistent with new function $H$ +++++++++
++++++++++++++++++++++++++ check above definitions ++++++++++++
\end{claim}

+++++++++++ make use of claims 1 and 2 consistent with their planned revision ++++++++++++

\section{Proof of Claim \ref{claim:h^h_inclusive}}
\label{a_c_h^h_inclusive}

\noindent{\bf Claim \ref{claim:h^h_inclusive}:}
{\em
Let $Z$ be a set of tuples of the form $(x,j,PR)$, where $x$ is an input, $j$ a node identifier, and $PR$ a round index. Let $Z' = \{(x_k,k,0) | (x_k,k,PR) \in Z\}$. Suppose that $A$ is a proper input to $H^h$, wherein at least $f+1$ elements of $A$ contains $Z$, then $H(Z',0) \subseteq H^h(A)$.
}

\begin{proof}
Define $N_A(x_k) = |\{l | (\II_l,l,0) \in A~~\text{and}~~(x_k,k,PR) \in \II_l\}|$, and $X_A = \{(x_k, k, 0) \text{, where } N_A(x_k) \geq f+1\}$. Then we first make the following observations:

\begin{itemize}

\item {\em Observation 1}: For all $x_k$ such that $(x_k, k, PR) \in Z$, $N_A(x_k) \geq f+1$. This follows directly from the definition of $N_A(\cdot)$ and the fact that at least $f+1$ elements of $A$ contains $Z$.

\item {\em Observation 2}: For all $x_k$ such that $(x_k, k, PR) \in Z$, $(x_k, k, 0) \in X_A$. This follows directly from the definition of $X_A$ and Observation 1. 

\item {\em Observation 3}: $Z' \subset X_A$. This follows directly from Observation 2, and definitions of $Z'$ and $X_A$.

\end{itemize}

Then by Claim \ref{claim:H_inclusive}, $H(Z',0) \subseteq H(X_A,0) = H^h(A)$.
\end{proof}

\section{Claim \ref{claim:H_inclusive}}
\label{a_c_H_inclusive}

\begin{claim}
\label{claim:H_inclusive}
Suppose $A$ and $B$ are both possible candidates for the first parameter of $H(\cdot, 0)$. If $A \subseteq B$, then $H(A, 0)\subseteq H(B,0)$.
\end{claim}

\begin{proof}

Define $N_A(x_k) := |\{l | (\II_l,l,-1) \in A~~\text{and}~~(x,k,-1) \in \II_l\}|$, $Y_A := \{(x, k) \text{, where } N_A(x,k) \geq f+1\}$, and $X_A := \{x | (x,k) \in Y_A\}$. Then we first make the following observations:

\begin{itemize}

\item {\em Observation 1}: For all $x_k$ such that $(x_k, k, PR) \in Z$, $N_A(x_k) \geq f+1$. This follows directly from the definition of $N_A(\cdot)$ and the fact that at least $f+1$ elements of $A$ contains $Z$.

\item {\em Observation 2}: For all $x_k$ such that $(x_k, k, PR) \in Z$, $(x_k, k, 0) \in X_A$. This follows directly from the definition of $X_A$ and Observation 1. 

\item {\em Observation 3}: $Z' \subset X_A$. This follows directly from Observation 2, and definitions of $Z'$ and $X_A$.

\end{itemize}

Then by Claim \ref{claim:H_inclusive}, $H(Z',0) \subseteq H(X_A,0) = H^h(A)$.

Observe from the definition of function $H$, we have

\begin{itemize}
\item 

\[ X_A:=\{h~|~ (h,j,P0)\in A\}\]
and 
\[ H(A, 0) ~:= ~ \cap_{\,C_A \subset X_A, |C_A| = |X_A| - f}~~\HH(C_A).\]

\item 

\[ X_B:=\{h~|~ (h,j,P0)\in B\}\]
and 
\[ H(B, 0) := ~ \cap_{\,C_B \subset X_B, |C_B| = |X_B| - f}~~\HH(C).\]

\end{itemize}

Observe that $X_A \subseteq X_B$. Therefore, every multiset $C_A$ in the computation
of $H(A, 0)$ is contained in some multiset $C_B$ used in the computation of $H(B, 0)$
Thus, by the property of $\HH(\cdot)$, $H(A, 0)$ is contained in $H(B,0)$.

\end{proof}

+++++++++++++=}

\section{Proof of Lemma \ref{lemma:svSize}}
\label{a_l:svSize}


We first prove a claim that will be used in the proof of Lemma \ref{lemma:svSize}.
\begin{claim}
\label{claim:M*}
For $t \geq 1$, define $\bfM'[t] = \Pi_{\tau=1}^t \bfM[\tau]$. Then, for all nodes $j \in V - \overline{F_v}[t]$, and $k \in \overline{F_v}[0]$, $\bfM'_{jk}[t] = 0$.
\end{claim}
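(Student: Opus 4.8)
The plan is to prove the claim by induction on $t$, peeling off one transition matrix at a time via the ``backward'' product convention (\ref{backward}): for $t \geq 2$ we have $\bfM'[t] = \Pi_{\tau=1}^t \bfM[\tau] = \bfM[t]\,\bfM'[t-1]$, so for a row $j \in V - \overline{F_v}[t]$ and column $k \in \overline{F_v}[0]$,
\[
\bfM'_{jk}[t] = \sum_{m \in V} \bfM_{jm}[t]\,\bfM'_{mk}[t-1].
\]
First I would show every summand is zero by splitting on whether $m \in V - \overline{F_v}[t-1]$ or $m \in \overline{F_v}[t-1]$. In the first case the induction hypothesis (applied to node $m$ and column $k \in \overline{F_v}[0]$) gives $\bfM'_{mk}[t-1] = 0$. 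In the second case I may invoke Claim \ref{claim:notFv_not_verified} with round index $t-1$, node $b = m \in \overline{F_v}[t-1]$, and node $i = j \in V - \overline{F_v}[t]$, obtaining $(*,m,t-1) \notin \R^c_j[t]$; then the construction of the transition matrix (equations (\ref{eq:matrix_i}) and (\ref{eq:matrix_i-2})) forces $\bfM_{jm}[t] = 0$. Either way the summand vanishes, hence $\bfM'_{jk}[t] = 0$.

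For the base case $t = 1$ we have $\bfM'[1] = \bfM[1]$; fixing $j \in V - \overline{F_v}[1]$ and $k \in \overline{F_v}[0]$, Claim \ref{claim:notFv_not_verified} with round index $0$ gives $(*,k,0) \notin \R^c_j[1]$, so $\bfM_{jk}[1] = 0$ by the transition-matrix construction. This settles the induction basis, and the induction then runs for all $t \geq 1$.

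The argument is essentially mechanical once the two ingredients — the matrix-construction rule and Claim \ref{claim:notFv_not_verified} — are in hand, so I do not expect a serious obstacle. The point most deserving of care is checking that Claim \ref{claim:notFv_not_verified} applies to exactly the rows that matter: it is stated for all $i \in V - \overline{F_v}[t]$, i.e. fault-free nodes and verified faulty nodes alike, which is precisely the index set in the present claim, so rows $j \in \overline{F_v}[t]$ (whose entries are all $\frac{1}{n}$ by (\ref{e_fv})) never enter the discussion. I would also double-check the order in the product expansion so that $\bfM'[t] = \bfM[t]\,\bfM'[t-1]$ is written correctly under the backward convention, since the reverse order would break the reduction to the induction hypothesis.
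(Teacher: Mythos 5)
Your proof is correct and follows essentially the same induction as the paper: peel off $\bfM[t]$ via the backward product, kill the terms with $m\in\overline{F_v}[t-1]$ using the fact that $(*,m,t-1)\notin\R^c_j[t]$ (you cite Claim \ref{claim:notFv_not_verified} directly, the paper cites Lemma \ref{lemma:always_notFv_not_verified}, whose second bullet at $\tau=r$ is the same statement), and kill the rest by the induction hypothesis. The base case is likewise identical.
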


\begin{proof}
The proof is by induction on $t$.

{\em Induction Basis}: Consider the case when $t = 1$. Recall that $V - \overline{F_v}[1] = (V-F) \cup F_v[1]$. Consider
any $j\in V - \overline{F_v}[1]$, and $k \in \overline{F_v}[0]$. Then by 
Lemma \ref{lemma:always_notFv_not_verified}, $(*,k,0) \not\in \R^c_j[1]$.
Then, due to (\ref{eq:matrix_i-2}),
$\bfM_{jk}[1]=0$, and hence $\bfM'_{jk}[1]=\bfM_{jk}[1]=0$.


{\em Induction}: Consider $t \geq 2$. Assume that the claim holds true through  $t - 1$. 
Then, $\bfM'_{jk}[t-1]=0$ for all $j \in V - \overline{F_v}[t-1]$ and $k \in \overline{F_v}[0]$.
Recall that $\bfM'[t-1] = \Pi_{\tau=1}^{t-1} \bfM[\tau]$.

Now, we will prove that the claim holds true for $t$.
Consider $j \in V - \overline{F_v}[t]$
and $k\in \overline{F_v}[0]$.
Note that $\bfM'[t] = \Pi_{\tau=1}^{t} \bfM[\tau] = \bfM[t] \Pi_{\tau=1}^{t-1} \bfM[\tau] = \bfM[t] \bfM'[t-1]$.
Thus, $\bfM'_{jk}[t]$ can be non-zero only if there exists a $q \in V$ such that $\bfM_{jq}[t]$ and $\bfM'_{qk}[t-1]$
are both non-zero.

For any $q\in \overline{F_v}[t-1]$, by
Lemma (\ref{lemma:always_notFv_not_verified}), $(*,q,t-1) \not\in \R^c_j[t]$.
Then, due to (\ref{eq:matrix_i-2}), $\bfM_{jq}[t] = 0$ for all $q \in \overline{F_v}[t-1]$.
 Additionally, by the induction hypothesis,
 for all $q \in V - \overline{F_v}[t-1]$ and $k \in \overline{F_v}[0]$,
 $\bfM'_{qk}[t-1] = 0$.
Thus, these two observations together imply that there does not exist any $q \in V$ such that
$\bfM_{jq}[t]$ and $\bfM'_{qk}[t-1]$ are both non-zero.
Hence, $\bfM'_{jk}[t]=0$.
\end{proof}

~

~

\noindent{\bf Lemma \ref{lemma:svSize}:}
{\em
For all $i\in V-\overline{F_v}[t]$ and $t \geq 0$,
 $I_Z \subseteq h_i[t]$.
}

\begin{proof}
Recall that $Z$ and $I_Z$ are defined in (\ref{e_Z}) and (\ref{e_I_Z}), respectively.
We first prove that for all $i\in V-\overline{F_v}[0]$,
 $I_Z \subseteq h_i[0]$.

Recall that $V - \overline{F_v}[0] = (V-F) \cup F_v[0]$. Now, consider two cases:
\begin{itemize}
\item $i \in V-F$:

We first make the following observations for each fault-free node $i$:

\begin{itemize}
\item {\em Observation 1}: $\R^c_i[0]$ contains at least $f+1$ messages from fault-free nodes (at line 14). This is due to the {\em Fault-free Integrity} property of the primitive, and the fact that $|\R^c_i[0]| \geq n-f$ (due
to the condition checked in procedure \Proceed before $\R^c_i[0]$ is set equal to $\R_i[0]$).

\item {\em Observation 2}: $\R^c_i[0]$ contains tuples of the form $(\VV,*,-1)$. We will say that a tuple $(\VV,*,-1)\in
\R_i^c[0]$
contains $Z$ if $Z\subseteq \VV$.
Due to Observation 1, at least $f+1$ tuples in $\R^c_i[0]$ contain $Z$, because tuples
corresponding to all the fault-free nodes contain $Z$.

\item {\em Observation 3}: Observation 2 and the definition of $H$ imply that multiset $X$ defined in in step 3
of Case $t=0$ of function $H$ contains $X_Z$ defined in Section \ref{s_size}. 

\item {\em Observation 4}: Let $A$ and $B$ be sets of points in the $d$-dimensional space, where $|A|\geq n-f$, $|B| \geq n-f$ and $A \subseteq B$. Define $h_A := \cap_{\,C_A \subseteq A, |C_A| = |A| - f}~~\HH(C_A)$ and $h_B := \cap_{\,C_B \subseteq B, |C_B| = |B| - f}~~\HH(C_B)$. Then $h_A \subseteq h_B$. This observation follows directly from the fact that every multiset $C_A$ in the computation of $h_A$ is contained in some multiset $C_B$ used in the computation of $h_B$, and the property of $\HH$.
\end{itemize}

Now, consider the computation of $h_i[0]$ at line 13. By Observation 3 and Observation 4, $I_Z \subseteq H(\R^c_i[0],0) = h_i[0]$.

\item $i \in F_v[0]$:

Suppose that round 0 execution of node $i$ is verified by a fault-free node $j$.
By Claim \ref{c_ver}, eventually
$\R^c_i[0]\subseteq \R_j[0]$.
Since node $j$ is fault-free, $\R_j[0]$, and therefore, $\R^c_i[0]$, contains messages from at most $f$ faulty nodes. This together with the fact that $|\R^c_i[0]| \geq n-f$ (by Claim \ref{c_ver}), implies that $\R^c_i[0]$ contains messages from at least $f+1$ fault-free nodes. Then by this observation and the fact that $h_i[0] = H(\R^c_i[0],0)$ (by Claim \ref{c_ver}), we can show that $I_Z \subseteq h_i[0]$ using the same argument as in the previous case.

\end{itemize}
Thus, $I_Z\in h_i[0]$ for all $i\in V-\overline{F_v}[0]$. \\

\noindent
Now we make several observations for each fault-free node $i\in V-F$:
\begin{itemize}
\item As shown above, $I_Z\in h_j[0]$ for all $j\in V-\overline{F_v}[0]$. 
\item
From (\ref{e_Mstar}), for $t\geq 1$,
\[
\vectorv[t]=\bfM^* \vectorv[0]
\]
where $\vectorv_j[0]=h_j[0]$ for $j\in V-\overline{F_v}[0]$.
\item By Theorem \ref{t_M},  $\vectorv_i[t] = h_i[t]$.
\item
 Observe that $\bfM^*$ equals $\bfM'[t]$ defined in Claim \ref{claim:M*}.
Thus, due to Claim \ref{claim:M*}, $\bfM^*_{ik}=0$ for $k\in\overline{F_v}[0]$ (i.e,. $k\not\in V-\overline{F_v}[0]$).
\item

$\bfM^*$ is the product of row stochastic matrices; therefore, $\bfM^*$ itself is also row stochastic.
Thus, for fault-free node $i$, $\vectorv_i[t]=h_i[t]$ is obtained
as the product of the $i$-th row of $\bfM^*$, namely $\bfM^*_i$, and $\vectorv[0]$:
this product yields 
a linear combination of the elements of $\vectorv[0]$, where the weights
are non-negative and add to 1 (because $\bfM^*_i$ is a stochastic row vector).

\item
From (\ref{e_r_c}), recall that $\bfM_i^*\vectorv[0]=H_l(\vectorv[0]^T~;~\bfM^*_i)$.
Function $H_l$ ignores the input polytopes for which the corresponding weight is 0.
Finally, from the previous observations, we have that when the weight in $\bfM^*[i]$ 
is non-zero, the corresponding polytope in 
$\vectorv[0]^T$ contains $I_Z$. 
Therefore, the linear combination also contains $I_Z$.
\end{itemize}
Thus,
$I_Z$ is contained in $h_i[t]=\vectorv_i[t]=\bfM^*_i\vectorv[0]$.
\end{proof}

\section{Proof of Theorem \ref{thm:optSize}}
\label{a_t:optSize}

\noindent{\bf Theorem \ref{thm:optSize}:}
{\em
The output convex polytope at fault-free node $i$ using {\em Optimal Verified Averaging} is optimal as per Definition \ref{def:optSize}.
}

\begin{proof}
Consider set $X_Z$ defined in Section \ref{s_size}.
Due to Claim \ref{c_single} in Appendix \ref{a_claims} and the fact that set $X_Z$ contains at least $(n-f)$ tuples, at least $(n-2f)$ tuples in $X_Z$ correspond to inputs at fault-free nodes.
Let $V_Z$ denote the set of fault-free nodes whose tuples appears in $X_Z$.
Let $S=V-F-V_Z$. Since $|X_Z|\geq n-f$, $|S|\leq f$. 

Now consider the following execution of any algorithm ALGO that correctly solves Byzantine convex consensus. 
Suppose that the faulty nodes in $F$ follow the algorithm correctly except choosing an incorrect input (in acceptable range for inputs). Consider the case when nodes in $V-V_Z$, including fault-free nodes in $S$, are so slow that the other fault-free nodes must terminate before receiving any messages from the nodes in $V-V_Z$. The fault-free nodes in $V_Z$ cannot determine whether the nodes in $V-V_Z$ are just slow,
or faulty (crashed).

Nodes in $V_Z$ must be able to terminate without receiving any messages from the nodes in $V-V_Z$, including fault-free nodes in $S$. Thus, the output must be in the convex
hull of inputs at the fault-free nodes whose tuples are included in $X_Z$. However, any $f$ of the nodes whose values are
in $X_Z$ may be faulty. Therefore, the output obtained by ALGO must be contained
in $I_Z$ as defined in Section \ref{s_size}. 
On the other hand, by Lemma \ref{lemma:svSize}, the output obtained using {\em Optimal Verified Averaging} contains $I_Z$.
This proves the theorem.
\end{proof}

\end{document}